\documentclass[11pt]{article}
\usepackage{a4wide}         
\usepackage{amsmath,amsfonts,amssymb,amstext,amsthm}
\usepackage{filecontents} 
\usepackage{cite}
 
\usepackage{tikz}
\usetikzlibrary{decorations.markings}
\usetikzlibrary{arrows} 
\usepackage{color}
 
\def\calM{\mathcal{M}}
\def\iso{i} 
 \def\Tabs{T_{\text{A}}}
 \newcommand{\Tabsj}[1]{T_{\text{A},{#1}}}  
 \newcommand{\ceil}[1]{\left\lceil {#1}\right\rceil}
\newcommand{\floor}[1]{\left\lfloor {#1}\right\rfloor}
 
\newcommand{\StartClique}{\ensuremath{\mathcal{S_C}}}
\newcommand{\SometimePlentyClique}{\ensuremath{\mathcal{H_C}}}
\newcommand{\PlentyClique}[1]{\ensuremath{\mathcal H_{C,#1}}}
\newcommand{\Fixation}{\ensuremath{\mathcal{F}}}
\newcommand{\Extinction}{\ensuremath{\mathcal{E}}}
\newcommand{\EmptyClique}{\ensuremath{\mathcal{E_C}}} 
\newcommand{\Quick}{\ensuremath{\mathcal{Q}}}
\newcommand{\Slow}{\ensuremath{\mathcal{S}}}
\let\epsilon=\varepsilon
\renewcommand{\deg}{d}
\newcommand{\degplus}[1]{m^+_{#1}} 
\newcommand{\degminus}[1]{m^-_{#1}}

\newcommand{\Yt}{\widetilde{Y}}  
\newcommand{\Zt}{\widetilde{Z}}

\newcommand{\Exp}{\mathrm{Exp}}

\newcommand{\Tdisc}{T_{\mathrm{d}}} 
\newcommand{\Tcont}{T_{\mathrm{c}}}
\newcommand{\tFixation}{\widetilde{\Fixation}}

\begin{filecontents*}{\jobname.bbl}

     \end{filecontents*}

\begin{filecontents*}{\jobname.bib}

@unpublished{fixrevised,
   author = {{Jamieson-Lane}, A. and {Hauert}, C.},
    title = {Fixation probabilities on superstars, revisited and revised},
     note = {ArXiv 1312.6333v2},
     year = 2014
}

@article{AS2006:fixtime,
  author    = {Tibor Antal and Istv{\'a}n Scheuring},
  title     = {Fixation of Strategies for an Evolutionary Game in Finite
               Populations},
  journal   = {Bulletin of Mathematical Biology},
  volume    = 68,
  number    = 8,
  pages     = {1923--1944},
  year      = 2006
}

@article{ARLV2001:Influence,
  author    = {Chalee Asavathiratham and Sandip Roy and
               Bernard Lesieutre and George Verghese},
  title     = {The Influence Model},
  journal   = {IEEE Control Systems},
  volume    = 21,
  number    = 6,
  pages     = {52--64},
  year      = 2001
}

@article{arrays,
 author = {Cemil Azizo\u{g}lu, M and E\u{g}ecio\u{g}lu, \"{O}mer},
 title = {The bisection width and the isoperimetric number of arrays},
 journal = {Discrete Appl. Math.},
 volume = {138},
 number = {1--2},
 year = {2004},
 pages = {3--12},
} 

@article{Ber2001:Monopolies,
  author    = {Eli Berge},
  title     = {Dynamic Monopolies of Constant Size},
  journal   = {Journal of Combinatorial Theory, Series B},
  volume    = 83,
  number    = 2,
  pages     = {191--200},
  year      = 2001
}

@article {bollobas,
    AUTHOR = {Bollob{\'a}s, B{\'e}la},
     TITLE = {The isoperimetric number of random regular graphs},
   JOURNAL = {European J. Combin.},
  FJOURNAL = {European Journal of Combinatorics},
    VOLUME = {9},
      YEAR = {1988},
    NUMBER = {3},
     PAGES = {241--244},
}

@article{BR2008:fixprob,
  author   = {M. Broom and J. Rycht{\'a}{\v{r}}},
  title    = {An analysis of the fixation probability of a mutant on
              special classes of non-directed graphs},
  journal  = {Proceedings of the Royal Society~A},
  volume   = 464,
  number   = 2098,
  pages    = {2609--2627},
  year     = 2008
}

@article{BHR2010:speed,
  author   = {M. Broom and C. Hadjicrysanthou and
              J. Rycht{\'a}{\v{r}}},
  title    = {Evolutionary Games on Graphs and the Speed of the
              Evolutionary Process},
  journal  = {Proceedings of the Royal Society~A},
  volume   = 466,
  number   = 2117,
  pages    = {1327--1346},
  year     = 2010
}

@article {Buser,
    AUTHOR = {Buser, Peter},
     TITLE = {Cubic graphs and the first eigenvalue of a {R}iemann surface},
   JOURNAL = {Math. Z.},
  FJOURNAL = {Mathematische Zeitschrift},
    VOLUME = {162},
      YEAR = {1978},
    NUMBER = {1},
     PAGES = {87--99}
}

@article{DGMRSS2013:superstars,
  author    = {Josep D{\'\i}az and Leslie Ann Goldberg and
               George B. Mertzios and David Richerby and Maria J. Serna
               and Paul G. Spirakis},
  title     = {On the Fixation Probability of Superstars},
  journal   = {Proceedings of the Royal Society~A},
  volume    = 469,
  number    = 2156,
  pages     = 20130193,
  year      = 2013
}

@book{Gin2000:GT,
  author    = {Herbert Gintis},
  title     = {Game Theory Evolving: A Problem-Centered Introduction to
               Modeling Strategic Interaction},
  publisher = {Princeton University Press},
  year      = 2000
}

@book{GS2001:Probability,
    author    = {G. R. Grimmett and D. R. Stirzaker},
    title     = {Probability and Random Processes},
    publisher = {Oxford University Press},
    edition   = {3rd},
    year      = 2001
}

@article{HV2011:fixation,
  author    = {B. Houchmandzadeh and M. Vallade},
  title     = {The Fixation Probability of a Beneficial Mutation in a
               Geographically Structured Population},
  journal   = {New Journal of Physics},
  volume    = 13,
  pages     = {073020},
  year      = 2011
}

@inproceedings{KKT2003:Influence,
  author    = {David Kempe and Jon Kleinberg and {\'E}va Tardos},
  title     = {Maximizing the Spread of Influence through a Social
               Network},
  booktitle = {Proc. 9th ACM International Conference on Knowledge
               Discovery and Data Mining},
  publisher = {ACM},
  pages     = {137--146},
  year      = 2003
}

@article{LHN,
  title   = {Evolutionary dynamics on graphs},
  author  = {Erez Lieberman and Christoph Hauert and Martin A. Nowak},
  journal = {Nature},
  volume  = 433,
  number  = {7023},
  pages   = {312--316},
  year    = 2005}

@book{Lig1999:IntSys,
  author    = {Thomas M. Liggett},
  title     = {Stochastic Interacting Systems: Contact, Voter and
               Exclusion Processes},
  publisher = {Springer},
  year      = 1999
}
 
@article{MNRS2013:suppressors,
  author    = {George B. Mertzios and Sotiris E. Nikoletseas and
               Christoforos Raptopoulos and Paul G. Spirakis},
  title     = {Natural Models for Evolution on Networks},
  journal   = {Theoretical Computer Science},
  volume    = 477,
  pages     = {76--95},
  year      = 2013
}

@inproceedings{MS2013:strongbounds,
  author    = {George B. Mertzios and Paul G. Spirakis},
  title     = {Strong Bounds for Evolution in Networks},
  booktitle = {Proc. 40th International Colloquium on Automata,
               Languages and Programming (ICALP 2013)},
  publisher = {Springer},
  series    = {LNCS},
  volume    = 7966,
  pages     = {657--668},
  year      = 2013
}

@book{MU,
  author    = {Michael Mitzenmacher and Eli Upfal},
  title     = {Probability and Computing: Randomized Algorithms and
               Probabilistic Analysis},
  publisher = {Cambridge University Press},
  year      = 2005
} 

@article{Mohar,
  author = {Bojan Mohar},
  title = {Isoperimetric numbers of graphs},
  journal = {Journal of Combinatorial Theory, Series B},
  volume = 47,
  number = 3,
  pages = {274--291},
  year = 1989
}

@article{Moran58,
  author    = {P. A. P. Moran},
  title     = {Random Processes in Genetics},
  journal   = {Proceedings of the Cambridge Philosophical Society},
  volume    = {54},
  number    = {1},
  year      = {1958},
  pages     = {60--71},
}

@article{MR2010:submod,
  author     = {Elchanan Mossel and S{\'e}bastien Roch},
  title      = {Submodularity of Influence in Social Networks: From
                Local to Global},
  journal    = {SIAM J. Comput.},
  volume     = 39,
  number     = 6,
  pages      = {2176--2188},
  year       = 2010
}

@book{Norris,
  author    = {James R. Norris},
  title     = {Markov Chains},
  publisher = {Cambridge University Press},
  year      = {1998}
}

@article{NZ2010:Fixtime,
  author    = {Pu-Yan Nie and Pei-Ai Zhang},
  title     = {Fixation time for evolutionary graphs},
  journal   = {International Journal of Modern Physics B},
  volume    = 24,
  number    = 27,
  pages     = {5285--5293},
  year      = 2010
}

@article{OurSODA,
  author    = {Josep D\'{\i}az and
               Leslie Ann Goldberg and
               George B. Mertzios and
               David Richerby and
               Maria J. Serna and
               Paul G. Spirakis},
  title     = {Approximating fixation probabilities in the generalized
               {M}oran process},
  journal   = {Algorithmica},
  year      = {to appear}
}

@inproceedings{SR2011:fastfix,
  author    = {Paulo Shakarian and Patrick Roos},
  title     = {Fast and deterministic computation of fixation
               probability in evolutionary graphs},
  booktitle = {Proc. 6th International Conference on Computational
               Intelligence and Bioinformatics},
  publisher = {ACTA Press},
  pages     = {753--012},
  year      = 2011
}

@article{SRJ2012:review,
  author    = {Paulo Shakarian and Patrick Roos and Anthony Johnson},
  title     = {A review of evolutionary graph theory with applications
               to game theory},
  journal   = {Biosystems},
  volume    = 107,
  pages     = {66--80},
  year      = 2012
}

@article{TFSN2004:game,
  author    = {Christine Taylor and Drew Fudenberg and Akira Sasaki
               and Martin A. Nowak},
  title     = {Evolutionary Game Dynamics in Finite Populations},
  journal   = {Bulletin of Mathematical Biology},
  volume    = 66,
  number    = 6,
  pages     = {1621--1644},
  year      = 2004
}
\end{filecontents*}  

\newtheorem{theorem}{Theorem} 
\newtheorem{lemma}[theorem]{Lemma}
\newtheorem{corollary}[theorem]{Corollary}

\theoremstyle{definition}

\newtheorem{observation}[theorem]{Observation}   
\newtheorem{proposition}[theorem]{Proposition}

\makeatletter
\newtheorem*{rep@theorem}{\rep@title}
\newcommand{\newreptheorem}[2]{%
\newenvironment{rep#1}[1]{%
 \def\rep@title{#2 \ref{##1}}%
 \begin{rep@theorem}}%
 {\end{rep@theorem}}}
 \newtheorem*{rep@corollary}{\rep@title}
\newcommand{\newrepcorollary}[2]{%
\newenvironment{rep#1}[1]{%
 \def\rep@title{#2 \ref{##1}}%
 \begin{rep@corollary}}%
 {\end{rep@corollary}}}
\makeatother

\newreptheorem{theorem}{Theorem}
\newreptheorem{lemma}{Lemma}
\newreptheorem{corollary}{Corollary}

\title{Absorption Time of the Moran Process\thanks{
The research leading to these results has received funding from the European Research Council under the European Union's Seventh Framework Programme (FP7/2007--2013) ERC grant agreement no.\ 334828. The paper 
reflects only the authors' views and not   the views of the ERC or the European Commission. The European Union is not liable for any use that may be made of the information contained therein.}}
\author{Josep D\'{\i}az\thanks{Departament de Llenguatges i Sistemes
    Inform{\`a}tics, Universitat Polit{\`e}cnica de Catalunya, Spain.}
  \and Leslie Ann Goldberg\thanks{Department of Computer Science,
    University of Oxford, UK.}  \and David Richerby\footnotemark[3]
  \and Maria Serna\footnotemark[2]}

\begin{document}

\maketitle{}

\begin{abstract}

The Moran process models the spread of  mutations
in populations on graphs.  We investigate the
absorption time of the process, which is the time taken for a mutation
introduced at a randomly chosen vertex to either spread to the whole
population, or to become extinct.  It is known that the expected
absorption time for an advantageous mutation is $O(n^4)$ on an
$n$-vertex undirected graph, which allows the behaviour of the process
on undirected graphs to be analysed using the Markov chain Monte Carlo method.
We show that this does not extend to directed graphs by  exhibiting an infinite
family of directed graphs  for which the expected absorption time is
exponential in the number of vertices. However, for regular directed
graphs, we show that the expected absorption time is $\Omega(n\log n)$
and $O(n^2)$.  We exhibit families of graphs matching these bounds
and give improved bounds for other families of graphs, based on isoperimetric number.
Our results are obtained via stochastic dominations
which we demonstrate by establishing a coupling in a related
continuous-time model. The coupling also implies several natural domination results
regarding the fixation probability of the original (discrete-time) process,
resolving a conjecture of Shakarian, Roos and Johnson.

\end{abstract}

\section{Introduction}
\label{sec:intro}

The Moran process~\cite{Moran58}, as adapted by Lieberman, Hauert and
Nowak~\cite{LHN}, is a stochastic model for the spread of genetic
mutations through populations of organisms.  Similar processes have
been used to model the spread of epidemic diseases, the behaviour of
voters, the spread of ideas in social networks, strategic interaction
in evolutionary game theory, the emergence of monopolies, 
and cascading
failures in power grids and transport networks
\cite{ARLV2001:Influence, Ber2001:Monopolies, Gin2000:GT,
  KKT2003:Influence, Lig1999:IntSys}.

In the Moran process, individuals are modelled as the vertices of a
graph and, at each step of the discrete-time process, an individual is
selected at random to reproduce.  This vertex chooses one of its neighbours
uniformly at random and replaces that neighbour with its offspring, a
copy of itself.  The probability that any given individual is chosen
to reproduce is proportional to its \emph{fitness}: individuals with
the mutation have fitness~$r>0$ and non-mutants have fitness~$1$.  The
initial state has a single mutant placed uniformly at random in the
graph, with every other vertex a non-mutant.  
On
any finite, strongly connected graph, the
process will terminate with probability~$1$, either in the state where
every vertex is a mutant (known as \emph{fixation}) or 
in the state where no vertex
is a mutant (known as \emph{extinction}).

The principal quantities of interest are the \emph{fixation
probability} (the probability of reaching fixation) and the expected
\emph{absorption time} (the expected number of steps before fixation
or extinction is reached).  In general, these depend on both the
graph topology and the mutant fitness.  In principle, they can be
computed by standard Markov chain techniques but doing so for an
$n$-vertex graph involves solving a set of $2^n$~linear equations,
which is computationally infeasible.
Fixation
probabilities have also been calculated by producing and approximately
solving a set of differential equations that model the
process~\cite{HV2011:fixation}.  These methods seem to work well in
practice but there is no known bound on the error introduced by the
conversion to differential equations and the approximations in their
solution.

When the underlying graph is undirected and the mutant fitness~$r$ is at least~$1$,
it is known how to approximate the fixation probability:
The paper~\cite{OurSODA} gave a  fully polynomial
randomised approximation scheme (FPRAS) for computing the fixation
probability of the Moran process on undirected graphs.
The approximation scheme uses
the Markov chain Monte Carlo method.
The fact that it provides a suitable approximation in polynomial time
follows from the fact that  the
expected absorption time on an $n$-vertex graph is at most
$\tfrac{r}{r-1} n^4$ for $r>1$.

\subsection{Our contributions}
 
The main contribution of this paper is to determine the extent to
which the polynomial  
bound on expected absorption time 
carries through to directed graphs. 
Throughout the paper, we assume that the mutant fitness~$r$  exceeds~$1$.

\subsubsection{Regular digraphs}

We start by considering the absorption time on a 
strongly connected $\Delta$-regular digraph
(where every vertex has in-degree~$\Delta$ and out-degree~$\Delta$).
Regularity makes some calculations straightforward 
because the detailed topology of the graph is not relevant.
We describe these first, and then discuss the more
difficult questions (where topology does play a role) and
state our results.

The following facts hold for $\Delta$-regular graphs, independent of
the topology.
\begin{itemize}
\item
It is well known~\cite{LHN} that, on any regular graph on $n$~vertices,
a single randomly placed mutant with fitness~$r$ reaches
fixation with probability
\begin{equation}
\label{eq:isothermal}
    \frac{1-r^{-1}}{1-r^{-n}}
\end{equation}
To see this, note that if there are $k$~mutants, the total fitness of the
population is $W_k = n+k(r-1)$.  The probability that the next
reproduction happens along the directed edge $(u,v)$ is
$\tfrac{r}{W_k}\tfrac1\Delta$ if $u$~is a mutant and
$\tfrac1{W_k}\tfrac1\Delta$, if it is not.  Since the graph is
$\Delta$-regular, there are   exactly as many directed edges
from mutants to non-mutants as there are from non-mutants to mutants.
It follows that the probability that the number of mutants increases
at the next step is exactly $r$~times the probability that it
decreases, regardless of which vertices are currently mutants.  Thus,
the number of mutants in the population, observed every time it
changes, forms a random walk on $\{0, \dots, n\}$ with initial
state~$1$, upward drift~$r$ and absorbing barriers at $0$ and~$n$.  It
is a standard result (e.g., \cite[Example~3.9.6]{GS2001:Probability})
that such a random walk reaches the barrier at~$n$ with the
probability given in~\eqref{eq:isothermal}.
\item
It is also well known (e.g., \cite[Example~3.9.6]{GS2001:Probability}) 
that the expected number of steps of this walk before
absorption (which may be at either $0$ or~$n$) is a function of $r$
and~$n$ that tends to $n(1+\tfrac1r)$ in the limit as $n\to\infty$,
independent of the graph structure beyond regularity.
  However,
the number of steps taken by the random walk (often referred to as the
``active steps'' of the Moran process) is not the same as the original
process's absorption time, because the absorption time includes many steps at which
the number of mutants does not change, either because a mutant
reproduces to a mutant or 
because a non-mutant reproduces to a non-mutant.
\end{itemize}

In Section~\ref{sec:regular}, we show that the expected absorption
time of the Moran process is polynomial for regular digraphs.  In
contrast to the number of active steps, the absorption time does
depend on the detailed structure of the graph.
We prove the following  upper and lower bounds, where $H_n$ denotes the $n$'th
harmonic number, which is $\Theta(\log n)$.

\newcommand{\THMregbounds}{
The expected absorption time of the Moran process on a strongly connected $\Delta$-regular
$n$-vertex digraph $G$ is at least $\left(\frac{r-1}{r^2}\right)n H_{n-1}$ and at most
$n^2 \Delta$.}
\begin{theorem}\label{thm:regbounds}\THMregbounds{}
\end{theorem}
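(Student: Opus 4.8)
The plan is to write the expected absorption time as a sum, over all mutant configurations $S$, of the expected time the process spends in $S$, and to read off the two halves of the theorem from upper and lower bounds on that per-configuration time. Fix a configuration $S$ with $1 \le k := |S| \le n-1$, let $\partial(S)$ be the number of edges from $S$ to $V\setminus S$ (equivalently, by $\Delta$-regularity, the number of edges from $V\setminus S$ to $S$; strong connectivity gives $1 \le \partial(S) \le \Delta k$), and write $W_k = n + k(r-1)$ for the total fitness. As in the derivation of~\eqref{eq:isothermal}, from $S$ the next step changes the number of mutants with probability
\[
  p_S \;=\; \frac{(r+1)\,\partial(S)}{\Delta\,W_k},
\]
and otherwise leaves $S$ unchanged. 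By the Markov property each sojourn in $S$ lasts a geometric number of steps with mean $1/p_S$, independently of the rest of the trajectory, so Wald's identity gives that the expected total time spent in $S$ equals $\mathbb{E}[Z_S]/p_S$, where $Z_S$ counts visits to $S$; hence $\mathbb{E}[T] = \sum_{k=1}^{n-1}\sum_{|S|=k}\mathbb{E}[Z_S]/p_S$. Here $v_k := \sum_{|S|=k}\mathbb{E}[Z_S]$ is simply the expected number of visits to state $k$ of the mutant-count random walk of the introduction (start at~$1$; up with probability $\tfrac r{r+1}$, down with probability $\tfrac1{r+1}$), which does not depend on the topology of $G$.

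\textbf{Upper bound.} Using $\partial(S)\ge1$ and $W_k\le W_n=nr$ gives $1/p_S\le\tfrac{\Delta nr}{r+1}$ for every transient $S$, so $\mathbb{E}[T]\le\tfrac{\Delta nr}{r+1}\sum_{k=1}^{n-1}v_k=\tfrac{\Delta nr}{r+1}D$, where $D$ is the expected absorption time of the count walk. By~\cite[Example~3.9.6]{GS2001:Probability}, $D=\tfrac{r+1}{r-1}\bigl(\tfrac{n(1-r^{-1})}{1-r^{-n}}-1\bigr)$, and a short calculation reduces the bound $D\le n(1+\tfrac1r)$ to $r^{-n}\bigl(n(1-r^{-1})+1\bigr)\le1$; the latter holds since its left-hand side, as a function of $x=r^{-1}\in(0,1)$, is increasing with value $1$ at $x=1$. (The introduction records only the limiting value of $D$, so this finite-$n$ bound is the one point here needing a separate check.) Substituting $D\le\tfrac{n(r+1)}{r}$ gives $\mathbb{E}[T]\le n^2\Delta$.

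\textbf{Lower bound.} The idea is to keep the estimate sharp on the levels where most of the time is spent. From $\partial(S)\le\Delta k$ and $W_k\ge n$ we get $1/p_S\ge\tfrac{W_k}{(r+1)k}\ge\tfrac{n}{(r+1)k}$, so the expected time spent with exactly $k$ mutants is at least $\tfrac{n}{(r+1)k}\,v_k$. To bound $v_k$ below, note that the count walk makes $\tfrac r{r+1}v_k$ transitions from $k$ to $k+1$ in expectation (each visit to $k$ is followed by such a transition with probability $\tfrac r{r+1}$), and it makes at least one of them whenever fixation occurs, since then the walk travels from $1\le k$ up to $n\ge k+1$; hence $\tfrac r{r+1}v_k\ge\varphi$, where $\varphi=\tfrac{1-r^{-1}}{1-r^{-n}}>1-r^{-1}$ is the fixation probability~\eqref{eq:isothermal}, so $v_k>\tfrac{r+1}{r}(1-r^{-1})=\tfrac{r^2-1}{r^2}$. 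Therefore the expected time spent with $k$ mutants exceeds $\tfrac{n}{(r+1)k}\cdot\tfrac{r^2-1}{r^2}=\tfrac{(r-1)n}{r^2k}$, and summing over $k=1,\dots,n-1$ yields $\mathbb{E}[T]>\tfrac{r-1}{r^2}\,nH_{n-1}$.

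\textbf{The main obstacle.} I expect the delicate point to be the constant $\tfrac{r-1}{r^2}$ in the lower bound: the crude estimate $v_k\ge\varphi$ only gives $\tfrac{r-1}{r(r+1)}$, and recovering the extra factor $\tfrac{r+1}{r}$ forces one to use that the upward drift causes each level to be visited that many times in expectation — precisely the content of the up-crossing identity $\tfrac r{r+1}v_k\ge\varphi$. The rest is bookkeeping, together with the elementary finite-$n$ estimate on $D$ needed for the upper bound.
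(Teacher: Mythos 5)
Your proof is correct and shares the paper's overall architecture: decompose the absorption time into geometric sojourns in individual configurations, use regularity to project the jump chain onto the biased $\pm 1$ walk on $\{0,\dots,n\}$, bound $1/p_S$ via $1\le\partial(S)\le\Delta k$ and $n\le W_k\le rn$, and combine this with per-level bounds on the expected visit counts $v_k$ via Wald's identity. Where you genuinely diverge is in how you obtain the bounds $1-r^{-2}\le v_k\le 1+r^{-1}$ (the paper's Corollary~\ref{ourcor}): the paper augments the count walk with a reset state and computes $E[\gamma_k^{n+1}]$ exactly as an invariant measure (Lemma~\ref{lem:calc}, via \cite[Theorem 1.7.6]{Norris}), whereas you get the lower bound from the up-crossing identity $\tfrac{r}{r+1}v_k = E[\#\{k\to k+1\text{ crossings}\}]\ge \Pr(\text{fixation}) = \tfrac{1-r^{-1}}{1-r^{-n}} > 1-r^{-1}$, and the upper bound from the closed-form expected absorption time of the biased walk plus the finite-$n$ check $r^{-n}\bigl(n(1-r^{-1})+1\bigr)\le 1$ (which I verified is correct, by the monotonicity argument you give). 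Both routes yield exactly the same constants --- your up-crossing bound equals the paper's exact value at the minimising level $k=n-1$ --- and your derivation is arguably more self-contained, since it avoids the invariant-measure computation; the paper's exact formula buys a little more (it recovers the fixation probability $f$ as a by-product), but none of that is needed for Theorem~\ref{thm:regbounds}. One point worth making explicit in a final write-up: the identity $E[\text{time in }S]=E[Z_S]/p_S$ rests on the observation that, conditioned on the jump chain, the sojourn lengths are independent geometrics and $Z_S$ is jump-chain measurable; this is the same justification the paper supplies for its application of Wald's equality, so it is a presentational rather than a mathematical gap.
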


In Section~\ref{sec:family}, we prove the following
theorem, which shows that 
the upper bound in Theorem~\ref{thm:regbounds} is
tight up to a constant factor (which  depends on~$\Delta$ and~$r$ but not
on~$n$).

\newcommand{\THMtight}{
Suppose that $r>1$ and $\Delta>2$.
There is an infinite family~$\mathcal{G}$ of $\Delta$-regular graphs
such that,  when the Moran process is run on an $n$-vertex graph $G\in \mathcal{G}$,
 the  
    expected absorption time  
    exceeds $\tfrac{1}{8r}(1-\tfrac{1}{r})\frac{n^2}{(\Delta-1)^2} $.
}
\begin{theorem} \label{thm:tight}\THMtight{}
\end{theorem}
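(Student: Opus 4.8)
\section*{Proof proposal for Theorem~\ref{thm:tight}}

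The plan is to take $\mathcal G$ to consist of ``thickened cycles''. For each sufficiently large $N$, let $G_N$ be obtained from the cycle~$C_N$ by replacing each vertex with a clique $K_{\Delta-1}$ and each edge with a perfect matching between the two corresponding cliques (equivalently, $G_N=C_N\,\square\,K_{\Delta-1}$). Then every vertex has $\Delta-2$ neighbours inside its own clique and one in each neighbouring clique, so $G_N$ is $\Delta$-regular on $n=N(\Delta-1)$ vertices and is connected, hence strongly connected; this uses only $\Delta>2$, so $K_{\Delta-1}$ has at least two vertices. Intuitively, a mutation must travel all the way around the cycle of $N$ cliques, and passing from one clique to the next is slow because it must happen along a single matching edge, so the process should need of order $N$ such crossings, each costing of order $N$ steps, giving $\Omega(N^2)=\Omega\big(n^2/(\Delta-1)^2\big)$.

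First I would set up a global inequality. Call a clique \emph{infected} if it contains at least one mutant, let $U_t$ be the number of uninfected cliques (so $U_0=N-1$), and let $A_t$ be the number of maximal arcs of consecutive infected cliques. On the event of fixation $U_T=0$, and since a single reproduction changes the mutant status of at most one clique, $U_t$ changes by at most $1$ per step; hence on fixation there are at least $N-1$ steps at which $U$ decreases, so $\mathbb E[\#\{\text{decrease steps}\}]\ge (N-1)\Pr[\text{fixation}]\ge (N-1)(1-\tfrac1r)$ by~\eqref{eq:isothermal}. Conversely, a decrease step requires a mutant to reproduce along a matching edge into an uninfected clique; there are at most $2A_t$ borders between infected and uninfected arcs and at most $\Delta-1$ such candidate edges per border, and each candidate edge fires with probability $\tfrac{r}{W_{k_t}}\cdot\tfrac1\Delta\le \tfrac{r}{N(\Delta-1)\Delta}$ because $W_{k_t}\ge n$. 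Thus $\Pr[U_{t+1}<U_t\mid\mathcal F_t]\le \tfrac{2rA_t}{N\Delta}$, and summing over $t<T$ gives $\mathbb E\big[\sum_{t<T}A_t\big]\ge \tfrac{(N-1)N\Delta}{2r}\,(1-\tfrac1r)$.

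The heart of the argument is then to show $\mathbb E\big[\sum_{t<T}A_t\big]\le C_0\,\mathbb E[T]$ for a constant $C_0=C_0(r,\Delta)$ — equivalently, $\mathbb E[A_t\,\mathbf 1_{T>t}]\le C_0\Pr[T>t]$ for all $t$, i.e.\ that in a time‑averaged sense the infected cliques stay clumped into $O(1)$ arcs. The only way $A_t$ can increase is a \emph{split}: an infected clique whose two neighbouring cliques are also infected loses its last mutant. This is exactly where $r>1$ is essential: inside any clique the number of mutants performs a walk with upward drift (a mutant is chosen to reproduce with probability proportional to $r$ rather than $1$), reinforced by mutants in the two adjacent cliques, so once invaded a clique seldom returns to being empty; a gambler's‑ruin estimate on each clique's mutant count, together with the observation that a split out of a given exposed clique occurs with probability at most $1/W_{k_t}\le 1/n$ per step while arcs also merge, should yield a Foster–Lyapunov supermartingale for $A_{t\wedge T}$ (a natural potential penalises both the number of arcs and the cliques with few mutants), giving the claimed bound. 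I expect this to be the main obstacle: bounding, uniformly in $N$, how much the mutant region can fragment is precisely the point where the detailed topology of $G_N$ matters, and it is natural to carry it out inside the continuous‑time coupling used elsewhere in the paper rather than directly on the discrete chain (and the analogous quantity for the plain cycle, where no split is ever possible, is the clean special case to have in mind).

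Combining the two displays gives $\mathbb E[T]\ge \tfrac1{C_0}\,\mathbb E\big[\sum_{t<T}A_t\big]\ge \tfrac{N\Delta(N-1)}{2rC_0}\,(1-\tfrac1r)$, and since $\Delta\ge 3$ and $n=N(\Delta-1)$ this exceeds $\tfrac1{8r}\big(1-\tfrac1r\big)\tfrac{n^2}{(\Delta-1)^2}$ once $N$ is large enough to absorb the constants (this fixes the threshold defining~$\mathcal G$, and the precise constant $\tfrac18$ comes out of optimising the estimates for the number of candidate matching edges and for~$A_t$). In particular $\mathcal G$ witnesses that the $O(n^2\Delta)$ bound of Theorem~\ref{thm:regbounds} cannot be improved below $\Omega(n^2)$.
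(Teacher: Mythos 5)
Your skeleton (counting steps at which the number of uninfected cliques decreases, and charging each such step to an arc boundary) is sound, but the entire lower bound rests on the clumping inequality $E\bigl[\sum_{t<T}A_t\bigr]\le C_0\,E[T]$, which you assert rather than prove. This is not a technical detail: without it the only available bound is the trivial $A_t\le N/2$, which turns your final display into $E[T]=\Omega(N\Delta)=\Omega(n)$, no better than the lower bound already in Theorem~\ref{thm:regbounds}. The justification you sketch (upward drift of the mutant count inside each clique, gambler's ruin, a Foster--Lyapunov potential penalising arcs and sparse cliques) is exactly the hard part and is not carried out; it is genuinely delicate because a clique becomes \emph{interior} at the very moment one of its mutants crosses a matching edge, and at that moment it may contain only one mutant, so the probability that it subsequently empties (creating a split) is not obviously small enough to keep $E[A_t\,\mathbf{1}_{T>t}]/\Pr[T>t]$ bounded uniformly over the $\Omega(N^2)$ steps you need. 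Moreover the constant $\tfrac{1}{8r}\bigl(1-\tfrac1r\bigr)n^2/(\Delta-1)^2$ in the statement depends on the unspecified $C_0$, so even granting the clumping claim the quantitative bound is not derived.

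The paper avoids fragmentation altogether by a different construction and a monotone domination. Its graph $G_{\ell,\Delta}$ hangs a gadget $H_\Delta$ off each vertex $x_i$ of a genuine cycle, so each $x_i$ is a cut vertex and infection can pass from one gadget to the next only through the $x_i$'s. Passing to a modified process $\Zt$ in which cycle mutants are never overwritten (justified by restarting the coupling of Lemma~\ref{lemma:couple}), the set of mutant cycle vertices is a single \emph{non-decreasing} arc that grows only at its two endpoints, at rate at most $2r/\Delta$ in continuous time; a Poisson tail bound then shows the arc cannot wrap around within $\ell^2/r$ reproductions, and combining with the $1-\tfrac1r$ bound on extinction gives the theorem. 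If you wish to keep $G_N=C_N\,\square\,K_{\Delta-1}$, the same route works for your family and is much shorter than what you propose: forbid the first mutant of each clique from being overwritten; then the set of infected cliques is a single non-decreasing arc, it can only extend at its two ends, and each extension requires a mutant in a frontier clique (at most $\Delta-1$ of them, each reproducing at rate at most $r$) to pick its unique matching neighbour (probability $1/\Delta$), so the arc grows at rate at most $2r(\Delta-1)/\Delta<2r$ and needs continuous time $\Omega(N)$, i.e.\ $\Omega(nN)=\Omega\bigl(n^2/(\Delta-1)\bigr)$ discrete steps, to cover the cycle. I would replace your global counting argument with that domination.
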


The digraphs in
the family~$\mathcal{G}$ are symmetric, so can be viewed as undirected graphs.
The upper bound on the expected absorption time in
Theorem~\ref{thm:regbounds} can be improved for certain classes of
regular undirected graphs using 
the notion of the isoperimetric number $\iso(G)$ of a graph~$G$,
which is defined 
in Section~\ref{sec:iso}.
There, we prove the following theorem.

\newcommand{\THMisoperimetric}{
The expected absorption time of the Moran process on a connected $\Delta$-regular
$n$-vertex undirected graph~$G$ is
at most 
$ 2\Delta n  H_n /\iso(G)$. 
}
\begin{theorem}
\label{thm:isoperimetric}
\THMisoperimetric{}
\end{theorem}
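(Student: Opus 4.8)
The plan is to split the absorption time $T$ according to \emph{sojourns}. Call a step \emph{active} if it changes the mutant set (it fails to be active precisely when a mutant reproduces onto a mutant or a non-mutant onto a non-mutant), so between consecutive active steps the mutant set is constant. Let $S^{(0)},S^{(1)},\dots$ be the successive mutant sets, so that $(|S^{(j)}|)_j$ is exactly the ``active-steps'' random walk on $\{0,\dots,n\}$ of the discussion above: upward drift $r$, started at $1$, absorbed at $0$ and $n$. Writing $p_S$ for the probability that a step taken from a configuration $S$ with $0<|S|<n$ is active, and $U_S=\#\{j:S^{(j)}=S\}$, the number of steps spent on each visit to $S$ is geometric with mean $1/p_S$, and these waiting times are conditionally independent given the sequence $(S^{(j)})_j$; conditioning on that sequence and using linearity of expectation, I would first establish
\[
  \mathbb E[T]\;=\;\sum_{S:\,0<|S|<n}\frac{\mathbb E[U_S]}{p_S}.
\]
It then suffices to bound $1/p_S$ and, for each level $k$, the sum $\sum_{|S|=k}\mathbb E[U_S]$, which equals the expected number of visits of the active-steps walk to level $k$.

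For $1/p_S$: when $|S|=k$ the population has total fitness $W_k=n+k(r-1)$, and a step is active iff the chosen reproduction runs along one of the $|\partial S|$ directed edges whose tail is a mutant in $S$ and head outside, or one of the $|\partial S|$ directed edges whose tail is a non-mutant outside $S$ and head in $S$; since each reproduction along a fixed directed edge $(u,v)$ has probability $\tfrac{\mathrm{fit}(u)}{W_k}\cdot\tfrac1\Delta$, $\Delta$-regularity gives $p_S=(r+1)|\partial S|/(W_k\Delta)$. Applying the definition of $\iso(G)$ to whichever of $S$, $V\setminus S$ has at most $n/2$ vertices yields $|\partial S|\ge\iso(G)\min(k,n-k)$, and since $W_k<rn$ (using $k<n$ and $r>1$) this gives $1/p_S<rn\Delta\big/\big((r+1)\,\iso(G)\,\min(k,n-k)\big)$.

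For the visit counts, write $p=r/(r+1)$, $q=1/(r+1)$, $\rho=q/p=1/r$ for the active-steps walk. By a standard renewal identity, the expected number of visits to $k$ starting from $1$ equals $P_1(\text{reach }k)/(1-r_k)$, where $r_k$ is the probability of returning to $k$ before absorption. I would evaluate the gambler's-ruin probabilities: $P_1(\text{reach }k)=(1-\rho)/(1-\rho^k)$, and, splitting on the first step out of $k$, $1-r_k=p(1-\alpha_k)+q(1-\beta_k)$ with $1-\alpha_k=(1-\rho)/(1-\rho^{n-k})\ge1-\rho$ and $1-\beta_k=\rho^{k-1}(1-\rho)/(1-\rho^k)$. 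The key cancellation $p\rho^k=q\rho^{k-1}$ gives $p(1-\rho^k)+q\rho^{k-1}=p$, hence $1-r_k\ge(1-\rho)\big(p+q\rho^{k-1}/(1-\rho^k)\big)=p(1-\rho)/(1-\rho^k)$, and therefore $\sum_{|S|=k}\mathbb E[U_S]=P_1(\text{reach }k)/(1-r_k)\le 1/p=(r+1)/r$, uniformly in $k$ and $n$.

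Combining the three estimates,
\[
  \mathbb E[T]\;<\;\sum_{k=1}^{n-1}\frac{r+1}{r}\cdot\frac{rn\Delta}{(r+1)\,\iso(G)\,\min(k,n-k)}\;=\;\frac{n\Delta}{\iso(G)}\sum_{k=1}^{n-1}\frac1{\min(k,n-k)}\;\le\;\frac{2n\Delta H_n}{\iso(G)},
\]
since $\sum_{k=1}^{n-1}1/\min(k,n-k)\le 2H_{\lfloor n/2\rfloor}\le 2H_n$ (and $\iso(G)>0$ because $G$ is connected; the case $n=1$ is trivial). The main obstacle is the per-level visit bound: the crude estimate $r_k\le 2/(r+1)$ gives only $\sum_{|S|=k}\mathbb E[U_S]\le(r+1)/(r-1)$ and hence a bound of order $\tfrac{2r}{r-1}\,n\Delta H_n/\iso(G)$, so getting exactly the constant $2$ with no dependence on $r$ really needs the sharper value $1/p$ for the expected number of visits to each level, which is precisely what the cancellation $p\rho^k=q\rho^{k-1}$ delivers. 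One should also take a little care justifying the sojourn identity (conditioning on $(S^{(j)})_j$ and applying Wald/linearity) and noting that the direction of an active step is $\tfrac{r}{r+1}$ up versus $\tfrac1{r+1}$ down independently of $S$, which is what makes $(|S^{(j)}|)_j$ exactly the stated random walk.
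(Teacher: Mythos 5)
Your proof is correct and follows essentially the same route as the paper: decompose the absorption time into geometric sojourns over the active-step embedded walk, bound the expected number of visits to each level $k$ by $(r+1)/r$, express $1/p_S$ as $W_k\Delta/((r+1)|\partial S|)$ via regularity, and apply the isoperimetric bound $|\partial S|\ge \iso(G)\min(k,n-k)$ before summing. The only difference is cosmetic: you obtain the per-level visit bound by a gambler's-ruin/renewal computation, whereas the paper gets the same quantity as the invariant measure of an augmented chain (Lemma~\ref{lem:calc} and Corollary~\ref{ourcor}), combined with Wald's equality.
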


 Theorem~\ref{thm:isoperimetric} 
pinpoints the expected 
absorption time for $G=K_n$, up to a constant factor, since $\iso(K_n)=\lceil n/2\rceil$
\cite{Mohar} and  Theorem~\ref{thm:regbounds} gives an $\Omega(n \log n)$ lower bound.
 Theorem~\ref{thm:isoperimetric}
is worse than the upper bound of Theorem~\ref{thm:regbounds} by
a factor of $O(\log n)$ for the cycle~$C_n$
since $\iso(C_n)=2/\lfloor n/2\rfloor$ \cite{Mohar}.
However, we often get an improvement by using the isoperimetric number.
For example,  
the $\sqrt{n}$-by-$\sqrt{n}$ grid
has   $\iso(G) = \Theta(1/\sqrt{n})$ (see~\cite{arrays}),
giving an $O(n^{3/2} \log n)$ absorption time; the hypercube
has  $\iso(G)=1$ (see, e.g., \cite{Mohar}), giving
an $O(n \log^2 n)$ absorption time.  Bollob{\'a}s~\cite{bollobas} showed that,
 for every $\Delta\geq 3$ 
there is a positive number~$\eta<1$ such that, for 
almost all $\Delta$-regular
$n$-vertex undirected
graphs~$G$ (as $n$ tends to infinity),
$\iso(G)\geq (1-\eta) \Delta/2$, which gives an $O(n \log n)$ absorption time
since these graphs are connected.

\subsubsection{Slow absorption}

Theorem~\ref{thm:regbounds} shows  that regular digraphs, like undirected graphs, reach
absorption in expected polynomial time.  
In Section~\ref{sec:directed} we
show that  the same does not hold for general digraphs.  In
particular, we construct an infinite
family $\{G_{r,N}\}$ of strongly connected digraphs
indexed by a positive integer~$N$.
We then prove the following theorem.

\begin{theorem}
\label{thm:directed}
Fix $r>1$ and let $\epsilon_r = \min(r-1,\,1)$.
For any positive integer~$N$ that is sufficiently large with respect to~$r$,
the expected absorption time of the Moran process on $G_{r,N}$ is
at least 
$$  \frac{1}{16}\floor{\left(\frac{\epsilon_r}{32}\right)(2^N-1)}
 \frac{\epsilon_r}{4\ceil{r}+3}.$$
\end{theorem}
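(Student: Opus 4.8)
The plan is to build $G_{r,N}$ recursively so that it has only $O(N)$ vertices yet forces the Moran process to ``count'' up to $\Theta_r(2^N)$ before it can be absorbed. Concretely, $G_{r,N}$ will consist of a \emph{core} $C$ of $\Theta(N)$ vertices, forming a clique with only a bounded number of edges entering it from the rest of the graph, together with $N$ bounded-size \emph{stages} $D_1,\dots,D_N$ arranged in a chain off the core, plus a few ``feedback'' edges chosen so that the whole digraph is strongly connected. The core is large enough that once all of $C$ is mutant it reverts only with probability exponentially small in~$N$, so for the relevant time window extinction is effectively impossible once $C$ is captured. The stages are designed so that: (i) reaching fixation requires every $D_i$ to pass through a designated ``completed'' configuration; (ii) $D_i$ can enter its completed configuration only while $D_{i-1}$ is completed; and (iii) whenever $D_1,\dots,D_{i-1}$ are all completed there is a reproduction (a non-mutant inside $D_{i-1}$ overwriting a mutant) that ``resets'' $D_{i-1}$ and that is at least comparably likely to the reproduction that would complete $D_i$ --- which is possible even though the mutant advantage $r$ boosts the latter, because the gadget gives the ``reset'' reproduction enough competing opportunities. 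The $r$-dependent constants, in particular $\epsilon_r=\min(r-1,1)$ and $\lceil r\rceil$, enter only through bounding these two competing reproduction probabilities and the probability that a reset actually destroys progress rather than self-healing.

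The analysis then runs by induction on the number of completed stages. First I would dispose of extinction: the initial mutant, placed uniformly, lands in $C$ with probability at least a constant, and then with a further constant probability $C$ becomes entirely mutant and stays so throughout the window; hence there is a constant $\rho\ge\tfrac1{16}$ (the leading factor of the bound) with $\mathbb E[\text{absorption time}]\ge \rho\cdot\mathbb E[\text{absorption time}\mid \mathcal A]$, where $\mathcal A$ is this ``core captured'' event, and on $\mathcal A$ absorption can only be fixation. Conditioned on $\mathcal A$, let $T_i$ be the expected number of steps from capture of $C$ until $D_1,\dots,D_i$ are simultaneously completed for the first time. Using (ii) and (iii), each ``attempt'' to complete $D_i$ either succeeds or first triggers a reset of $D_{i-1}$, the latter being at least comparably likely; a reset forces re-establishing the completed configuration of $D_1,\dots,D_{i-1}$, which costs in expectation at least a constant fraction of $T_{i-1}$, so $T_i\ge T_{i-1}+\Omega_r(2^{i-1})$ (equivalently $T_i\ge c\,T_{i-1}$ for some $c>1$, and one tunes the gadget so that $c$ may be taken to be $2$). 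Unrolling the recursion gives $\mathbb E[\text{absorption time}\mid\mathcal A]\ge\floor{(\epsilon_r/32)(2^N-1)}\cdot\tfrac{\epsilon_r}{4\lceil r\rceil+3}$, the $2^N-1=\sum_{i<N}2^{i}$ coming from summing the per-stage contributions; multiplying by $\rho$ yields the theorem. Formally it is cleanest to package the whole lower bound as a stochastic domination --- exhibit an explicit ``recursive-restart'' random process that the number of non-mutant vertices dominates and whose hitting time of $0$ is $\Omega_r(2^N)$ --- mirroring the coupling technique used for the upper bounds elsewhere in the paper.

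The main obstacle is the design and verification of the stages $D_i$: one must give concrete bounded-size digraph gadgets, connect them together with the core and the feedback edges into a strongly connected whole, and then carry out the case analysis over all boundary configurations that can arise to verify (i)--(iii). The delicate points are that the ``reset'' reproduction really is at least comparably likely to the ``completion'' reproduction for \emph{every} reachable configuration (this is where the $r$-dependent factors must be extracted carefully), and that resetting $D_{i-1}$ genuinely costs $\Omega(T_{i-1})$ to repair --- the reset must cascade far enough down the chain and must not accidentally leave a shortcut that lets the lower stages recover cheaply. Getting the multiplicative constant in the recursion up to $2$, rather than merely some $c>1$, is an additional bit of tuning but is inessential to the qualitative (exponential) conclusion. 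Once the gadget and its invariants are pinned down, the inductive hitting-time estimate and the extinction-conditioning step are routine.
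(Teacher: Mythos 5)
There is a fundamental mismatch here: $G_{r,N}$ is not a graph you get to design --- it is a specific digraph already constructed in the paper (a clique $K_N$ with bidirectional edges, a directed path $P=u_1\dots u_N$, and a long directed path $Q=v_{4\lceil r\rceil N}\dots v_0$, joined by edges $(x,u_1)$ and $(v_0,x)$ for clique vertices $x$, the edge $(u_N,v_{4\lceil r\rceil N})$, and $4\lceil r\rceil$ edges from $Q$ into each $u_i$). Your proposal instead invents a different family built from recursive ``counter'' gadgets $D_1,\dots,D_N$, so even if completed it would prove a theorem about a different graph. Moreover, it is not completed: the gadgets are never exhibited, and you yourself identify the verification of properties (i)--(iii) --- that a reset is always at least as likely as a completion in \emph{every} reachable configuration, and that a reset genuinely cascades and costs $\Omega(T_{i-1})$ to repair --- as the main obstacle. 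That is precisely the content of the claimed exponential blow-up, so the proposal as it stands is a plan rather than a proof. The way the constants $\epsilon_r/32$, $1/16$ and $1/(4\lceil r\rceil+3)$ are slotted into your recursion is also reverse-engineered: in the paper, $1/(4\lceil r\rceil+3)$ is simply a lower bound on the probability that the uniformly random initial mutant lands in the clique ($N$ out of $n=1+(4\lceil r\rceil+2)N$ vertices), and $\epsilon_r/16$ comes from $\epsilon_r/8 - 2\cdot\epsilon_r/32$ after excluding fast fixation and fast extinction; these probabilities cannot appear inside a conditional expectation of a hitting time as your sketch has them.

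The paper's actual mechanism is simpler and involves no counting or recursion. Extinction is slow because, once the clique is half full (which happens within $N^3$ steps with probability at least $\epsilon_r/8$ given that the mutant starts there), the number of clique mutants dominates a random walk with upward drift roughly $r$, so by the gambler's-ruin formula (Lemma~\ref{lemma:absorb}) it survives for $2^{\Omega(N)}$ steps. Fixation is slow because no vertex of $Q$ can become a mutant before all of $P$ does, and while $Q$ is entirely non-mutant each $u_i$ receives $4\lceil r\rceil$ edges from non-mutants in $Q$; hence the mutant frontier on $P$ retreats with probability at least twice the probability that it advances, and a single excursion of this downward-drifting walk reaches $N$ with probability at most $1/(2^N-1)$. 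A union bound over $T^*(N)=\lfloor(\epsilon_r/32)(2^N-1)\rfloor$ excursions then bounds the fixation probability within the window by $\epsilon_r/32$. You would need to replace your gadget design and induction with these two drift arguments on the concrete graph to prove the stated theorem.
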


Theorem~\ref{thm:directed} shows that there is an infinite
family of strongly connected digraphs such that the
absorption time on $n$-vertex graphs in this family is exponentially
large, as a function of~$n$.  It follows that the techniques  from~\cite{OurSODA} do not give a polynomial-time algorithm for approximating
the fixation probability on digraphs.

The underlying structure of the graph $G_{r,N}$ is a large undirected clique on
$N$~vertices and a long directed path.  Each vertex of the clique
sends an edge to the first vertex of the path, and each 
vertex of the clique receives an
edge from the path's last vertex.  We refer to the first~$N$ vertices
of path as~$P$ and the remainder as~$Q$.  Each vertex of~$P$ has
out-degree~$1$ but receives $4\ceil{r}$~edges from~$Q$.  (See
Figure~\ref{fig:GrN}.)

Suppose that $N$ is sufficiently large with respect to~$r$
and consider the Moran process on~$G_{r,N}$. 
Given the relative sizes of the clique and the path,
there is a reasonable
probability (about $\tfrac{1}{4r+2}$) that the initial mutant is in
the clique.  The edges to and from the path have a negligible effect
so it is reasonably likely (probability at least $1-\tfrac1r$) that we
will then reach the state where half the clique vertices are mutants.
To reach absorption from this state, one of two things must happen.

For the process to reach extinction, the mutants already in the clique
must die out.  Because the interaction between the clique and path is
small, the number of mutants in the clique is very close to a random
walk on $\{0, \dots, N\}$ with upward drift~$r$, and the expected time
before such a walk reaches zero from $N/2$ is exponential in~$N$.

On the other hand, suppose the process reaches fixation.  In
particular, the vertices of~$P$, the first part of the path, must
become mutants.  Note that no vertex of $Q$~can become a mutant before
the last vertex of~$P$ has done so.  While all the vertices in~$Q$ are
non-mutants, the edges from that part of the path to~$P$ ensure that
each mutant in $P$ is more likely to be replaced by a non-mutant
from~$Q$ than it is to create a new mutant in~$P$.  As a result, the
number of mutants in~$P$ is bounded above by a random walk on $\{0,
\dots, N\}$ with a strictly greater probability of decreasing than
increasing.  Again, this walk is expected to take exponentially many
steps before reaching~$N$.

\subsubsection{Stochastic domination}
\label{sec:introdom}

Our main technical tool is stochastic domination.  Intuitively, one
expects that the Moran process has a higher probability of reaching
fixation when the set of mutants is~$S$ than when it is some subset
of~$S$, and that it is likely to do so in fewer steps.  It also seems obvious
that modifying the process by continuing to allow all transitions that create new
mutants but forbidding some transitions that remove mutants should
make fixation faster and more probable.  Such intuitions have
been used in proofs in the literature; it turns out that they are essentially
correct, but for rather subtle reasons.

The Moran process can be described as a Markov chain $(Y_t)_{t\geq 1}$
where $Y_t$ is the set $S\subseteq V(G)$ of mutants at the $t$'th step.
The normal method to make  the  above intuitions formal would be to 
demonstrate a stochastic domination by coupling
the Moran process $(Y_t)_{t\geq 1}$  
with another copy $(Y'_t)_{t\geq 1}$
of the process where $Y_1 \subseteq Y'_1$.
The coupling would be designed so that
$Y_1\subseteq Y'_1$
would ensure that  
$Y_t\subseteq Y'_t$ for all $t>1$.  However, a simple example shows
that such a  coupling does not always exist for the Moran process. 
Let $G$ be the undirected path with two
edges: $V(G)=\{1,2,3\}$ and $E(G)=\{(1,2), (2,1), (2,3), (3,2)\}$.
Let $(Y_t)_{t\geq 1}$ and $(Y'_t)_{t\geq 1}$ be Moran processes on~$G$
with $Y_1=\{2\}$ and $Y'_1 = \{2,3\}$.  With probability
$\tfrac{r}{2(r+2)}$, we have $Y_2 = \{1,2\}$.  The only possible value
for~$Y'_2$ that contains~$Y_2$ is $\{1,2,3\}$ but this occurs with
probability only $\tfrac{r}{2(2r+1)}$.  Therefore, any coupling
between the two processes fails because
\begin{equation*}
    \Pr(Y_2\not\subseteq Y'_2) \geq \frac{r(r-1)}{2(r+2)(2r+1)}\,,
\end{equation*}
which is strictly positive for any $r>1$.  The problem is that, when
vertex~$3$ becomes a mutant, it becomes more likely to be the next
vertex to reproduce and, correspondingly, every other vertex becomes
less likely.  This can be seen as the new mutant ``slowing down'' all
the other vertices in the graph.

To get around this problem, we consider a continuous-time version of
the process, $\Yt[t]$ ($t\geq 0$).  Given the set of mutants~$\Yt[t]$
at time~$t$, each vertex waits an amount of time before reproducing.
For each vertex, this period of time is chosen according to the
exponential distribution with parameter equal to the vertex's fitness,
independently of the other vertices.  (Thus, the parameter is $r$ if
the vertex is a mutant and $1$, otherwise.)  If the first vertex to
reproduce is~$v$ at time~$t+\tau$ then, as in the standard,
discrete-time version of the process, one of its out-neighbours~$w$ is
chosen uniformly at random, the individual at~$w$ is replaced by a
copy of the one at~$v$ and the time at which $w$ will next reproduce is
exponentially distributed with parameter given by its new fitness.
The discrete-time process is recovered by taking the sequence of
configurations each time a vertex reproduces.

In continuous time, each member of the population reproduces at a rate
given by its fitness, independently of the rest of the population
whereas, in discrete time, the population has to co-ordinate to decide
who will reproduce next.  It is still true in continuous time that
vertex~$w$ becoming a mutant makes it less likely that each vertex
$v\neq w$ will be the next to reproduce.  However, the vertices are
not slowed down as they are in discrete time: they continue to
reproduce at rates determined by their fitnesses.  This distinction
allows us to establish the following coupling lemma,
which formalises the intuitions discussed above. 

\newcommand{\LEMMAcouple}{
    Let $G=(V,E)$ be any digraph, let $Y\subseteq Y'\subseteq V(G)$ and
    $1\leq r\leq r'\!$.  Let $\Yt[t]$ and $\Yt'[t]$ ($t\geq 0$) be
    continuous-time Moran processes on~$G$ with mutant fitness $r$
    and~$r'\!$, respectively, and with $\Yt[0]=Y$ and $\Yt'[0]=Y'\!$.
    There is a coupling between the two processes such that
    $\Yt[t]\subseteq \Yt'[t]$ for all $t\geq 0$.
}
\begin{lemma}[Coupling lemma]
\label{lemma:couple}
\LEMMAcouple{}
\end{lemma}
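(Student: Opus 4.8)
The plan is to construct the coupling explicitly by a graphical (Poisson‑thinning) representation that drives both chains from the same randomness, exploiting the one feature of the continuous‑time model that fails in discrete time: each vertex reproduces at a rate that depends only on its own current type, independently of the rest of the population, so there is no ``slowing down''. The pivotal observation is that the target inclusion makes the fitness comparison automatic. Write $f_v(t)$ for the firing rate of vertex $v$ in the $\Yt$‑process (equal to $r$ if $v\in\Yt[t]$ and $1$ otherwise) and $f'_v(t)$ for its firing rate in the $\Yt'$‑process (equal to $r'$ if $v\in\Yt'[t]$ and $1$ otherwise). If $\Yt[t]\subseteq\Yt'[t]$, then for every $v$ we have $f_v(t)\le f'_v(t)$: when $v\in\Yt[t]$ then also $v\in\Yt'[t]$ and $f_v(t)=r\le r'=f'_v(t)$, and when $v\notin\Yt[t]$ then $f_v(t)=1\le f'_v(t)$ since $f'_v(t)\ge1$. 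Also $f_v(t)\le f'_v(t)\le r'$ always.

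The construction I would use: attach to each vertex $v$ an independent rate‑$r'$ Poisson clock, and to each of its points an independent uniform out‑neighbour $w$ of $v$ together with an independent uniform mark $U\in[0,1)$. Both chains read these same points. At a point of $v$'s clock occurring at time $t$ with mark $U$, the $\Yt'$‑process performs the reproduction along $(v,w)$ iff $U<f'_v(t)/r'$, and the $\Yt$‑process performs the (same‑edge) reproduction iff $U<f_v(t)/r'$; whenever a chain performs the reproduction, $w$ is reset to $v$'s current type \emph{in that chain}. A standard thinning/uniformization argument (valid here because the state space is finite and the rates are bounded, so the process does not explode and its jump times are almost surely isolated, letting us treat them one at a time) shows that each chain, on its own, is a continuous‑time Moran process with the stated mutant fitness and the stated initial state. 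Moreover, because $f_v(t)\le f'_v(t)$, every reproduction performed by the $\Yt$‑chain is also performed, along the same edge, by the $\Yt'$‑chain.

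It then remains to check that $\Yt[t]\subseteq\Yt'[t]$ is preserved at each jump, by a short case analysis on the status of the reproducing vertex $v$; write $S\subseteq S'$ for the two mutant sets just before the jump. If $v\in S$ (hence $v\in S'$), both chains, when they act, copy a mutant to $w$, so $w$ is added to whichever set acts, and the $\Yt$‑chain acting forces the $\Yt'$‑chain to act. If $v\notin S$ but $v\in S'$, the $\Yt$‑chain copies a non‑mutant to $w$ (deleting $w$ from $S$ if present) while the $\Yt'$‑chain copies a mutant to $w$ (inserting $w$ into $S'$), so the inclusion is only reinforced. If $v\notin S'$ (hence $v\notin S$), both chains copy a non‑mutant to $w$, and since the $\Yt$‑chain acts exactly when $U<1/r'$, which is precisely when the $\Yt'$‑chain acts, $w$ is deleted from both sets at once, or from neither. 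In every case the new $\Yt$‑set is contained in the new $\Yt'$‑set; since the inclusion holds at $t=0$ by hypothesis and the state is piecewise constant between jumps, it holds for all $t\ge0$.

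The main obstacle is exactly the one the paragraph preceding the statement highlights: a non‑mutant reproduction removes a mutant, so one must prevent the $\Yt'$‑chain from deleting a vertex that the $\Yt$‑chain still keeps. The construction defeats this by forcing the two chains to share identical baseline behaviour — the ``removal'' events (reproductions by a non‑mutant $v$) occur at mark $U<1/r'$ in both chains simultaneously — so the $\Yt$‑chain never deletes a vertex the $\Yt'$‑chain retains. The only delicate point in writing the argument up is the mild circularity in using $f_v(t)\le f'_v(t)$, which presupposes the very inclusion one is proving; this is handled by noting that nothing changes between consecutive jumps, so the inclusion and the rate comparison are established together by induction along the almost surely locally finite sequence of jump times.
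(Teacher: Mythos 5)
Your proof is correct and is essentially the paper's own argument: both construct a monotone coupling of the two continuous-time processes from shared per-vertex clocks, exploiting the fact that each vertex's reproduction rate depends only on its own current type, and both verify that the inclusion is preserved by the same case analysis on the status of the reproducing vertex (with the induction along the a.s.\ locally finite jump times justified by memorylessness). The only difference is cosmetic: you realise the coupling by thinning a single rate-$r'$ Poisson clock with uniform marks, whereas the paper splits the dominating process's clock at $v$ as the minimum of an $\Exp(r_{v,t})$ variable (shared with the dominated process) and an independent $\Exp(r'_{v,t}-r_{v,t})$ variable — two equivalent ways of arranging that the dominated process's reproduction events form a subset of the dominating process's events.
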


In the paper, we use the coupling lemma to establish 
stochastic dominations between 
\emph{discrete} Moran
processes. It also has consequences
concerning fixation probabilities. Recall that
``fixation'' is the state of a (discrete) Moran process in which every vertex is a mutant. 
The fixation probability
$f_{G,r}$ is the probability that this state is reached
when  the Moran process is run on a digraph~$G=(V,E)$, 
starting from a state in which 
a  single
initial mutant is placed uniformly at random.  
 For any set $S\subseteq
V$, let $f_{G,r}(S)$ be the probability of reaching fixation when
the set of vertices initially occupied by mutants is~$S$.   Thus,
$f_{G,r} = \tfrac{1}{|V|}\sum_{v\in V} f_{G,r}(\{v\})$.
Using the coupling lemma, we can prove the following theorem.

\newcommand{\THMfix}{
    For any digraph $G$, if $0<r\leq r'$ and $S\subseteq
    S'\subseteq V(G)$, then $f_{G,r}(S) \leq f_{G,r'}(S')$.
}
\begin{theorem}
\label{thm:fix}
\THMfix{}
\end{theorem}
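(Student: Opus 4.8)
The plan is to derive the theorem from the Coupling Lemma (Lemma~\ref{lemma:couple}), together with a complementation trick that covers the regime $r<1$, where that lemma does not apply. Two preliminary observations set this up. First, for any digraph~$G$, any mutant fitness $s>0$ and any initial mutant set~$W$, the probability that the discrete-time Moran process reaches the all-mutant configuration~$V$ (respectively, the empty configuration) equals the corresponding probability for the continuous-time process, because the jump chain of the continuous-time process is exactly the discrete-time chain: among independent exponential clocks, the first to ring belongs to vertex~$v$ with probability proportional to~$v$'s fitness. Keeping the notation $f_{G,s}(W)$ of the paper for the fixation probability, write $e_{G,s}(W)$ for the analogous probability of ever reaching $\emptyset$. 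Second, I record the \emph{complementation identity} $f_{G,r}(S)=e_{G,1/r}(V\setminus S)$, valid for every $r>0$: the transition probabilities of the discrete Moran process are unchanged when all fitnesses are scaled by a positive constant, so the process with fitnesses $(r,1)$ on mutants and non-mutants has the same law as the one with fitnesses $(1,1/r)$; in the latter, the complement of the mutant set evolves as the mutant set of a Moran process with mutant fitness $1/r$, and it reaches $\emptyset$ precisely when the original mutant set reaches~$V$.

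Next I extract two monotonicity facts directly from Lemma~\ref{lemma:couple}. Take $1\le a\le b$ and $U\subseteq U'$, and couple the continuous-time process with fitness~$a$ started from~$U$ and the one with fitness~$b$ started from~$U'$ so that the first is contained in the second at every time. Since both stay inside~$V$, whenever the first reaches~$V$ so does the second; this is \textbf{Fact~1}: $f_{G,a}(U)\le f_{G,b}(U')$. Since the only subset of $\emptyset$ is $\emptyset$ itself, whenever the second reaches~$\emptyset$ so does the first; this is \textbf{Fact~2}: $e_{G,b}(U')\le e_{G,a}(U)$.

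Now assemble the theorem for $0<r\le r'$ and $S\subseteq S'$ by cases on the position of~$1$. If $r\ge 1$, Fact~1 (with $a=r$, $b=r'$, $U=S$, $U'=S'$) gives the claim at once. If $r'\le 1$, then $1\le 1/r'\le 1/r$ and $V\setminus S'\subseteq V\setminus S$, so the complementation identity and Fact~2 (with $a=1/r'$, $b=1/r$, $U=V\setminus S'$, $U'=V\setminus S$) give $f_{G,r}(S)=e_{G,1/r}(V\setminus S)\le e_{G,1/r'}(V\setminus S')=f_{G,r'}(S')$. Finally, if $r<1<r'$, interpolate through fitness~$1$: the complementation identity and Fact~2 (with $a=1$, $b=1/r$, $U=U'=V\setminus S$) give $f_{G,r}(S)\le f_{G,1}(S)$, and Fact~1 (with $a=1$, $b=r'$, $U=S$, $U'=S'$) gives $f_{G,1}(S)\le f_{G,r'}(S')$.

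The main obstacle is the sub-unit regime $r<1$: Lemma~\ref{lemma:couple} is only stated for $r\ge 1$, and, as the example in Section~\ref{sec:introdom} shows, no subset coupling of two discrete Moran processes exists in general, so that case cannot be attacked directly. The complementation identity is what circumvents this, trading a slow mutant for a fast one at the price of exchanging fixation for extinction and the initial set for its complement; once it is in place, the rest is just careful bookkeeping of which coupled process plays the larger role, and hence of the direction of each inequality.
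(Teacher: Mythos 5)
Your proof is correct and follows essentially the same route as the paper's: the coupling lemma handles $1\leq r\leq r'$, the fitness-rescaling/complementation trick reduces $r\leq r'\leq 1$ to that case, and the mixed case $r\leq 1\leq r'$ interpolates through fitness~$1$. The only cosmetic difference is that you phrase the complemented quantity as an extinction probability $e_{G,1/r}(V\setminus S)$ and derive its monotonicity directly from the coupling, whereas the paper writes it as $1-f_{G,1/r}(V\setminus S)$ and reuses the first case.
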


This theorem has two immediate corollaries.  The first was conjectured
by Shakarian, Roos and 
Johnson~\cite[Conjecture 2.1]{SRJ2012:review}.  
It was known
from~\cite{SR2011:fastfix} that $f_{G,r}\geq f_{G,1}$ for any $r>1$.

\begin{corollary}[Monotonicity]\label{cor:monotonicity}
    If $0<r\leq r'$ then, for any digraph $G$, $f_{G,r}\leq f_{G,r'}$.
\end{corollary}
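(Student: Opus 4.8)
The plan is to read this off from Theorem~\ref{thm:fix} by specialising to singleton initial sets. Recall from the definitions above that, for a digraph $G=(V,E)$, the fixation probability from a uniformly random single initial mutant is $f_{G,r} = \frac{1}{|V|}\sum_{v\in V} f_{G,r}(\{v\})$, so it is enough to compare $f_{G,r}(\{v\})$ with $f_{G,r'}(\{v\})$ for each fixed vertex~$v$.

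First I would apply Theorem~\ref{thm:fix} with $S=S'=\{v\}$: since $\{v\}\subseteq\{v\}$ and $0<r\le r'$, the theorem gives $f_{G,r}(\{v\})\le f_{G,r'}(\{v\})$, and this holds for every $v\in V$. Summing over $v$ and dividing by $|V|$ then yields
\[
  f_{G,r}=\frac{1}{|V|}\sum_{v\in V} f_{G,r}(\{v\})
  \le \frac{1}{|V|}\sum_{v\in V} f_{G,r'}(\{v\}) = f_{G,r'},
\]
as required.

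There is essentially no obstacle at this stage: all the substance lies in Theorem~\ref{thm:fix}, which is itself obtained from the coupling of continuous-time Moran processes in Lemma~\ref{lemma:couple}. If one wanted a more self-contained argument, one could bypass Theorem~\ref{thm:fix} and reason directly: applying Lemma~\ref{lemma:couple} with $\Yt[0]=\Yt'[0]=\{v\}$ and $r\le r'$ couples the two continuous-time processes so that $\Yt[t]\subseteq\Yt'[t]$ for all $t\ge 0$; passing to the embedded jump chains (the discrete-time processes) and noting that whenever the $r$-process reaches the all-mutant state the containing $r'$-process is also at the all-mutant state (and fixation is absorbing), we obtain $f_{G,r}(\{v\})\le f_{G,r'}(\{v\})$ for each $v$, and averaging over $v$ again finishes the proof. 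Either way, the only point requiring any care is already discharged upstream.
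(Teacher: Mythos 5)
Your proposal is correct and matches the paper's own argument exactly: the paper derives Corollary~\ref{cor:monotonicity} immediately from Theorem~\ref{thm:fix} by noting $f_{G,r}(\{v\})\leq f_{G,r'}(\{v\})$ for all $v\in V(G)$ and averaging over the uniformly random initial vertex. The additional remark about bypassing Theorem~\ref{thm:fix} via Lemma~\ref{lemma:couple} is a sound but unnecessary elaboration of the same underlying mechanism.
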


Corollary~\ref{cor:monotonicity} follows
immediately from Theorem~\ref{thm:fix} since $f_{G,r}(\{v\})\leq f_{G,r'}(\{v\})$ for all $v\in V(G)$.

The second corollary can be stated informally as, ``Adding more
mutants can't decrease the fixation probability,'' and has been
assumed in the literature, without proof.  However, although it
appears obvious at first, it is somewhat subtle: the example
at the beginning of this section shows that adding more mutants can actually
decrease the probability of a particular vertex becoming a mutant at
the next step of the process.

\begin{corollary}[Subset domination]
\label{cor:subsetdom}
    For any digraph $G$ and any $r>0$, if $S\subseteq S'\subseteq
    V(G)$, then $f_{G,r}(S)\leq f_{G,r}(S')$.
\end{corollary}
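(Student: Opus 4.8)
The plan is to read Corollary~\ref{cor:subsetdom} off Theorem~\ref{thm:fix} directly. Theorem~\ref{thm:fix} asserts $f_{G,r}(S)\le f_{G,r'}(S')$ whenever $0<r\le r'$ and $S\subseteq S'$; since $0<r\le r$ holds for every $r>0$, setting $r'=r$ gives $f_{G,r}(S)\le f_{G,r}(S')$ for all $S\subseteq S'\subseteq V(G)$, which is exactly the statement to be proved. In other words, Theorem~\ref{thm:fix} has two knobs — the fitness and the initial mutant set — and Corollary~\ref{cor:monotonicity} is the specialisation that turns only the fitness knob (averaged over single-vertex starts), while Corollary~\ref{cor:subsetdom} is the specialisation that turns only the set knob. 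So once Theorem~\ref{thm:fix} is available, there is genuinely nothing left to do.

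For completeness I would also sketch the self-contained route, which makes visible where the subtlety sits. First suppose $r\ge 1$. Apply the Coupling Lemma (Lemma~\ref{lemma:couple}) with equal fitnesses $r'=r$ and nested starts $Y=S\subseteq S'=Y'$, obtaining coupled continuous-time Moran processes with $\Yt[t]\subseteq\Yt'[t]$ for all $t\ge 0$. Fixation is the event that the mutant set equals $V$; since $V$ is the maximum element of the subset lattice and $\Yt[t]\subseteq\Yt'[t]$, whenever $\Yt[t]=V$ we already have $\Yt'[t]=V$. Hence the continuous-time fixation probability from $S$ is at most that from $S'$. Finally, the discrete-time process is the jump chain of the continuous-time process, and fixation depends only on the sequence of configurations, not on the holding times, so the discrete- and continuous-time fixation probabilities coincide for each initial set. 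This settles the case $r\ge 1$.

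For $0<r<1$ I would pass to the complement. Multiplying every reproduction rate by $1/r$ is a deterministic time change that leaves the jump chain — hence the fixation probability — unchanged; after rescaling, mutants reproduce at rate $1$ and non-mutants at rate $1/r>1$, so on the same digraph $G$ the non-mutant set evolves exactly as the mutant set of a Moran process with mutant fitness $1/r>1$ (edge directions are untouched). Since absorption occurs almost surely, ``$S$ reaches fixation'' and ``$V\setminus S$ becomes extinct in the recoloured process'' are complementary events, so $f_{G,r}(S)=1-f_{G,1/r}(V\setminus S)$. As $S\subseteq S'$ gives $V\setminus S'\subseteq V\setminus S$, the already-proved case ($1/r\ge 1$) yields $f_{G,1/r}(V\setminus S')\le f_{G,1/r}(V\setminus S)$, whence $f_{G,r}(S)=1-f_{G,1/r}(V\setminus S)\le 1-f_{G,1/r}(V\setminus S')=f_{G,r}(S')$.

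The real difficulty is not in this corollary but in what it stands on. As the three-vertex path example in Section~\ref{sec:introdom} shows, there is \emph{no} coupling of two discrete-time Moran processes that keeps the mutant sets nested, because a newly created mutant slows down every other vertex; the fix — passing to continuous time, where a new mutant speeds only itself up — is precisely Lemma~\ref{lemma:couple}, which I am entitled to assume. Granting that, the only remaining wrinkle is the $r<1$ regime, handled by the time-rescaling/complementation duality above (and if one is fussy about ``any digraph'' rather than strongly connected ones, invoking the $r'=r$ case of Theorem~\ref{thm:fix} sidesteps the almost-sure-absorption bookkeeping entirely). Everything else is routine.
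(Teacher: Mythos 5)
Your proof is correct and takes the same route as the paper: Corollary~\ref{cor:subsetdom} is exactly the $r'=r$ specialisation of Theorem~\ref{thm:fix}, which is all the paper says. Your supplementary self-contained sketch (coupling with equal fitnesses for $r\geq 1$, then rescaling and complementation for $r<1$) simply reproduces the paper's own proof of Theorem~\ref{thm:fix}, so there is nothing genuinely different here.
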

 
Note that, although we have introduced the continuous-time version of the process
for technical reasons, to draw conclusions about the original,
discrete-time Moran process, the continuous-time version
may actually be a more realistic model than the discrete-time version.

 \subsection{Previous work}
 
 There is previous work on calculating the fixation probability
 of the Moran process. 
Fixation probabilities are known for 
regular graphs~\cite{LHN} and for stars (complete
bipartite graphs $K_{1,k}$)~\cite{BR2008:fixprob}.  Lieberman et
al.~\cite{LHN} have defined classes of directed graphs with a
parameter~$k$,  
for which  they claim that the fixation probability tends to
$1-r^{-k}$ for large graphs. While these graphs
do seem to have very large fixation probability,   we have shown this specific claim
to be incorrect for $k=5$~\cite{DGMRSS2013:superstars}.  
Very recently, it has been claimed~\cite{fixrevised}
that for large~$k$, the fixation probability is close to $1-\frac{1}{(k-2)r^4}$.
Other work
has investigated the possibility of so-called ``suppressors'', graphs
having fixation probability less than that given
by~\eqref{eq:isothermal} for at least some range of values for~$r$
\cite{MNRS2013:suppressors, MS2013:strongbounds}.

There is a more complicated version of the Moran process 
in which the fitness
of a vertex is determined by its expected payoff when playing some
two-player game against a randomly chosen
neighbour~\cite{TFSN2004:game, SRJ2012:review}.  In this version of
the process, mutants play the game with one strategy and 
non-mutants play the game with another.  
The ordinary Moran process
corresponds to the special case of
this game in which the mutant and non-mutant strategies
give payoffs $r$ and~$1$, respectively, regardless of the strategy
used by the opponent.

Most previous work on absorption times has been in the game-based version of
the process, where the added complexity of the model
limits analysis to very simple graphs, such as complete graphs, stars
and cycles~\cite{AS2006:fixtime,
  BHR2010:speed}.  

\section{Preliminaries}

When $k$ is a positive integer, $[k]$ denotes $\{1,\ldots,k\}$.
We consider the evolution of the Moran process~\cite{LHN}
on a 
strongly connected directed
graph (digraph).
Consider 
such
a  digraph  $G=(V,E)$ with $n=|V|$.
When the process is run on~$G$, each state is a set of vertices $S\subseteq V$.
The vertices in~$S$ are said to be ``mutants''.
If $|S|=k$ then the total fitness of the state  is given by  
$W_k=n+(r-1)k$  --- each of the $k$ mutants contributes fitness~$r$
to the total fitness
and each other vertex contributes fitness~$1$.  Except where stated
otherwise, we assume that $r>1$.
The starting state is chosen uniformly at random from the one-element subsets of~$V(G)$.
From a state~$S$ with $|S|=k$,
the process evolves as follows.
First, a vertex~$u$ is chosen to reproduce.
The probability that vertex~$u$ is chosen is 
$r/W_k$ if $u$ is a mutant and $1/W_k$ otherwise.
Given that $u$ will reproduce,
a directed edge $(u,v)$ is chosen uniformly at random from 
$\{(u,v') \mid (u,v')\in E\}$.  The state of vertex~$u$ in~$S$ is copied
to vertex~$v$ to give the new state
\begin{equation*}
    S|_{u\to v} = \begin{cases}
                \ S\cup\{v\}      & \text{if $u$ is a mutant,} \\
                \ S\setminus\{v\} & \text{if $u$ is a non-mutant.}
                \end{cases}
\end{equation*}

Let $\deg^+(u)$ denote the out-degree of vertex~$u$
and let $\deg^-(u)$ denote its in-degree.
A digraph~$G=(V,E)$ is \emph{$\Delta$-regular}
if, for every vertex $u\in V$, $\deg^+(u)=\deg^-(u)=\Delta$.
$G$ is \emph{regular} if it is $\Delta$-regular for some
natural number~$\Delta$.
If the Moran process is run on a strongly connected
digraph~$G$, there are exactly two absorbing states --- $\emptyset$ and~$V(G)$.
Once one of these states is reached, the process will stay in it forever. The
\emph{absorption time} is the number of reproduction steps until
such a state is reached.

A digraph~$G=(V,E)$ is weakly connected if the underlying undirected
graph is connected. Given a subset $S\subseteq V$,
let $\degplus S$ be the number of edges 
from vertices in~$S$ to vertices in~$V\setminus S$.
Let $\degminus S$ be the number of edges from vertices in $V\setminus S$ to~$S$.
If $G$ is regular then, for every~$S\subseteq V$,
\begin{align*}
    \degplus S &= |\{(u,v)\in E\mid u\in S\}|
                      - |\{(u,v)\in E\mid u,v\in S\}| \\
               &= |S|\Delta - |\{(u,v)\in E\mid u,v\in S\}| \\
               &= |\{(u,v)\in E\mid v\in S\}|
                      - |\{(u,v)\in E\mid u,v\in S\}| \\
               &= \degminus S\,.
\end{align*}
Thus, every regular digraph that is weakly connected is strongly
connected.

We sometimes consider the Moran process on an undirected graph~$G=(V,E)$. 
We view the undirected graph as a digraph in which 
the set~$E$ of edges is symmetric
(so $(u,v)\in E$ if and only if $(v,u)\in E$).  
If $G$ is undirected then, for every vertex $u$, 
$\deg^+(u)=\deg^-(u)$ and in this case we just write $\deg(u)$ to denote this quantity.

\section{Domination}
\label{sec:domination}

A useful proof technique is to show that the behaviour of the Moran
process is stochastically dominated by that of a related process that is
easier to analyse.  
Similarly, it is useful to compare the behaviour of the Moran process, evolving
on a digraph~$G$, with that of another Moran process on the same digraph,
where the second process starts with more mutants.
Recall that the Moran process can be described as a Markov chain $(Y_t)_{t\geq 1}$
where $Y_t$ is the set $S\subseteq V(G)$ of mutants at the $t$'th step.
It would be natural to attempt to
establish a coupling between Moran processes $(Y_t)_{t\geq 1}$ and
$(Y'_t)_{t\geq 1}$ such that, if $Y_1\subseteq Y'_1$, then
$Y_t\subseteq Y'_t$ for all $t\geq 1$
but, as we showed in Section~\ref{sec:introdom}, this
cannot be done.
 
To obtain useful dominations, we will consider
a continuous-time version of the Moran process.
The   domination that we construct for the continuous-time process
will allow us to draw conclusions about
the original (discrete-time) Moran process.
In a digraph $G=(V,E)$ where
the set of mutants~$Y$ have fitness~$r$, let $r_{v,Y}=r$ if $v\in Y$
and $r_{v,Y}=1$, otherwise.
We define the continuous-time version of the 
Moran process
on a digraph $G=(V,E)$ as follows.  Starting in configuration~$\Yt[t]$
at time~$t$, each vertex~$v$ waits for a period of time before reproducing.
This period of time is chosen, independently of other vertices,
according to an exponential distribution with
parameter~$r_{v,\Yt[t]}$.  Therefore, the probability that two
vertices reproduce at once is zero.  Suppose that the first vertex to reproduce
after time~$t$ is vertex~$v$, at time~$t+\tau$.  As in the
discrete-time version of the process, an out-neighbour~$w$ of~$v$ is
chosen u.a.r.\ and the new configuration is given by
$\Yt[t+\tau]=\Yt[t]|_{v\to w}$.

From the definition of the exponential distribution, it is clear that
the probability that a particular vertex~$v$ is the next to reproduce, from
configuration~$\Yt[t]$, is $r_{v,\Yt[t]}/W_{|\Yt[t]|}$. Thus, the
Moran process (as generalised by Lieberman et al.)
is recovered by taking the
sequence of configurations each time a vertex reproduces.\footnote{This is
  closely related to the jump chain, which is defined to be the
  discrete-time chain whose successive states are the states~$\Yt[t]$
  for the successive times~$t$ immediately after the state changes.
  Thus, the jump chain is the chain of ``active''
   steps of the
  discrete-time Moran process (see Section~\ref{sec:active}).}
We can now give the proof of Lemma~\ref{lemma:couple} and
Theorem~\ref{thm:fix}.

\begin{replemma}{lemma:couple}\LEMMAcouple{}\end{replemma}
\begin{proof}
    Suppose that $\Yt[t]\subseteq \Yt'[t]$ for some~$t$.  We couple
    the evolution of $\Yt[t']$ and $\Yt'[t']$ for $t'\geq t$ as
    follows.  For ease of notation, we write $r_{v,t}$
    and~$r'_{\smash{v,t}}$ for $r_{\smash{v,\Yt[t]}}$
    and~$r'_{\smash{v,\Yt'[t]}}$, respectively.  Let
    \begin{equation*}
        S = \{v\in V(G) \mid r_{v,t} < r'_{\smash{v,t}}\} \subseteq \Yt'[t]
    \end{equation*}
    and note that, for $v\in V\setminus S$, $r'_{\smash{v,t}}=r_{v,t}$.
    For $v\in V\!$, let $t_v$~be a random variable drawn from
    $\Exp(r_{v,t})$ and, for $v\in S$, let $t'_v \sim
    \Exp(r'_{\smash{v,t}} - r_{v,t})$.  From the definition of the
    exponential distribution, it is easy to see that, for each $v\in
    S$, $\min(t_v,t'_v) \sim \Exp(r'_{\smash{v,t}})$.

    If some $t_v$ is minimal among $\{t_v\mid v\in V\} \cup \{t'_v\mid
    v\in S\}$, then choose an out-neighbour~$w$ of~$v$ u.a.r.\@ and
    set $\Yt[t+t_v] = \Yt[t]|_{v\to w}$ and $\Yt'[t+t_v] = \Yt'[t]|_{v\to w}$.
    It is clear that $\Yt[t+t_v]\subseteq \Yt'[t+t_v]$.

    Otherwise, some $t'_v$ is minimal.  In this case, set
    $\Yt[t+t'_v]=\Yt[t]$; choose an out-neighbour~$w$ of~$v$ u.a.r.\@
    and set $\Yt'[t+t'_v]=\Yt'[t]|_{v\to w}$.  Since 
    $v\in S\subseteq  \Yt'[t]$, we have
    \begin{equation*}
        \Yt[t+t'_v] = \Yt[t]\subseteq \Yt'[t] \subseteq \Yt'[t+t'_v]\,.
    \end{equation*}

    In both cases, the continuous-time Moran process has been
    faithfully simulated up to time $t+\tau$, 
where $\tau=t_v$ in the first case and $\tau=t'_v$ in the second case,
and the memorylessness of the exponential
    distribution allows the coupling to continue from 
$\Yt[t+\tau]$ and $\Yt'[t+\tau]$.
\end{proof}

The coupling provided in Lemma~\ref{lemma:couple}
could be translated to a coupling for the discrete-time Moran process, though the time steps in the two copies would not be the same since such a coupling was ruled out in Section~\ref{sec:introdom}.
In fact, it will be easy for us to use Lemma~\ref{lemma:couple} directly.

\begin{reptheorem}{thm:fix}\THMfix{}\end{reptheorem}  
\begin{proof}
    We split the proof into two parts: $1\leq r\leq r'$ and $r\leq
    r'\leq 1$.  The remaining case $r\leq 1\leq r'$ follows because
    $f_{G,r}(S)\leq f_{G,1}(S)\leq f_{G,r'}(S')$.
 
    First, suppose that $1\leq r\leq r'\!$.  Let $\Yt[t]$ and
    $\Yt'[t]$ be Moran processes on~$G=(V,E)$ with mutant fitnesses $r$
    and~$r'$, respectively, with $\Yt[0] = S$ and $\Yt'[0] = S'\!$.
    By the coupling lemma, we can couple the processes such that
    $\Yt[t]\subseteq \Yt'[t]$ for all $t\geq 0$.  In particular, if
    there is a~$t$ such that $\Yt[t]=V$, we must have $\Yt'[t]=V$ 
    also.  Therefore, $f_{G,r'}(S')\geq f_{G,r}(S)$. 

    Now, suppose that $r\leq r'\leq 1$.  Observe that the behaviour of
    the Moran process is independent of any consistent scaling of the
    mutant and non-mutant fitnesses, in the following sense.  For any
    $\alpha>0$, the process where mutants have fitness~$r$ and
    non-mutants have fitness~$1$ is identical to the one where they
    have fitness $\alpha r$ and~$\alpha$, respectively.  Let $\Yt[t]$
    be the process where mutants have fitness~$\tfrac{1}{r}\cdot r =
    1$ and non-mutants have fitness~$\tfrac{1}{r}\geq 1$, and let
    $\Yt'[t]$ have mutant fitness~$1$, non-mutant
    fitness~$\tfrac{1}{r'}\geq 1$.  Let $\Yt[0]=S$ and $\Yt'[0]=S'\!$.
    Now, $f_{G,r}(S)$ is the probability that the individuals with
    fitness~$1$ take over the graph in~$\Yt[t]$, which is
    $1-f_{G,1/r}(V\setminus S)$; similarly, 
    $f_{G,r'}(S') = 1 -
    f_{G,1/r'}(V\setminus S')$.  By the first part,
    $f_{G,1/r'}(V\setminus S')\leq f_{G,r}(V\setminus S)$ and the
    result follows.
\end{proof}

Corollaries \ref{cor:monotonicity} (monotonicity)
and~\ref{cor:subsetdom} (subset domination) follow immediately from
Theorem~\ref{thm:fix}, as shown in Section~\ref{sec:introdom}.

\section{Regular digraphs}
\label{sec:regular}
 
This section provides upper and lower bounds on
the absorption time of the Moran process on
regular digraphs. 

\subsection{An upper bound for undirected graphs}

It is clear that the absorption time bounds from~\cite{OurSODA} do
not apply to digraphs. For example, Theorem~7 of~\cite{OurSODA}
gives a polynomial upper bound on the expected absorption time
for all connected undirected graphs, but Theorem~\ref{thm:directed}  shows that 
process takes exponential time on
some strongly connected digraphs.

Since we will be discussing both undirected graphs and digraphs
in this section, we start by  observing that Theorem~7 of~\cite{OurSODA}
can be improved to give an $O(n^3)$ bound in the special case in which the undirected graphs to which it applies
are regular.  
This is certainly not tight (as we shall see below)
but it is a natural place to begin.

\begin{proposition}
\label{prop:oldupper}
The expected absorption time of the Moran process on a connected
$\Delta$-regular 
$n$-vertex undirected graph is at most
$(r/(r-1))n^2 \Delta$.
\end{proposition}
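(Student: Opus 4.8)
The plan is to run a potential-function (martingale) argument in the spirit of the $O(n^4)$ bound of~\cite{OurSODA}, exploiting $\Delta$-regularity to reduce the potential to a scalar multiple of the number of mutants. For a connected $\Delta$-regular graph~$G$ I would set $\phi(S)=|S|/\Delta$ (equivalently $\phi(S)=\sum_{v\in S}1/\deg(v)$, the potential used in~\cite{OurSODA}) and let $(Y_t)$ be the discrete-time Moran process on~$G$, started from an arbitrary state~$Y_0$. First I would compute the one-step drift of~$\phi$: from a non-absorbing state~$S$ with $|S|=k$, a reproduction along an edge $(u,v)$ with $u\in S$ and $v\notin S$ occurs with probability $r/(W_k\Delta)$ and changes~$\phi$ by $+1/\Delta$; a reproduction along an edge $(u,v)$ with $u\notin S$ and $v\in S$ occurs with probability $1/(W_k\Delta)$ and changes~$\phi$ by $-1/\Delta$; every other step leaves~$\phi$ unchanged. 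Summing over edges and using that $\degplus S=\degminus S$ for regular digraphs, this gives
\[
  \mathbb{E}\!\left[\phi(Y_{t+1})-\phi(Y_t)\mid Y_t=S\right]
  =\frac{(r-1)\,\degplus S}{W_k\,\Delta^2}\,.
\]
Since~$G$ is connected, $\degplus S\ge 1$ for every non-absorbing~$S$, and since $W_k=n+(r-1)k\le rn$, the drift is at least $c:=(r-1)/(r\,n\,\Delta^2)$ at every non-absorbing state.

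Next I would turn the drift bound into a bound on the expected absorption time~$T$ by optional stopping. The stopped process $M_t=\phi(Y_{t\wedge T})-c\,(t\wedge T)$ is a submartingale: for $t<T$ its conditional increment equals $\mathbb{E}[\phi(Y_{t+1})-\phi(Y_t)\mid\mathcal{F}_t]-c\ge 0$, and for $t\ge T$ it is constant. Since $0\le\phi\le n/\Delta$ and $\mathbb{E}[T]<\infty$ by Theorem~7 of~\cite{OurSODA} (a connected regular undirected graph is in particular connected and undirected), $M_{t\wedge T}$ is dominated by the integrable variable $n/\Delta+cT$ and converges almost surely to~$M_T$, so optional stopping yields $\mathbb{E}[M_T]\ge\mathbb{E}[M_0]=\phi(Y_0)\ge 0$. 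Rearranging,
\[
  c\,\mathbb{E}[T]\le\mathbb{E}[\phi(Y_T)]-\phi(Y_0)\le\frac{n}{\Delta}\,,
\]
so $\mathbb{E}[T]\le n/(c\Delta)=\tfrac{r}{r-1}\,n^2\Delta$. This holds from every starting state, hence in particular from the single randomly placed initial mutant, which is exactly the claim. (Alternatively, avoiding martingale technology: writing $h(S)=\mathbb{E}[T\mid Y_0=S]$, the function $g(S)=\tfrac1c\bigl(\phi(V)-\phi(S)\bigr)$ is nonnegative, vanishes at~$V$, and satisfies $g(S)-\mathbb{E}[g(Y_1)\mid Y_0=S]=\tfrac1c\,\mathbb{E}[\phi(Y_1)-\phi(Y_0)\mid Y_0=S]\ge 1$ at every non-absorbing~$S$; the comparison principle for absorbing chains then gives $h(S)\le g(S)\le n/(c\Delta)$.)

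I do not expect a genuine obstacle here; the difficulty is purely bookkeeping. The point that needs the most care is the drift computation --- verifying that the factor~$r$ on the mutant-to-non-mutant edges and the factor~$1$ on the non-mutant-to-mutant edges collapse, via $\degplus S=\degminus S$, to exactly $(r-1)\,\degplus S$, and that the per-edge reproduction probability is $1/(W_k\Delta)$ (fitness over total fitness~$W_k$, times $1/\Delta$ for choosing the endpoint). The only other thing to be careful about is the standard hygiene of optional stopping, most cleanly dispatched by first invoking the $O(n^4)$ bound of~\cite{OurSODA} to guarantee $\mathbb{E}[T]<\infty$ and then passing to the limit by dominated convergence. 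The bound is not tight --- Theorems~\ref{thm:regbounds} and~\ref{thm:isoperimetric} improve it --- but it is the natural starting point.
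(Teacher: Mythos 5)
Your proof is correct and follows essentially the same route as the paper's: the paper uses the same potential $\phi(S)=\sum_{v\in S}1/\deg(v)$ and simply cites the proof of Theorem~7 and Equation~(1) of~\cite{OurSODA} for the drift-plus-optional-stopping machinery that you re-derive explicitly, then specialises via $|\partial S|\geq 1$ and $W_k\leq rn$ exactly as you do.
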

\begin{proof}
Given an undirected graph~$G$ and a
set $S\subseteq V(G)$,
let $\phi(S) = \sum_{v\in S} \tfrac{1}{\deg(v)}$. 
Let $\partial S$ be
the  set of (undirected) edges
between 
vertices in~$S$ and vertices in~$V(G)\setminus S$.
If~$G$ is   $\Delta$-regular  and has $n$ vertices, then $\phi(V(G)) = n/\Delta$.
 The proof of Theorem~7 and Equation~(1) of~\cite{OurSODA} 
 show that the absorption time is at most
 $$ \phi(V(G)) \max \left\{
 \left( 
 \frac{(n+(r-1)|S|)\, \Delta^2}{(r-1)\,|\partial S|}\right)
 \mid \emptyset \subset S \subset V(G)
 \right\}.$$
 The bound follows using $|S|\leq n$ and $|\partial S|\geq 1$. \end{proof}
  
\subsection{Definitions}

We will use the following standard Markov chain definitions.
For more detail, see,   for example,~\cite{Norris}.
We use $(X_t)_{t\geq 0}$ to denote a discrete-time Markov chain~$\calM$
with finite state space~$\Omega$ and transition matrix~$P$.
$T_k = \inf\{ t \geq 1 \mid X_t=k\}$ is the first passage time for visiting
state~$k$ (not counting the initial state~$X_0$).
The  time spent in state~$i$ between visits to state~$k$
is given by 
$$\gamma_i^k =  
\sum_{t=0}^{T_k-1} 1_{X_t=i}, \mbox{ where $X_0=k$.}$$
The chain  is said to be  irreducible if, for every pair of states~$(i,j)$
there is some $t\geq 0$ such that
$\Pr(X_t=j \mid X_0=i)>0$. 
Since $\Omega$ is finite, this implies that
the chain is recurrent, which means
that, for every state $i\in \Omega$,  
$\Pr(\mbox{$X_t=i$ for infinitely many~$t$})=1$. 
We use the following proposition, which (up to minor notational differences)
is the special case of \cite[Theorem 1.7.6]{Norris} corresponding
to finite state spaces.

\begin{proposition} \label{prop:Norristwo}
Let $\calM$ be an irreducible discrete-time Markov chain
with finite state space $\Omega=\{0,\ldots,\omega-1\}$
and transition matrix~$P$.
For $k\in\Omega$, 
let $\lambda=(\lambda_0,\ldots,\lambda_{ \omega-1})$
be a vector of non-negative real numbers with $\lambda_k=1$
satisfying $\lambda P = \lambda$. Then,
for every $j\in \Omega$, $E[\gamma^k_j]=\lambda_j$. \end{proposition}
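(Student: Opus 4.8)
The plan is to run the classical argument that identifies the invariant measure of a finite irreducible chain with the expected occupation counts over an excursion. Fix $k\in\Omega$ and write $\Pr_k$ and $E_k$ for probability and expectation conditioned on $X_0=k$ (this is the conditioning implicit in the definition of $\gamma^k_j$). Set $\mu_j=E_k[\gamma^k_j]$. Since $X_0=X_{T_k}=k$, the term $t=0$ and the term $t=T_k$ contribute the same indicator, so $\gamma^k_j=\sum_{t=1}^{T_k}1_{X_t=j}$ and hence $\mu_j=\sum_{t\geq 1}\Pr_k(X_t=j,\,T_k\geq t)$, with $\mu_k=1$ immediately. It then suffices to prove (i) $\mu P=\mu$, (ii) every $\mu_j$ is finite, and (iii) any $\lambda$ as in the statement coincides with $\mu$; conclusion (iii) gives $\lambda_j=\mu_j=E[\gamma^k_j]$ as required.

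For (i), I would exploit that $\{T_k\geq t\}$ is determined by $X_1,\dots,X_{t-1}$, so the Markov property yields $\Pr_k(X_t=i,\,T_k\geq t,\,X_{t+1}=j)=\Pr_k(X_t=i,\,T_k\geq t)\,P_{ij}$. Summing over $i$ and over $t\geq 1$ (all terms non-negative, so the interchange is legitimate) converts $\sum_i\mu_iP_{ij}$ into $\sum_{t\geq 2}\Pr_k(T_k\geq t-1,\,X_t=j)$. Splitting $\{T_k\geq t-1\}$ into $\{T_k\geq t\}$ and $\{T_k=t-1\}$, the first part sums to $\mu_j-P_{kj}$ (the missing $t=1$ summand of $\mu_j$ is $\Pr_k(X_1=j)=P_{kj}$), while on $\{T_k=t-1\}$ we have $X_{t-1}=k$, so a further application of the Markov property at the deterministic time $t-1$ turns the second part into $P_{kj}\sum_{m\geq 1}\Pr_k(T_k=m)=P_{kj}$, using that a finite irreducible chain is recurrent. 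Hence $\sum_i\mu_iP_{ij}=\mu_j$. For (ii), from $\mu=\mu P^m$ and $\mu_k=1$ we get $1=\mu_k\geq\mu_j(P^m)_{jk}$ for every $m$, and irreducibility supplies an $m$ with $(P^m)_{jk}>0$, so $\mu_j\leq(P^m)_{jk}^{-1}<\infty$.

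For (iii), expanding $\lambda=\lambda P$ and substituting repeatedly, at each stage using $\lambda_k=1$ for the $i=k$ term and $\lambda_i\geq 0$ to discard the remaining double sum, a straightforward induction gives $\lambda_j\geq\sum_{t=1}^n\Pr_k(X_t=j,\,T_k\geq t)$ for every $n$; letting $n\to\infty$ yields $\lambda_j\geq\mu_j$ for all $j$. Then $\nu:=\lambda-\mu$ is non-negative, satisfies $\nu P=\nu$, and has $\nu_k=\lambda_k-\mu_k=0$; the one-line computation from (ii), $0=\nu_k\geq\nu_j(P^m)_{jk}$ with $(P^m)_{jk}>0$ for a suitable $m$, forces $\nu_j=0$ for all $j$. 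Therefore $\lambda=\mu$.

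I expect the only genuine obstacle to be the bookkeeping in step (i): getting the index shift right when rewriting $\sum_i\mu_iP_{ij}$ back in terms of $\mu_j$, and isolating the boundary contribution coming from the event $\{T_k=t-1\}$ cleanly, without double-counting the term at the return time and while making sure the Markov property is being invoked only at deterministic times (on the event $\{T_k=t-1\}$, not at the random time $T_k$ itself). The two hypotheses enter exactly twice — recurrence, so that the return-time probabilities sum to $1$ in step (i), and irreducibility, for the strict positivity $(P^m)_{jk}>0$ in steps (ii) and (iii) — and both hold because $\Omega$ is finite. (Of course, one may also simply cite \cite[Theorem~1.7.6]{Norris}; the above is the self-contained version of that argument.)
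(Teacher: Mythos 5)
Your proof is correct, and it is precisely the standard excursion/minimality argument behind the result the paper invokes: the paper gives no proof of its own, simply citing \cite[Theorem~1.7.6]{Norris}, and your write-up is a faithful self-contained rendering of that theorem's proof (invariance of the expected occupation measure over an excursion, finiteness via $\mu_k\geq\mu_j(P^m)_{jk}$, and minimality plus $\nu_k=0$ to force uniqueness), with the recurrence hypothesis correctly discharged by finiteness of $\Omega$. Nothing to flag.
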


\subsection{Active steps of the Moran process}
\label{sec:active}
 
We fix $r>1$ and study the Moran process on 
a strongly connected   $\Delta$-regular 
$n$-vertex
digraph $G=(V,E)$ with $n>1$.
We refer to the steps of the process during which the
number of mutants changes as ``active steps''.
As explained in the introduction,
the evolution of the number of mutants, sampled
after each active step, 
corresponds to a one-dimensional random walk 
on $\{0,\ldots,n\}$ which starts at
state~$1$, absorbs at states~$0$ and~$n$, and has
upwards
drift $p=r/(r+1)$. 
 (To see this,
note that the probability that the number of mutants increases
from a size-$k$ state~$S$ is 
$\sum_{e \in E \cap (S\times V(G)\setminus S)} \tfrac{r}{W_k \Delta}$
and the probability that
it decreases is
$\sum_{e \in E \cap (V(G)\setminus S \times S)} \tfrac{1}{W_k \Delta}$
but we showed earlier that the number of edges in each summation is equal
when $G$ is $\Delta$-regular, so the ratio
between these two probabilities is $r$ to~$1$.)

To derive the properties that we need,
we consider a Markov chain $\calM$ with state space $\Omega=\{0,\ldots,n+1\}$.
 The non-zero entries
of the  transition matrix~$P$ of~$\calM$ are as follows.  
$P_{0,n+1}=P_{n,n+1}=1$.
Also, $P_{n+1,1}=1$.
Finally, 
for $1\leq i \leq  n-1$,
$P_{i,i+1}=p$ and $P_{i,i-1}=1-p$.
Starting from state~$1$, the chain simulates the one-dimensional walk discussed above.
State~$n+1$ is a special state of the Markov chain that is visited after an absorbing state of the random walk is reached.
From state~$n+1$, the chain goes back to state~$1$ and repeats the random walk.
We use the following  property of~$\calM$.

\begin{lemma}
\label{lem:calc} 
Let $f= (r^n-r^{n-1})/(r^n-1)$.
Define the vector $\lambda = (\lambda_0,\ldots,\lambda_{n+1})$
as follows.
\begin{align*} 
\lambda_0&=1- f,\\ 
 \lambda_j &= (1+r)(1-f)(r^n-r^j)/(r^n-r),  
 \mbox{ for $1\leq j\leq n-1$,}\\
\lambda_n &= f,\\
\lambda_{n+1} &= 1.
\end{align*} 
Then, for every $j\in \Omega$,
$E[\gamma^{n+1}_j]=\lambda_j$.
 \end{lemma}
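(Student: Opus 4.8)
The statement is exactly of the form needed to apply Proposition~\ref{prop:Norristwo}: the chain $\calM$ is irreducible (one can get from any state to any other, since the random walk part has $0<p<1$ and the auxiliary state $n+1$ routes back to~$1$), so it suffices to check that the vector $\lambda$ defined in the lemma is non-negative, satisfies $\lambda_{n+1}=1$, and solves the stationarity equation $\lambda P = \lambda$. Non-negativity is immediate once we observe $0<f<1$ (clear from $f=(r^n-r^{n-1})/(r^n-1)$ with $r>1$ and $n>1$) and $r^n-r^j\ge 0$ for $1\le j\le n-1$, and $r^n-r>0$. So the whole content is the verification of the $n+2$ linear equations comprising $\lambda P=\lambda$.

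I would organise the verification by writing out $(\lambda P)_j = \sum_i \lambda_i P_{i,j}$ for each $j\in\Omega$ using the explicit non-zero entries of~$P$. For $j=n+1$: only $P_{0,n+1}=P_{n,n+1}=1$ contribute, so we need $\lambda_0+\lambda_n=1$, i.e.\ $(1-f)+f=1$, which holds. For $j=1$: contributions come from $P_{n+1,1}=1$ and $P_{2,1}=1-p$, so we need $\lambda_{n+1} + (1-p)\lambda_2 = \lambda_1$; substituting $p=r/(r+1)$, $1-p = 1/(r+1)$, and the closed forms for $\lambda_1,\lambda_2$ reduces this to an identity in $r$ and $n$. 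For a generic interior index $2\le j\le n-1$: the only incoming transitions are $P_{j-1,j}=p$ and $P_{j+1,j}=1-p$ (with the convention that $\lambda_n$'s outgoing mass does not feed back into $j$, and indeed $P_{n,j}=0$), so the equation is $p\,\lambda_{j-1} + (1-p)\,\lambda_{j+1} = \lambda_j$; since $\lambda_j$ is an affine function of $r^j$, namely $\lambda_j = A - B r^j$ with $A = (1+r)(1-f)r^n/(r^n-r)$ and $B=(1+r)(1-f)/(r^n-r)$, this collapses to checking $p\,r^{-1} + (1-p)\,r = 1$, which is exactly the ``drift-$r$'' balance $\frac{r}{r+1}\cdot\frac1r + \frac{1}{r+1}\cdot r = 1$. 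The boundary interior cases $j=n-1$ (incoming only $P_{n-2,n-1}=p$, since $P_{n,n-1}=0$) and $j=0$ (incoming only $P_{1,0}=1-p$) need separate attention, as does $j=n$ (incoming only $P_{n-1,n}=p$): these give $\lambda_0 = (1-p)\lambda_1$ and $\lambda_n = p\,\lambda_{n-1}$, each of which becomes a concrete identity after plugging in $1-p=1/(r+1)$, $p=r/(r+1)$, $\lambda_1 = (1+r)(1-f)(r^n-r)/(r^n-r) = (1+r)(1-f)$, and $\lambda_{n-1} = (1+r)(1-f)(r^n-r^{n-1})/(r^n-r)$; in particular $\lambda_n = p\lambda_{n-1}$ reads $f = \frac{r}{r+1}(1+r)(1-f)\frac{r^n-r^{n-1}}{r^n-r} = r(1-f)\frac{r^{n-1}(r-1)}{r(r^{n-1}-1)} = (1-f)\frac{r^{n-1}(r-1)}{r^{n-1}-1}$, and one checks this is consistent with $f=(r^n-r^{n-1})/(r^n-1)$, equivalently $1-f = (r^{n-1}-1)/(r^n-1)$, so $\lambda_n = \frac{r^{n-1}(r-1)}{r^n-1} = \frac{r^n-r^{n-1}}{r^n-1}=f$ as required.

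The only genuinely non-routine point is making sure the bookkeeping of \emph{which} states feed into each $\lambda_j$ is exactly right at the three boundaries $j\in\{0,1,n-1,n\}$ and at the auxiliary state $n+1$ — in particular that state~$0$ and state~$n$ send all their mass to $n+1$ (not back into the walk) and that $n+1$ sends all its mass to~$1$. Once that incidence structure is pinned down, every equation reduces either to the trivial $\lambda_0+\lambda_n=1$, or to the single algebraic fact $p/r + (1-p)r = 1$ together with the elementary rewriting $1-f = (r^{n-1}-1)/(r^n-1)$ and $\lambda_1 = (1+r)(1-f)$. I would present the interior equations via the affine-in-$r^j$ observation so that they are all handled in one line, then dispatch the boundary equations individually. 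No single step is hard; the main risk is an off-by-one or a sign slip in the boundary cases, so I would double-check those against the requirement that $\sum_j \lambda_j$ equals the expected return time to $n+1$ (which gives a useful sanity check but is not needed for the proof).
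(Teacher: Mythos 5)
Your proposal is correct and follows essentially the same route as the paper: invoke Proposition~\ref{prop:Norristwo} and verify $\lambda P=\lambda$ column by column, taking care that states $0$ and $n$ feed only into $n+1$ and that $n+1$ feeds only into $1$ (so the equations for $j=1$, $j=n-1$, $j=n$, and $j=n+1$ are the special ones, exactly as in the paper's proof). Your observation that the generic interior equations collapse, via $\lambda_j=A-Br^j$, to the single identity $p/r+(1-p)r=1$ is a slightly tidier packaging of the paper's direct computation, but it is the same argument; just make sure the explicit verification of the $j=n-1$ column ($\lambda_{n-1}=p\lambda_{n-2}$), which you flag but do not carry out, actually appears in the final write-up.
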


\begin{proof} 
Note that $P$ is  irreducible. 
By Proposition~\ref{prop:Norristwo}, 
it suffices to show that  $\lambda P = \lambda$.
First, consider the column vector $P_{*,0}$. This 
is all zero except the entry $P_{1,0}=1-p$
so $\lambda P_{*,0} = (1-p)\lambda_1 = \lambda_0$, as required.
Then  note that $1/(1-f)= r(r^n-1)/(r^n-r)$. So
\begin{align*}
\lambda P_{*,1} &=   (1-p) \lambda_2+\lambda_{n+1}\\  
&=  (1-f)\left( \frac{r^n-r^2}{r^n-r}+ \frac{1}{1-f}\right) \\
& = (1-f)\left(\frac{r^n-r^2 + r(r^n-1)}{r^n-r}\right)\\
&= (1-f)(r+1) =\lambda_1,
\end{align*}
 as required. 
 Next, consider the column vector $P_{*,j}$ for $1<j<n-1$.
In this case, 
\begin{align*}
\lambda P_{*,j} &= 
p \lambda_{j-1} + (1-p) \lambda_{j+1} \\
&= 
(1-f)\left(\frac{r(r^n-r^{j-1}) + (r^n-r^{j+1})}
{r^n-r}\right)\\
&= (1-f) 
\left(\frac{(1+r)(r^n-r^{j})  }
{r^n-r}\right) = \lambda_j,\\
\end{align*}
as required. 
Then  
\begin{align*} \lambda P_{*,n-1} &= p \lambda_{n-2} \\
&= \left(\frac{(1+r)(1-f)}{r^n-r}\right)\left(\frac{r(r^n-r^{n-2})}{r+1}\right) \\
&= \left(\frac{(1+r)(1-f)}{r^n-r}\right)\left(  r^n-r^{n-1} \right)  = \lambda_{n-1},
 \end{align*}
as required.
Furthermore, 
$$\lambda P_{*,n} = p \lambda_{n-1} = 
 (1-f)\left(\frac{r^{n+1}-r^n}{r^n-r}\right).$$
Also, 
$$\frac{1-f}{f}= \frac{r^n-r}{r^{n+1}-r^n},$$
so 
$$\frac{\lambda P_{*,n}}{\lambda_n} = 
\frac{p \lambda_{n-1}}{f} =
\frac{(1-f)}{f} \left(\frac{r^{n+1}-r^n}{r^n-r}\right) = 
\left(\frac{r^n-r}{r^{n+1}-r^n}\right)\left(\frac{r^{n+1}-r^n}{r^n-r}\right)=1,$$
as required. Finally, $\lambda P_{*,n+1} = \lambda_0 + \lambda_n = 1 = \lambda_{n+1}$,
as required. This completes the proof. \end{proof}
 
It is well known \cite{LHN} that~$f$ is the fixation probability 
of the
Moran process on a regular graph.
This is an easy consequence of Lemma~\ref{lem:calc}, but
we don't need it here. We will instead use the following corollary.

\begin{corollary}\label{ourcor}
For all $j\in \{1,\ldots,n-1\}$, 
$  1-\frac{1}{r^2}
\leq E[\gamma_j^{n+1}] \leq 1+\frac1r$.  \end{corollary}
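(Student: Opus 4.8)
The plan is to apply Lemma~\ref{lem:calc} directly. For $j\in\{1,\dots,n-1\}$ it gives the exact value $E[\gamma_j^{n+1}]=\lambda_j=(1+r)(1-f)(r^n-r^j)/(r^n-r)$ with $f=(r^n-r^{n-1})/(r^n-1)$, so Corollary~\ref{ourcor} reduces to a pair of elementary inequalities in $r$, $n$ and~$j$. First I would simplify: since $1-f=(r^{n-1}-1)/(r^n-1)$ and $r^n-r^j=r^j(r^{n-j}-1)$, $r^n-r=r(r^{n-1}-1)$, the factor $r^{n-1}-1$ cancels and one obtains the clean closed form
\[
E[\gamma_j^{n+1}]=\frac{(1+r)\,r^{j-1}(r^{n-j}-1)}{r^n-1}=\frac{r^n+r^{n-1}-r^j-r^{j-1}}{r^n-1}\,.
\]
This closed form is what makes both bounds transparent.

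For the upper bound, multiplying through by~$r$ reduces $E[\gamma_j^{n+1}]\le 1+\tfrac1r$ to $r^{n+1}+r^n-r^{j+1}-r^j\le(r+1)(r^n-1)$, which after cancelling $r^{n+1}+r^n$ on both sides is just $r^j(r+1)\ge r+1$; this holds since $j\ge1$ and $r>1$. For the lower bound, multiplying by~$r^2$ reduces $E[\gamma_j^{n+1}]\ge 1-\tfrac1{r^2}$ to $r^{n+1}+r^n+r^2-1\ge r^{j+2}+r^{j+1}$, which holds because $j\le n-1$ gives $r^{j+2}\le r^{n+1}$ and $r^{j+1}\le r^n$, while $r^2-1>0$.

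There is no genuine obstacle here; the only thing requiring care is the cancellation in the simplification step and bookkeeping of the exponents of~$r$, since the bounds have no slack to spare: the upper bound is attained in the limit at $j=1$ (as $\lambda_1\to 1+\tfrac1r$ when $n\to\infty$) and the lower bound is attained in the limit at $j=n-1$, so an off-by-one in any exponent would break the argument.
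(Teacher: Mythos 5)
Your proof is correct and follows essentially the same route as the paper: both start from the exact value $\lambda_j$ in Lemma~\ref{lem:calc}, simplify it to $\frac{r+1}{r}\cdot\frac{r^n-r^j}{r^n-1}$ (your expanded form $\frac{r^n+r^{n-1}-r^j-r^{j-1}}{r^n-1}$ is the same expression), and then verify the two bounds by elementary manipulation. The only cosmetic difference is that the paper obtains the lower bound by noting the expression is minimised at $j=n-1$ and evaluating there, whereas you verify the inequality directly for all $j$; these are interchangeable.
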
  
  
\begin{proof}
From Lemma~\ref{lem:calc},
$$E[\gamma_j^{n+1}] =  \left(\frac{r+1}{r}\right)\left(\frac{r^n-r^j}{r^n-1}\right).$$
The upper bound follows from the fact that $r^n-r^j \leq r^n-1$ (a consequence of $r> 1$ and $j
\geq 0$). 
For the lower bound, note that $E[\gamma_j^{n+1}]$ is minimised at $j=n-1$ 
and 
$$E[\gamma_{n-1}^{n+1}] = 
\left(\frac{r^n}{r^n-1}\right)\left(1-\frac{1}{r^2}\right)
.$$
The lower bound then follows from  $r^n/(r^n-1)\geq 1$.
\end{proof}  
  
\subsection{Absorption time}

Now let the Moran process be 
$(Y_t)_{t\geq 1}$
where each state $Y_t$ is the set $S\subseteq V(G)$
of vertices of~$G$ that are mutants
at the $t$'th step.
The state~$Y_1$ is selected uniformly at random from
the size-$1$ subsets of~$V(G)$.
For each state~$S$,   let
$p(S) = \Pr(Y_{t+1}\neq S \mid Y_t=S)$ and let 
$\mu(S) = \inf\{t\geq 1 \mid Y_{t+1} \neq S, Y_1=S\}$.
$\mu(S)$ is a random variable representing the number of times that the
state~$S$ appears when the process is run, starting from~$S$,
before another state is reached.
It is geometrically distributed with parameter~$p(S)$, so $E[\mu(S)]=1/p(S)$.
The absorption time~$\Tabs$ is
the number of steps needed to
get to state~$0$ or state~$n$, which is
$\Tabs=\inf\{t\geq 1 \mid |Y_t|\in\{0,n\}\}-1$.

Let $\tau_1=1$.  For $j>1$,  let
$\tau_j = \inf\{t>\tau_{j-1} \mid Y_t \neq Y_{t-1}\}$.
The values $\tau_2,\tau_3,\ldots$ are the times at which the state changes.
These are the active steps of the process.
The sequence
$Y_{\tau_1},Y_{\tau_2},\ldots$
is the same as the Moran process except that repeated states are omitted.
Now recall the Markov chain $(X_t)_{t\geq 0}$  with
start state $X_0=n+1$ and recall the definition of the first passage time $T_{n+1}$
which is the first time that the chain returns to state~$n+1$.
Note that
the sequence
$n+1, |Y_{\tau_1}|,|Y_{\tau_2}|,\ldots,|Y_{\tau_{(T_{n+1}-1)}}|$
is a faithful simulation of
the Markov chain 
$X_0,X_1,\ldots,X_{T_{n+1}-1}$ 
starting from state~$X_0=n+1$, 
up until it reaches state~$0$ or state~$n$. 
Also,  
the absorption time  satisfies
$$\Tabs = \tau_{(T_{n+1}-1)} -1 = \sum_{j=2}^{T_{n+1}-1} (\tau_j - \tau_{(j-1)})$$ 
and
for $j\geq 2$,
$\tau_j-\tau_{j-1}$ is distributed as  $\mu(Y_{\tau_{j-1}})$, which
is geometric with parameter $p(Y_{\tau_{j-1}})$.

To  derive upper and lower bounds for $E[\Tabs]$,
we break  the sum into pieces. For $k\in\{1,\ldots,n-1\}$,
let
$$\Tabsj{k} = \sum_{j=2}^{T_{n+1}-1} \Psi_{k,j},$$
where $\Psi_{k,j}$ is geometrically
distributed with parameter $p(Y_{\tau_{j-1}})$
if $|Y_{\tau_{j-1}}|=k$ and $\Psi_{k,j}=0$, otherwise.
Then $\Tabs$ is distributed as $\sum_{k=1}^{n-1} \Tabsj{k}$. 
In order to derive upper and lower bounds, let
$p^+_k = \max\{p(S) \mid |S|=k\}$ and 
$p^-_k = \min\{p(S) \mid |S|=k\}$. Let
$\Tabsj{k}^+= \sum_{j=2}^{T_{n+1}-1} \Psi^+_{k,j}$
where $\Psi^+_{k,j}$ is geometrically
distributed with parameter $p^-_k$
if $|Y_{\tau_{j-1}}|=k$ and $\Psi^+_{k,j}=0$, otherwise.
Let
$\Tabsj{k}^-= \sum_{j=2}^{T_{n+1}-1} \Psi^-_{k,j}$
where $\Psi^-_{k,j}$ is geometrically
distributed with parameter $p^+_k$
if $|Y_{\tau_{j-1}}|=k$ and $\Psi^-_{k,j}=0$, otherwise.
Then  by stochastic domination for the geometric distribution,
\begin{equation}
\label{eq:forWald}
\sum_{k=1}^{n-1} E[\Tabsj{k}^-] \leq E[\Tabs] \leq \sum_{k=1}^{n-1} 
E[\Tabsj{k}^+].
\end{equation}

\begin{theorem}
\label{thm:abs}
$$ 
\left(1-\frac{1}{r^2} \right)
 \sum_{k=1}^{n-1} \frac{1}{p_k^+} \leq E[\Tabs] \leq
\left(1+\frac{1}{r}\right) \sum_{k=1}^{n-1} \frac{1}{p_k^-}.$$
\end{theorem}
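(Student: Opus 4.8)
The plan is to combine the decomposition $\Tabs \sim \sum_{k=1}^{n-1}\Tabsj{k}$ (established immediately before the theorem statement) with a Wald-type identity applied to each piece $\Tabsj{k}^{\pm}$, and then feed in the bounds on $E[\gamma_j^{n+1}]$ from Corollary~\ref{ourcor}. By \eqref{eq:forWald} it suffices to show $E[\Tabsj{k}^+] \leq (1+\tfrac1r)\tfrac{1}{p_k^-}$ and $E[\Tabsj{k}^-] \geq (1-\tfrac1{r^2})\tfrac{1}{p_k^+}$, and then sum over $k$; the two inequalities are symmetric, so I will describe the upper-bound side and the lower-bound side will be entirely analogous with $p_k^+$ in place of $p_k^-$ and the constant $(1-\tfrac1{r^2})$ in place of $(1+\tfrac1r)$.

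First I would fix $k \in \{1,\ldots,n-1\}$ and examine $\Tabsj{k}^+ = \sum_{j=2}^{T_{n+1}-1}\Psi^+_{k,j}$, where $\Psi^+_{k,j}$ is geometric with parameter $p_k^-$ when $|Y_{\tau_{j-1}}| = k$ and is zero otherwise. The number of nonzero terms in this sum is exactly the number of active steps $j$ (with $2 \leq j \leq T_{n+1}-1$) at which the walk is at level $k$ — which, under the faithful simulation of $\calM$ described before the theorem, is precisely $\gamma_k^{n+1}$, the occupation time of state $k$ between visits to $n+1$. Crucially, the geometric random variables $\Psi^+_{k,j}$ are drawn with a fixed parameter $p_k^-$ independent of which size-$k$ set $Y_{\tau_{j-1}}$ actually is, so conditional on the trajectory of $\calM$ they are i.i.d., and the number of them, $\gamma_k^{n+1}$, is a stopping-time-like count determined by that trajectory. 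Wald's identity (or just linearity of expectation after conditioning on $\gamma_k^{n+1}$) then gives $E[\Tabsj{k}^+] = E[\gamma_k^{n+1}] \cdot \tfrac{1}{p_k^-}$. Applying the upper bound $E[\gamma_k^{n+1}] \leq 1+\tfrac1r$ from Corollary~\ref{ourcor} yields $E[\Tabsj{k}^+] \leq (1+\tfrac1r)\tfrac{1}{p_k^-}$. Summing over $k$ and invoking the right-hand inequality of \eqref{eq:forWald} gives the claimed upper bound on $E[\Tabs]$; the lower bound follows the same way using $E[\gamma_k^{n+1}] \geq 1-\tfrac1{r^2}$ and the left-hand inequality of \eqref{eq:forWald}.

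The main obstacle I anticipate is justifying the Wald step cleanly, since $\gamma_k^{n+1}$ is not a priori independent of the $\Psi^+_{k,j}$'s — both are functions of the same underlying randomness. The clean way around this is to observe that $\gamma_k^{n+1}$ depends only on the embedded jump chain $|Y_{\tau_1}|, |Y_{\tau_2}|, \ldots$ (i.e.\ the sequence of mutant-counts, which is governed by $\calM$ and the drift $p = r/(r+1)$), whereas the "holding times" $\tau_j - \tau_{j-1}$ — and hence the $\Psi_{k,j}$ and the dominating $\Psi^+_{k,j}$ — can be generated by independent auxiliary randomness once the jump chain is fixed. Concretely, I would condition on the entire jump-chain trajectory up to absorption; this fixes $\gamma_k^{n+1}$ and makes the nonzero $\Psi^+_{k,j}$'s i.i.d.\ $\mathrm{Geom}(p_k^-)$, so $E[\Tabsj{k}^+ \mid \text{jump chain}] = \gamma_k^{n+1}/p_k^-$, and taking expectations over the jump chain gives $E[\Tabsj{k}^+] = E[\gamma_k^{n+1}]/p_k^-$. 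One should also note that $T_{n+1} < \infty$ almost surely (the chain $\calM$ is irreducible on a finite state space, hence recurrent, as recorded in the preliminaries), so all these expectations are finite and the manipulations are legitimate.
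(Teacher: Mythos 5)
Your proposal is correct and follows essentially the same route as the paper: reduce to bounding $E[\Tabsj{k}^{\pm}]$ via \eqref{eq:forWald}, apply Wald's equality with $\gamma_k^{n+1}$ counting the geometric summands, and invoke Corollary~\ref{ourcor}. Your extra care in justifying the Wald step by conditioning on the jump chain is a sound elaboration of what the paper handles by citing Wald's equality directly together with the finiteness of $E[\gamma_k^{n+1}]$ and $1/p_k^-$.
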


\begin{proof} 
 
By~(\ref{eq:forWald}), $E[\Tabs]$ is at most
 $\sum_{k=1}^{n-1} 
E[\Tabsj{k}^+]$. Now $\Tabsj{k}^+$ is a sum of geometric random variables
with parameter~$p^-_k$. The number of random variables in the sum is
$\gamma^{n+1}_k$ which is the number of times that state~$k$
is visited between visits to state~$n+1$ in the Markov chain $(X_i)$.
Since $1/p^-_k$ and $E[\gamma^{n+1}_k]$ are both finite (see Corollary~\ref{ourcor}),
Wald's equality (see  \cite[Theorem 12.3]{MU}) guarantees
that $E[\Tabsj{k}^+] = E[\gamma^{n+1}_k]/p^-_k$.
The upper bound follows from Corollary~\ref{ourcor}. The lower bound is similar.
\end{proof}
 
\subsection{Upper and lower bounds} 
   
We start with the following observation.

\begin{observation}\label{ObsmS}
If $|S|=k$ then
$p(S) = 
\frac{r  \degplus{S}}{W_k \Delta} + 
\frac{\degminus{S}}{W_k \Delta} $
so 
$\frac{1}{p(S)}= \frac{W_k \Delta }{ r \degplus{S} +  \degminus{S}}
$.
\end{observation}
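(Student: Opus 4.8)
The plan is a direct computation from the definition of the discrete-time Moran process on a $\Delta$-regular digraph, using that $p(S)=\Pr(Y_{t+1}\neq S\mid Y_t=S)$ counts exactly those one-step transitions out of~$S$ that change the state. A step changes the state precisely when the vertex~$u$ chosen to reproduce and the out-neighbour~$v$ it selects have different types: either $u$~is a mutant and $v$~is not (so $v$~joins the mutants, $S\mapsto S\cup\{v\}$), or $u$~is a non-mutant and $v$~is a mutant (so $v$~leaves, $S\mapsto S\setminus\{v\}$). If $u$~and $v$~have the same type the state is unchanged, so only the ``boundary'' edges matter.

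First I would fix the size-$k$ state~$S$ and classify the relevant ordered edges $(u,v)\in E$: by the definitions of $\degplus{S}$ and $\degminus{S}$, there are $\degplus{S}$ edges with $u\in S$, $v\notin S$, and $\degminus{S}$ edges with $u\notin S$, $v\in S$. For a fixed edge $(u,v)$, the probability that this very edge is the one used in the step is (probability $u$~reproduces) $\times$ (probability $u$~then picks $v$ among its out-neighbours). Since $G$ is $\Delta$-regular, $\deg^+(u)=\Delta$, so the second factor is $1/\Delta$ for every vertex; the first factor is $r/W_k$ if $u$~is a mutant and $1/W_k$ if it is not. The events ``edge $(u,v)$ is used'' are mutually exclusive over distinct edges, and an active step occurs iff one of the $\degplus{S}+\degminus{S}$ boundary edges is used, so $p(S)$ is just the sum of these probabilities.

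Summing the contribution $\tfrac{r}{W_k}\cdot\tfrac1\Delta$ over the $\degplus{S}$ mutant-to-non-mutant edges and $\tfrac{1}{W_k}\cdot\tfrac1\Delta$ over the $\degminus{S}$ non-mutant-to-mutant edges gives $p(S)=\dfrac{r\,\degplus{S}+\degminus{S}}{W_k\Delta}$, and inverting yields the second identity. (The reciprocal is legitimate since, for $0<k<n$, strong connectivity of~$G$ forces $\degplus{S}\geq 1$, hence $p(S)>0$.) There is no genuine obstacle here; the only point requiring minor care is the bookkeeping — observing that each active step is ``active'' through exactly one edge, and that mutant-to-non-mutant and non-mutant-to-mutant edges are counted by the disjoint quantities $\degplus{S}$ and $\degminus{S}$, so there is no double counting.
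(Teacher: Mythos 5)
Your computation is correct and is exactly the justification the paper intends: the observation is stated without an explicit proof, but the same edge-by-edge accounting (each boundary edge $(u,v)$ is used with probability $\tfrac{r_{u}}{W_k}\cdot\tfrac1\Delta$, summed over the $\degplus{S}$ mutant-to-non-mutant and $\degminus{S}$ non-mutant-to-mutant edges) appears verbatim in the paper's discussion of active steps in Section~\ref{sec:active}. Your added remark that strong connectivity gives $p(S)>0$ for $0<k<n$, so the reciprocal is well defined, is a sensible (if minor) point the paper leaves implicit.
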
 
 
Putting Theorem~\ref{thm:abs} together with Observation~\ref{ObsmS} we get
the following. 
 
 \begin{corollary}\label{cor:together}
 The expected absorption time of the Moran process on a strongly connected $\Delta$-regular
$n$-vertex digraph~$G$ is 
 at least $$  
 \left(1-\frac{1}{r^2}\right)
  W_1 \Delta  
 \sum_{k=1}^{n-1} 
 \frac{1}
 {
 \max  \left\{ 
 r \degplus{S}+\degminus{S} \mid |S|=k \right\}
   }
 $$
   and is at most
 $$ \left(1+\frac1r\right)W_n\Delta \sum_{k=1}^{n-1} 
 \frac{1}
 {
 \min  \left\{ 
 r\degplus{S}+\degminus{S} \mid |S|=k \right\}} .$$
 \end{corollary}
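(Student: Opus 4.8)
The plan is to combine the two results that immediately precede the statement. Corollary~\ref{cor:together} is essentially the assertion of Theorem~\ref{thm:abs} with the quantity $1/p(S)$ rewritten using Observation~\ref{ObsmS}, so the proof is just a matter of substituting one into the other and extracting the extremal values.

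First I would recall from Theorem~\ref{thm:abs} that
\[
\left(1-\tfrac{1}{r^2}\right)\sum_{k=1}^{n-1}\frac{1}{p_k^+}\ \leq\ E[\Tabs]\ \leq\ \left(1+\tfrac1r\right)\sum_{k=1}^{n-1}\frac{1}{p_k^-},
\]
where $p_k^+=\max\{p(S)\mid |S|=k\}$ and $p_k^-=\min\{p(S)\mid |S|=k\}$. Next I would invoke Observation~\ref{ObsmS}, which gives $1/p(S)=W_k\Delta/(r\,\degplus{S}+\degminus{S})$ for any $S$ with $|S|=k$. Since $x\mapsto 1/x$ is decreasing on the positive reals, maximising $p(S)$ over size-$k$ sets corresponds to minimising the denominator $r\,\degplus{S}+\degminus{S}$ (the factor $W_k\Delta$ is the same for all size-$k$ sets), and minimising $p(S)$ corresponds to maximising that denominator. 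Hence
\[
\frac{1}{p_k^+}=\frac{W_k\Delta}{\max\{r\,\degplus{S}+\degminus{S}\mid |S|=k\}},\qquad
\frac{1}{p_k^-}=\frac{W_k\Delta}{\min\{r\,\degplus{S}+\degminus{S}\mid |S|=k\}}.
\]

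Finally I would substitute these expressions into the bounds of Theorem~\ref{thm:abs}. In the lower bound the factor $W_k\Delta$ appears inside the sum over~$k$; since $W_k=n+(r-1)k$ is increasing in~$k$ (as $r>1$), we have $W_k\geq W_1$ for all $k\geq 1$, so replacing $W_k$ by $W_1$ only decreases the sum and we may pull $W_1\Delta$ out of the sum, yielding exactly the stated lower bound. Symmetrically, in the upper bound $W_k\leq W_n$ for all $k\leq n$, so replacing $W_k$ by $W_n$ only increases each term and we pull $W_n\Delta$ out of the sum, giving the stated upper bound. There is no real obstacle here: the only point that needs a word of care is the monotonicity of $W_k$ in~$k$, which is immediate from $r>1$, and the direction of the inequality under $x\mapsto 1/x$, which determines whether the max or the min of the denominator appears. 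Everything else is direct substitution.
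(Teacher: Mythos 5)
Your proof is correct and is exactly the argument the paper intends: the paper gives no explicit proof, saying only that the corollary follows by ``putting Theorem~\ref{thm:abs} together with Observation~\ref{ObsmS}'', and your substitution of $1/p(S)=W_k\Delta/(r\,\degplus{S}+\degminus{S})$, the max/min swap under $x\mapsto 1/x$, and the bounds $W_1\leq W_k\leq W_n$ fill in precisely those omitted details.
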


We can now prove Theorem~\ref{thm:regbounds}.

\begin{reptheorem}{thm:regbounds}\THMregbounds{}\end{reptheorem}
\begin{proof}
If $|S|=k$ then we have the trivial bound 
$r \degplus{S} + \degminus{S}   \leq (r+1) k \Delta$,
which, together with Corollary~\ref{cor:together}, establishes
the lower bound.
If a digraph is  strongly connected, then $\degplus{S}$ and $\degminus{S}$ are
at least~$1$ when $1\leq |S|\leq n-1$ so
$r\degplus{S} + \degminus{S} \geq r+1$.
This, together with Corollary~\ref{cor:together}, establishes 
the upper bound.
\end{proof}
 
Note that  the upper bound in Theorem~\ref{thm:regbounds} generalises the one given in
Proposition~\ref{prop:oldupper} to the directed case.
The following observations follow from special cases of 
Corollary~\ref{cor:together}.

 \begin{observation} Suppose that the graph~$G$ is
the undirected clique~$K_n$ (which is $\Delta$-regular with $\Delta=n-1$).
In this case, for $S$ of size~$k$, 
$\degplus{S}= \degminus{S}=k(n-k)$,
so  Corollary~\ref{cor:together} shows that the absorption time
is at most 
$$  n \sum_{k=1}^{n-1} \frac{n-1}{k(n-k)} 
\leq 
  n \sum_{k=1}^{n-1} \frac{n-k}{k(n-k)} +
  n \sum_{k=1}^{n-1} \frac{k}{k(n-k)} 
 \leq 2  n H_{n-1},$$
matching the lower bound from Theorem~\ref{thm:regbounds} up to a constant factor (that 
depends only on~$r$
but not on~$n$).
\end{observation}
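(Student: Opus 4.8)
The plan is to instantiate the general upper bound of Corollary~\ref{cor:together} at $G=K_n$, where the boundary edge-counts $\degplus S$ and $\degminus S$ have a particularly simple closed form, and then to evaluate the resulting sum.

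First I would compute the boundary sizes. In $K_n$, viewed as a symmetric digraph, every ordered pair of distinct vertices is an edge, so for any $S$ with $|S|=k$ there are exactly $k(n-k)$ edges from $S$ to $V\setminus S$ and exactly $k(n-k)$ edges in the reverse direction; hence $\degplus S=\degminus S=k(n-k)$. In particular $r\degplus S+\degminus S=(r+1)k(n-k)$ depends only on $k$, so the minimum appearing in the denominator of the upper bound in Corollary~\ref{cor:together} is attained by every size-$k$ set and equals $(r+1)k(n-k)$.

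Next I would substitute this, together with $\Delta=n-1$ and $W_n=n+(r-1)n=rn$, into that upper bound. The prefactor becomes $(1+\tfrac1r)\cdot rn\cdot(n-1)$ and each summand becomes $\tfrac{1}{(r+1)k(n-k)}$; since $(1+\tfrac1r)r=r+1$ cancels the $(r+1)$ in the denominator, this leaves the clean, $r$-independent bound $n(n-1)\sum_{k=1}^{n-1}\tfrac{1}{k(n-k)}=n\sum_{k=1}^{n-1}\tfrac{n-1}{k(n-k)}$. To finish, I would bound the harmonic-type sum using $n-1\le n=(n-k)+k$ to split $\tfrac{n-1}{k(n-k)}\le\tfrac{1}{k}+\tfrac{1}{n-k}$; summing each term over $k\in\{1,\dots,n-1\}$ and re-indexing the second sum gives $H_{n-1}$ for each, so the total is at most $2nH_{n-1}$.

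Finally I would compare this with the lower bound $\left(\tfrac{r-1}{r^2}\right)nH_{n-1}$ from Theorem~\ref{thm:regbounds}: both bounds have the form (constant)$\cdot nH_{n-1}$, with ratio of constants equal to $\tfrac{2r^2}{r-1}$, which depends on $r$ but not on $n$, so the two match up to a factor constant in $n$. I do not expect a genuine obstacle here; the only points requiring a moment's care are checking that the $r$-dependent prefactors of Corollary~\ref{cor:together} cancel exactly against the $(r+1)$ in the denominator, so the displayed bound $2nH_{n-1}$ carries no hidden dependence on $r$, and that the minimum over size-$k$ sets is legitimately replaced by the common value $(r+1)k(n-k)$.
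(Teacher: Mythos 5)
Your proposal is correct and follows essentially the same route as the paper's own inline justification: substitute $\degplus S=\degminus S=k(n-k)$, $\Delta=n-1$ and $W_n=rn$ into Corollary~\ref{cor:together}, cancel the $(r+1)$ factors, and bound $\sum_{k=1}^{n-1}\tfrac{n-1}{k(n-k)}$ by splitting $n-1\le (n-k)+k$ to get $2H_{n-1}$. No issues.
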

 
 \begin{observation} \label{obs:cycle} Suppose that the graph~$G$ is
the undirected cycle~$C_n$ (which is $\Delta$-regular with $\Delta=2$).
Since the process starts with a single mutant, it is easy to see that
the set of mutant vertices must be connected, if it is non-empty.
Therefore, $\degplus{S}=\degminus{S}=2$ for any non-trivial $S$
that is reachable from the initial configuration,
so the absorption time is at least  
$ \left( 1-\frac{1}{r^2}\right) \frac{2n}{r+1} \sum_{k=1}^{n-1} \frac{1}{2} = \Omega(n^2)$, matching
the upper bound from   Theorem~\ref{thm:regbounds} up to a constant factor.
 \end{observation}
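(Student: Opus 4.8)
The plan is to exploit the very restricted geometry of the cycle: on $C_n$ the set of mutants reachable from a single initial mutant is always a contiguous arc, so there is essentially only one ``shape'' of configuration of each size, and Theorem~\ref{thm:abs} can be applied with the \emph{exact} sojourn parameters, rather than with the crude bounds used for Theorem~\ref{thm:regbounds}. (In particular one should work from Theorem~\ref{thm:abs} directly rather than from Corollary~\ref{cor:together}, since the latter maximises $r\degplus{S}+\degminus{S}$ over \emph{all} size‑$k$ sets, which on the cycle is far too lossy: a scattered set of size $k$ has boundary $\Theta(k)$, not $2$.)

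First I would prove, by induction on the number of active steps, that when the Moran process is run on $C_n$ with $n\geq 3$ starting from a single mutant, every state $Y_t$ is $\emptyset$, $V(G)$, or a non-empty arc of the cycle. The base case is immediate since $|Y_1|=1$. For the inductive step, suppose the current state is an arc $S$ with $1\leq|S|\leq n-1$. An active step either places a mutant's offspring on the unique non‑mutant neighbour of one of the (at most two) end vertices of $S$, extending it to an arc of size $|S|+1$, or places a non‑mutant's offspring on an end vertex of $S$, shrinking it to an arc of size $|S|-1$ (which is $\emptyset$ when $|S|=1$); a reproduction along any edge internal to $S$ or to $V(G)\setminus S$ leaves the state unchanged and hence is not active. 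Thus for each $k$ with $1\leq k\leq n-1$ the reachable configurations of size $k$ are exactly the arcs of that size, and for every such arc exactly two edges leave it and exactly two enter it, so $\degplus{S}=\degminus{S}=2$.

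Next, by Observation~\ref{ObsmS}, every reachable state $S$ of size $k$ (with $1\le k\le n-1$) has $p(S)=\tfrac{2r+2}{2W_k}=\tfrac{r+1}{W_k}$, the \emph{same} value for all of them. Hence, in the notation of Section~\ref{sec:active}, each nonzero $\Psi_{k,j}$ is geometric with parameter $\tfrac{r+1}{W_k}$, so $\Tabsj{k}$ is a sum of $\gamma_k^{n+1}$ geometric random variables with that parameter; exactly as in the proof of Theorem~\ref{thm:abs}, Wald's equality gives $E[\Tabsj{k}]=E[\gamma_k^{n+1}]\,W_k/(r+1)$. Finally, summing over $k$ and using Corollary~\ref{ourcor} (which gives $E[\gamma_k^{n+1}]\geq 1-\tfrac{1}{r^2}$) together with the trivial bound $W_k=n+(r-1)k\geq n$,
$$ E[\Tabs]=\sum_{k=1}^{n-1}E[\gamma_k^{n+1}]\,\frac{W_k}{r+1}\ \geq\ (n-1)\Bigl(1-\tfrac{1}{r^2}\Bigr)\frac{n}{r+1}\ =\ \Omega(n^2), $$
which matches the $O(n^2\Delta)=O(n^2)$ upper bound of Theorem~\ref{thm:regbounds} up to a factor depending only on~$r$.

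There is no genuinely hard step in this argument; the only point requiring care — and the reason the bound is tight here while the generic bound of Theorem~\ref{thm:regbounds} is not — is the combination of the structural claim (reachable states are arcs, so the boundary is always exactly $2$) with the observation that this lets one compute $E[\Tabsj{k}]$ exactly via Wald rather than merely bounding it.
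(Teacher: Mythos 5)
Your proposal is correct and follows essentially the same route as the paper: the key observation in both is that the reachable mutant configurations on $C_n$ are arcs, so $\degplus{S}=\degminus{S}=2$, and this is fed into the sojourn-time machinery of Section~\ref{sec:active} together with Corollary~\ref{ourcor} to get the $\Omega(n^2)$ bound. Your extra care in working from Theorem~\ref{thm:abs} (restricted to reachable states) rather than the literal statement of Corollary~\ref{cor:together} is a legitimate and correctly identified refinement, but it is exactly the implicit reading the paper intends when it writes ``reachable from the initial configuration.''
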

 
 \begin{observation} \label{obs:dircycle} Suppose
 that the graph~$G$ is
the directed $n$-vertex cycle (which is $\Delta$-regular with $\Delta=1$).
Again, the mutants remain connected; in this case $ \degplus{S}=\degminus{S}=1$ for any non-trivial $S$
so the absorption time is at least  
$ \left( 1-\frac{1}{r^2}\right) \frac{n}{r+1} \sum_{k=1}^{n-1} \frac{1}{2} = \Omega(n^2)$, matching
the upper bound from Theorem~\ref{thm:regbounds} up to a constant factor. \end{observation}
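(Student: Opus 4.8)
The plan is to combine a structural fact --- on the directed cycle the mutant set is always an arc --- with the lower bound of Corollary~\ref{cor:together}. Write $V(G)=\{0,1,\dots,n-1\}$ with $E=\{(i,\,i{+}1 \bmod n) : 0\le i\le n-1\}$, so every vertex has in-degree and out-degree $1$ and $\Delta=1$. I would first prove, by induction on the number of steps, that every nonempty mutant set reachable from a single-mutant start configuration is a contiguous arc $S=\{a,a{+}1,\dots,a{+}\ell{-}1\}$ (indices mod $n$), for some $a$ and some $1\le\ell\le n$. The base case is a singleton. For the inductive step, if $1\le|S|\le n-1$ then the only edge leaving $S$ is $(a{+}\ell{-}1,\,a{+}\ell)$ and the only edge entering $S$ is $(a{-}1,\,a)$: a reproduction along $(u,u{+}1)$ with $u$ and $u{+}1$ of the same type leaves $S$ unchanged; a reproduction with $u$ a mutant and $u{+}1$ a non-mutant forces $u=a{+}\ell{-}1$ and grows $S$ to the arc of length $\ell{+}1$; a reproduction with $u$ a non-mutant and $u{+}1$ a mutant forces $u{+}1=a$ and shrinks $S$ to the arc of length $\ell{-}1$ (possibly empty). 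In every case the successor state is again an arc, or one of the absorbing states, which completes the induction.

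It follows that for every reachable $S$ with $1\le|S|=k\le n-1$ we have $\degplus S=\degminus S=1$, hence $r\degplus S+\degminus S=r+1$. Since the process only ever visits arc states, I would apply Theorem~\ref{thm:abs} and Corollary~\ref{cor:together} with the maxima and minima taken over \emph{reachable} states of each size: this is legitimate because the per-visit holding time at a size-$k$ state is geometric with parameter $p(S)$ for the particular reachable state $S$ occupied, so the bounds only require $p_k^+$ (respectively $p_k^-$) to dominate (respectively be dominated by) $p(S)$ over reachable $S$ of size $k$. Substituting $\Delta=1$, $W_1=n+r-1\ge n$, and $\max\{r\degplus S+\degminus S : |S|=k\}=r+1$ into the lower bound of Corollary~\ref{cor:together} gives
\[
E[\Tabs]\ \ge\ \Bigl(1-\tfrac{1}{r^2}\Bigr)(n+r-1)\sum_{k=1}^{n-1}\frac{1}{r+1}\ \ge\ \Bigl(1-\tfrac{1}{r^2}\Bigr)\frac{n(n-1)}{r+1}\ =\ \Omega(n^2),
\]
matching the $O(n^2\Delta)=O(n^2)$ upper bound of Theorem~\ref{thm:regbounds} up to a factor depending only on $r$.

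The structural induction does all the real work here; the only point needing care is the restriction of the max/min in Corollary~\ref{cor:together} to reachable states, and everything after that is a direct substitution. I would note in passing that the identical argument handles the undirected cycle $C_n$ of Observation~\ref{obs:cycle}, with $\Delta=2$ and $\degplus S=\degminus S=2$, the connectedness of the mutant set following from the same arc-preservation argument applied to each edge in both orientations.
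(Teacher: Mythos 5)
Your proposal is correct and follows the paper's own (very terse) argument exactly: establish that the mutant set is always an arc, deduce $\degplus S=\degminus S=1$ for reachable non-trivial $S$, and substitute into the lower bound of Corollary~\ref{cor:together}; your explicit induction and your remark that the max in Corollary~\ref{cor:together} may be restricted to reachable states are just the details the paper leaves implicit. Your constant is in fact slightly better than the one stated in the observation (by a factor of $2$), which is harmless since both give $\Omega(n^2)$.
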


\subsection{Better upper bounds for
undirected graphs via isoperimetric numbers} 
\label{sec:iso}
 
Suppose that a graph~$G$ is undirected.  
As in the proof of  Proposition~\ref{prop:oldupper}, let 
$\partial S$ be
the  set of (undirected) edges
between 
vertices in~$S$ and vertices in~$V(G)\setminus S$. 
Then $\degplus{S}=\degminus{S}=|\partial S|$.
The  isoperimetric number of the graph~$G$  was defined by Buser~\cite{Buser}
as follows 
$$\iso(G) = \min\left\{
\frac{|\partial S|}{|S|} \mid S \subseteq V(G), 0 < |S| \leq |V(G)|/2
\right\}.$$ 
The quantity $\iso(G)$ is a discrete analogue of the Cheeger isoperimetric
constant.
For graphs with good  expansion, 
Theorem~\ref{thm:isoperimetric} improves
the upper bound in   Theorem~\ref{thm:regbounds}.

\begin{reptheorem}{thm:isoperimetric}\THMisoperimetric{}\end{reptheorem}
\begin{proof}
From Corollary~\ref{cor:together}, the
expected absorption time is at most
 
$$ 
  \frac{\Delta W_n}{r}  \sum_{k=1}^{n-1} 
\frac{1}
{ \min 
 \left\{  {|\partial_S|}
 \mid S \subseteq V(G), |S|=k
 \right\}}.$$
 This is  at most
 
\begin{align*} 
&\frac{\Delta W_n}{r} 
\left(2    \sum_{k=1}^{ \lfloor n/2\rfloor } 
\frac{1}
{ \min 
 \left\{  {|\partial_S|}
 \mid S \subseteq V(G), |S|=k
 \right\}}\right)
 \\
 &  = \frac{\Delta W_n}{r} \left(
2    \sum_{k=1}^{\lfloor n/2 \rfloor} 
\frac{1}
{ k \min 
 \left\{  \frac{|\partial_S|}{k}
 \mid S \subseteq V(G), |S|=k
 \right\}}\right)
 \\
 &\leq \frac{\Delta W_n}{i(G)r}
 \left(
2    \sum_{k=1}^{\lfloor n/2 \rfloor } 
\frac{1}
{ k  }
  \right)
= \frac{2 \Delta W_n H_{\lfloor n/2 \rfloor }}{i(G)r}. \qedhere
  \end{align*}
 \end{proof}

\subsection{Families for which the upper bound is optimal} 
\label{sec:family} 
 
For every fixed $\Delta>2$, we construct an infinite family of
connected, 
$\Delta$-regular undirected graphs for which the upper
bound in   Theorem~\ref{thm:regbounds} is optimal, up to a constant factor
(which may depend upon~$r$ and $\Delta$ but not on~$n$). 

To do this, we define the graph~$H_\Delta$ to be
$K_{\Delta-2,\Delta-1}$ with the addition of edges forming a cycle on
the side with $\Delta-1$ vertices.  Note that $\Delta-2$ vertices have
degree $\Delta-1$ and the others have degree~$\Delta$.

Now, let $G_{\ell,\Delta}$ be the $\Delta$-regular  
graph formed from a cycle $x_1\dots
x_\ell x_1$ and $\ell$~disjoint copies of~$H_\Delta$ by adding an
undirected edge between~$x_i$ and each of the vertices of degree
$\Delta-1$ in the $i$'th copy of~$H_\Delta$, for each $i\in[\ell]$.
Note that $|V(G_{\ell,\Delta})| = 2\ell(\Delta-1)$.

\begin{theorem}
\label{thm:new}
    For $r>1$ and sufficiently large~$\ell$ (with respect to~$r$), the
    expected absorption time of the Moran process on $G_{\ell,\Delta}$
    exceeds $\tfrac{1}{2r}(1-\tfrac{1}{r})\ell^2\!$.
\end{theorem}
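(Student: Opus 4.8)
The plan is to exhibit an event of probability at least roughly $\tfrac12\bigl(1-\tfrac1r\bigr)$ on which the process is not absorbed within $T$ steps, where $T$ is a fixed constant times $(\ell-1)\,n\,\Delta/r$ and $n=|V(G_{\ell,\Delta})|=2\ell(\Delta-1)$, and then to conclude via $E[\Tabs]\ge T\cdot\Pr(\Tabs>T)$. Since $\Delta>2$ forces $\Delta(\Delta-1)\ge 6$ and hence $n\Delta\ge 12\ell$, this $T$ is $\Omega(\ell^2)$ with room to spare. For $i\in[\ell]$ write $B_i$ for the \emph{block} consisting of $x_i$ together with the $i$'th copy of $H_\Delta$, so $|B_i|=2\Delta-2$ and the \emph{only} vertex of $B_i$ adjacent to anything outside $B_i$ is $x_i$, whose outside neighbours are precisely $x_{i-1}$ and $x_{i+1}$ (indices mod $\ell$). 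Thus a block currently containing no mutant can acquire one only through $x_i$ becoming a mutant, which requires a mutant at $x_{i-1}$ or $x_{i+1}$ to reproduce along the edge into $x_i$. I would let $R_t$ be the number of blocks that have contained a mutant at some time $\le t$; then $R_0=1$, $R_t$ is non-decreasing and changes by at most $1$ per step, and the set of ever-touched blocks is always a contiguous arc of the cycle $x_1\cdots x_\ell$, so each increase of $R_t$ — an \emph{extension} — is caused by the $x$-vertex at one of the two ends of the arc being chosen to reproduce and picking the edge leading to the next block. It is essential here to track \emph{ever}-touched blocks rather than currently-touched ones, since an interior block can lose all its mutants once its two neighbouring $x$-vertices are mutant, so the currently-touched set can fragment. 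Since at each step the conditional probability, given the past, that a fixed end $x$-vertex is the reproducing vertex is at most $r/W_k$ ($k$ the current number of mutants) and it then picks the outward out-edge with probability $1/\Delta$, the conditional probability of an extension is at most $2\cdot\tfrac{r}{W_k}\cdot\tfrac1\Delta\le\tfrac{2r}{n\Delta}$ (using $W_k\ge n$); hence the number of extensions in the first $T$ steps is stochastically dominated by $\mathrm{Bin}\bigl(T,\tfrac{2r}{n\Delta}\bigr)$.

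Now set $\sigma=\inf\{t: \text{the current state has no mutant, or }R_t=\ell\}$. The absorbing states are $\emptyset$ and $V(G_{\ell,\Delta})$, and in the latter every $x_i$ is a mutant so $R=\ell$; hence $\sigma\le\Tabs$ deterministically, and it suffices to bound $E[\sigma]$ from below. The state reaches $\emptyset$ only on the event of eventual extinction, whose probability is $1-f\le\tfrac1r$ because $G_{\ell,\Delta}$ is $\Delta$-regular (and connected, hence strongly connected), so its fixation probability $f$ is given by \eqref{eq:isothermal}. On the other hand $R_T=\ell$ forces at least $\ell-1$ extensions in the first $T$ steps, which by the domination above has probability at most $\Pr\bigl(\mathrm{Bin}(T,\tfrac{2r}{n\Delta})\ge\ell-1\bigr)$; choosing $T=\tfrac{(\ell-1)n\Delta}{4r}$ makes the mean of this binomial equal to $\tfrac{\ell-1}{2}$, so a Chernoff bound gives $e^{-\Omega(\ell)}$. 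Therefore, once $\ell$ is large enough in terms of $r$, $\Pr(\sigma>T)\ge 1-\tfrac1r-e^{-\Omega(\ell)}\ge\tfrac12\bigl(1-\tfrac1r\bigr)$, and so $E[\Tabs]\ge E[\sigma]\ge T\cdot\tfrac12\bigl(1-\tfrac1r\bigr)(1-o(1))=\tfrac{(\ell-1)n\Delta}{8r}\bigl(1-\tfrac1r\bigr)(1-o(1))$. Substituting $n\Delta=2\ell(\Delta-1)\Delta\ge 12\ell$ yields $E[\Tabs]\ge\tfrac{3(\ell-1)\ell}{2r}\bigl(1-\tfrac1r\bigr)(1-o(1))$, which comfortably exceeds $\tfrac{1}{2r}\bigl(1-\tfrac1r\bigr)\ell^2$ for all sufficiently large $\ell$.

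The step I expect to be the main obstacle is the structural bookkeeping behind the per-step extension bound: one must realise that the set of currently-touched blocks can fragment, and therefore work with the monotone quantity $R_t$, and one must check that a newly touched block is always reached \emph{through an end of the current arc}, so that only two out-edges can ever be ``responsible'' for an extension in a given step. Once that is in place, the remaining ingredients — the $\le\tfrac1r$ extinction bound from \eqref{eq:isothermal}, the stochastic domination by a binomial, the Chernoff estimate and the arithmetic with $n=2\ell(\Delta-1)$ — are routine.
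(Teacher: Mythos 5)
Your proof is correct, and it takes a genuinely different route from the paper's. The paper bounds the probability of fixation within $\ell^2/r$ steps by passing to the continuous-time process: it uses a Poisson tail bound to show that $\ell^2/r$ reproductions occur by continuous time $2\ell^2/(rn)$, invokes the coupling lemma to compare with a modified process $\Zt[t]$ in which mutants on the cycle $x_1\dots x_\ell x_1$ can never be displaced, and then bounds $|\Zt[t]\cap\{x_1,\dots,x_\ell\}|$ by a Poisson variable of rate $2r/\Delta$ per unit time. You instead stay entirely in discrete time and track the monotone quantity $R_t$, the number of blocks ever touched by a mutant; the observation that the ever-touched set is a contiguous arc whose extensions are each triggered by one of at most two specific vertices reproducing along one specific out-edge gives a per-step extension probability of at most $2r/(n\Delta)$, whence a binomial domination and a Chernoff bound. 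Your monotone bookkeeping is exactly the device that lets you avoid the non-monotonicity problem (the set of currently mutant cycle vertices can shrink) that pushed the paper into continuous time and the frozen process $\Zt[t]$; the resulting argument is more elementary and self-contained, needing neither the continuous-time model nor Lemma~\ref{lemma:couple}, at the cost of not exercising the machinery the paper develops for its other results. The extinction bound via \eqref{eq:isothermal}, the conclusion $E[\Tabs]\geq T\Pr(\Tabs>T)$, and the arithmetic with $n=2\ell(\Delta-1)$ are common to both, and your constants comfortably cover the stated $\tfrac{1}{2r}(1-\tfrac1r)\ell^2$ bound.
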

\begin{proof}
    Let $n=n(\ell,\Delta) = |V(G_{\ell,\Delta})|$ and let
    $\Tdisc=\ell^2/r$.

    Let $(Y_t)_{t\geq 1}$ be the discrete-time Moran process on~$G_{\ell,\Delta}$
    and consider the following events.  Let $\Fixation^*$~be the event
    that $Y_{\Tdisc+1} = V(G_{\ell,\Delta})$, i.e., that the process reaches
    fixation in at most $\Tdisc$~steps.  Let $\Extinction^*$~be the event
    that $Y_{\Tdisc+1} = \emptyset$,
    i.e., that the process reaches extinction in at most $\Tdisc$~steps.
    $\Pr(\Extinction^*)$~is at most the extinction probability, which
    is less than~$\tfrac{1}{r}$
since $G$ is regular, so   
the fixation probability is the quantity~$f$
from Lemma~\ref{lem:calc} \cite{LHN} which  exceeds $1-\tfrac1r$ for $r>1$.

    To bound $\Pr(\Fixation^*)$, consider the continuous-time version
    of the process, $\Yt[t]$.  We will show that, by time~$\Tcont =
    2\Tdisc/n = 2\ell^2/rn$, it is very likely that the continuous
    process will have had at least~$\Tdisc$ reproductions.  Let
    $\Slow$ (for ``slow'') be the event that the continuous process
    has not had $\Tdisc$~reproductions by time~$\Tcont$. Let
    $\tFixation^*$ be the event that it has
    reached fixation by time~$\Tcont$.  We have
    \begin{equation*}
        \Pr(\Fixation^*) 
       =
       \Pr(\Fixation^* \wedge \neg \Slow) + \Pr(\Fixation^* \wedge \Slow) 
        \leq \Pr(\tFixation^*) + \Pr(\Slow).
    \end{equation*}

    In the continuous process, 
each of the $n$~vertices reproduces at rate at least~$1$
    so the number~$N$ of reproductions up to time~$\Tcont$ is stochastically
    bounded below by a Poisson random variable with
parameter~$n\Tcont=2\ell^2/r$. 
By a Chernoff-type argument
    \cite[Theorem~5.4]{MU}, we have
    \begin{equation*}
        \Pr(\Slow) 
        = 
                 \Pr(N \leq \Tdisc)
            \leq e^{-n\Tcont} \left(\frac{en\Tcont}{\Tdisc}\right)^{\Tdisc}
            = e^{-2\ell^2/r} (2e)^{\ell^2/r}
            < \tfrac{1}{4}\left(1-\tfrac{1}{r}\right)\,,
    \end{equation*}
    for large enough~$\ell$.

    To bound $\Pr(\tFixation^*)$, consider the process $\Zt[t]$
    on~$G_{\ell,\Delta}$ that behaves like $\Yt[t]$ except for the two
    following points.
    \begin{itemize}
    \item For some~$i$, we have $\Yt[0]=\{x_i\}$ or $\Yt[0]$~is in the
        copy of $H_\Delta$ attached to~$x_i$.  Let $\Zt[0] = \Yt[0]\cup
        \{x_i\}$.
    \item No mutant in the cycle $x_1\dots x_\ell x_1$ can ever be
        replaced by a non-mutant.  That is, if, at time~$t$, a
        non-mutant neighbour of some~$x_i$ ($i\in [\ell]$) is selected
        to reproduce to~$x_i$, then the state does not change.
    \end{itemize}

    We couple the processes $\Yt[t]$ and $\Zt[t]$ as follows.  Let
    $t$~be such that $\Yt[t]\subseteq \Zt[t]$, noting that $t=0$ has
    this property.  The coupling lemma (Lemma~\ref{lemma:couple})
    allows us to maintain $\Yt[t+\tau]\subseteq \Zt[t+\tau]$ until the
    next time, $t'\!$, at which a mutant at one of the $x_i$~is
    replaced by a non-mutant in~$\Yt[t]$.  This maintains the property
    that $\Yt[t']\subseteq \Zt[t']$, so the coupling can be restarted from
    this point.

Now $\Pr(\tFixation^*)  = \Pr(\Yt[\Tcont]=V(G_{\ell,\Delta}))
\leq \Pr(\Zt[\Tcont]=V(G_{\ell,\Delta}))$, so we will find an upper bound
for   $\Pr(\Zt[\Tcont]=V(G_{\ell,\Delta}))$.
    Let $C = \{x_1, \dots, x_\ell\}$.  The set $\Zt[t]\cap C$ is
    non-empty, non-decreasing and connected in~$G_{\ell,\Delta}$.  
If $\Zt[t]\cap C$ is a proper subset of~$C$ then it
    increases exactly when one of the two mutants in~$C$ reproduces to
    its non-mutant neighbour in~$C$ or, if there is only one mutant
    in~$C$, when that mutant reproduces to either of its neighbours in
    the cycle.  In both cases, this happens with rate
    $\tfrac{2r}{\Delta}$, so $|\Zt[t]\cap C|$ 
 is bounded from above
by a Poisson random variable   with parameter $\tfrac{2r}{\Delta}t
$.  
Let $\lambda^*=4\ell/9$.
For $t=\Tcont$ 
the parameter is 
$\tfrac{2r}{\Delta}\Tcont = \tfrac{4 \ell^2}{n \Delta }
  < 
            \frac{4\ell}{\Delta^2}
                         \leq \lambda^*
$
so $|\Zt[t]\cap C|$ is bounded above by
a Poisson random variable~$\Psi^*$  
with parameter~$\lambda^*$.
Therefore,
 $        E[|\Zt[\Tcont] \cap C|] \leq \lambda^*$,
    and we have
    \begin{align*}
        \Pr\left(|\Zt[\Tcont]\cap C| \geq \tfrac{8}{9}\ell\right)
\leq \Pr(\Psi^* \geq 2 \lambda^*) \leq 
\left(\tfrac{e}{4}\right)^{\lambda^*}
            < \tfrac{1}{4}\left(1-\tfrac{1}{r}\right)\,,
    \end{align*}
    for large enough~$\ell$.
    Now,
    \begin{equation*}
        E[\Tabs] \geq E\left[\Tabs\mid
                          \overline{\Fixation^* \cup \Extinction^*}\,
                            \right]
                      \times \Pr\left(\overline{\Fixation^* \cup
                                     \Extinction^*}\right)\,.
    \end{equation*}
    Clearly, the expected absorption time of the discrete process
    conditioned on absorption not occurring 
    within~$\ell^2/r$ steps is at least $\ell^2/r$.  
    Meanwhile,
    \begin{equation*}
        \Pr\left(\overline{\Fixation^* \cup \Extinction^*}\right)
           > 1 - \tfrac{1}{r} - 2\tfrac{1}{4}\left(1-\tfrac{1}{r}\right)
           = \tfrac{1}{2}\left(1-\tfrac{1}{r}\right)
    \end{equation*}
    for large enough~$\ell$ (with respect to~$r$).
\end{proof}

Thus, we have shown that the $O(n^2)$ upper bound of
Theorem~\ref{thm:regbounds} is tight up to a constant factor (which may
depend on $r$ and~$\Delta$, but not on~$n$).

\begin{reptheorem}{thm:tight}\THMtight{}\end{reptheorem}
\begin{proof}
    For a given value of~$r$, Let $\ell_r$ be the smallest value
    of~$\ell$ for which that Theorem~\ref{thm:new} applies.  Take
    $\mathcal{G}=\{ G_{\ell,\Delta} \mid \ell \geq \ell_r \}$ and the
    result is immediate from Theorem~\ref{thm:new} and the fact that
    $G_{\ell,\Delta}$ has $2\ell(\Delta-1)$ vertices.
\end{proof}

\section{General digraphs} 
\label{sec:directed}

Fix $r>1$ and let $\epsilon_r = \min(r-1,\,1)$.
Theorem~7 of \cite{OurSODA} shows
that the expected absorption time of the Moran process on a 
connected $n$-vertex undirected graph 
is at most $(r/(r-1)) n^4$. 
Theorem~\ref{thm:regbounds}  
shows that
the expected absorption time on
a   strongly connected
$\Delta$-regular digraph is at most $n^2 \Delta$.
In contrast, we show that  there is an infinite family 
of strongly connected digraphs  such that the
expected absorption time of the Moran process on an $n$-vertex graph 
from the family  is $2^{\Omega(n)}\!$.

\subsection{The family of graphs}

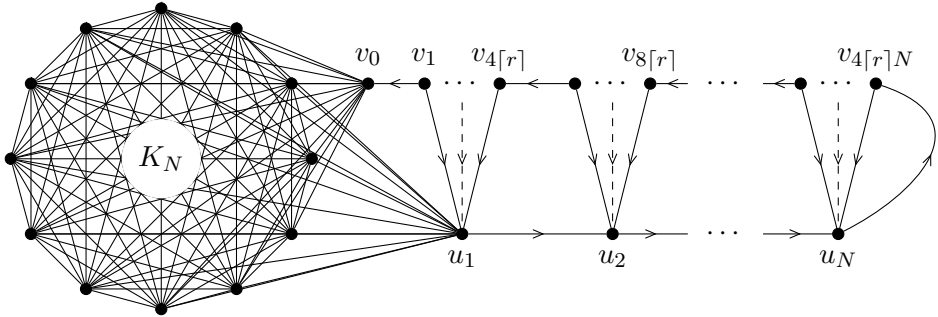
\begin{figure}[t]
\begin{center}
\begin{tikzpicture}
    \tikzstyle{vertex}=[fill=black, draw=black, circle, inner sep=1.5pt]

    \foreach \x in {0,...,11} \node[vertex] (\x) at (30*\x:2) {};

    \foreach \x in {0,...,10}
        \foreach \y in {\x,...,11}
            \draw (\x) -- (\y);

    \node[vertex] (u1) at (4,-1)   [label=-90:{$u_1$}] {};
    \node[vertex] (u2) at (6,-1)   [label=-90:{$u_2$}] {};
    \node              at (7.5,-1) {$\cdots$};
    \node[vertex] (uN) at (9,-1)   [label=-90:{$u_N$}] {};

    \begin{scope}[decoration={markings, mark=at position 0.55 with {\arrow{angle 45}}}]
    \draw[postaction={decorate}] (u1)--(u2);
    \draw[postaction={decorate}] (u2)--(7,-1);
    \draw[postaction={decorate}] (8,-1)--(uN);
    \end{scope}

    \node[vertex] (v0) at (2.75,1) [label={[label distance=2.5pt]90:{$\smash{v_0}$}}] {};

    \node[vertex] (v1) at (3.5,1) [label={[label distance=2.5pt]90:{$\smash{v_1}$}}] {};
    \node              at (4,1) {$\cdots$};
    \node[vertex] (v2) at (4.5,1) [label={[label distance=2.5pt]90:{$\smash{v_{4\ceil{r}}}$}}]{};

    \node[vertex] (v3) at (5.5,1) {};
    \node              at (6,1) {$\cdots$};
    \node[vertex] (v4) at (6.5,1) [label={[label distance=2.5pt]90:{$\smash{v_{8\ceil{r}}}$}}] {};

    \node              at (7.5,1) {$\cdots$};

    \node[vertex] (v5) at (8.5,1) {};
    \node              at (9,1) {$\cdots$};
    \node[vertex] (v6) at (9.5,1) [label={[label distance=2.5pt]90:{$\smash{v_{4\ceil{r}N}}$}}] {};

    \begin{scope}[decoration={markings, mark=at position 0.67 with {\arrow{angle 45}}}]
    \draw[postaction=decorate] (v1)--(v0);
    \draw[postaction=decorate] (v3)--(v2);
    \draw[postaction=decorate] (7,1)--(v4);
    \draw[postaction=decorate] (v5)--(8,1);
    \end{scope}

    \foreach \x in {0, ..., 11}
        \draw (\x)--(u1);

    \foreach \x in {0, ..., 11}
        \draw (v0)--(\x);

    \begin{scope}[decoration={markings, mark=at position 0.55 with {\arrow{angle 45}}}]
    \draw[postaction=decorate] (v1)--(u1);
    \draw[postaction=decorate] (v2)--(u1);

    \draw[postaction=decorate] (v3)--(u2);
    \draw[postaction=decorate] (v4)--(u2);

    \draw[postaction=decorate] (v5)--(uN);
    \draw[postaction=decorate] (v6)--(uN);

    \draw[postaction=decorate] (uN) .. controls (10,-0.5) and (11,0.5) .. (v6);
    \end{scope}

    \begin{scope}[decoration={markings, mark=at position 0.5 with {\arrow{angle 45}}}]
    \draw[postaction=decorate, dashed] (4,0.75)--(u1);
    \draw[postaction=decorate, dashed] (6,0.75)--(u2);
    \draw[postaction=decorate, dashed] (9,0.75)--(uN);
    \end{scope}

    \node[draw=white, fill=white, circle] at (0,0) {$K_N$};
\end{tikzpicture}
\caption{The graph $G_{r,N}$.  The edges within the clique are
   bidirectional; $v_0$ sends a directed edge to every vertex in the
  clique and $u_1$ receives one from each.  Other edges are directed
  as indicated.\label{fig:GrN}}
\end{center}
\end{figure}

Let $G_{r,N}$ be the disjoint union of the complete graph~$K_N$ (with bidirectional edges), a
directed path $P=u_1\dots u_N$ and a directed path $Q=v_{4\ceil{r}N}
\dots v_0$, along with the directed edge $(u_N,v_{4\ceil{r}N})$ and the following
directed edges (see Figure~\ref{fig:GrN}):
\begin{itemize}
\item $(x,u_1)$ and $(v_0,x)$ for every $x\in K_N$;
\item $(v_{4\ceil{r}(i-1)+j},u_i)$ for each $i\in[N]$, $j\in[4\ceil{r}]$.
\end{itemize}

The intuition is as follows.   Consider the Moran process on~$G_{r,N}$.
With probability close to~$\tfrac{1}{4\ceil{r}+2}$,   the initial mutant is in the clique
(Observation~\ref{obs:StartClique}).  
Conditioned on this, Lemmas~\ref{lem:SometimePlentyClique},
 \ref{lem:Extinction}
and~\ref{lem:Fixation} allow us to show that
it is fairly likely that 
there is a time during the first~$N^3$ steps when  the clique is
half full, but that
absorption does not happen in the first~$T^*(N)$ steps 
for a function~$T^*$ which is exponential in~$N$.
Of course, the expected absorption time conditioned on this is at least~$T^*(N)$,
so  we conclude (Theorem~\ref{thm:directed}) that the 
overall expected absorption time is at least~$T^*(N)$.
  
The  main challenge of the proof is the second step --- showing that
is   it is fairly likely that 
there is a time during the first~$N^3$ steps when the clique is half full, but that
absorption does not happen in the first~$T^*(N)$ steps.
The fact that the clique becomes half full
 (Lemma~\ref{lem:SometimePlentyClique}) follows by dominating the number
of mutants in the clique during the initial stages of the process by an
appropriate one-dimensional random walk, and then showing that
sufficiently many random-walk steps are actually taken during the 
first~$N^3$ steps of the process.
The fact that extinction is then unlikely in the first~$T^*(N)$ steps  
(Lemma~\ref{lem:Extinction}) follows
from the fact that the many mutants in the clique are unlikely to become
extinct very quickly. On the other hand, the fact that fixation is unlikely in the
first~$T^*(N)$ steps  (Lemma~\ref{lem:Fixation}) follows from the fact that
mutants make slow progress along the path~$P$ because 
because vertices in~$Q$ tend to push the ``mutant frontier'' backwards
towards~$u_1$. However, the chain of mutants has to
push all the way around this chain in order for fixation to occur.

\subsection{The one-dimensional random walk}

The following Lemma is Example~3.9.6 from~\cite{GS2001:Probability}.

\begin{lemma}
\label{lemma:absorb}
    Let $(Z_t)_{t\geq 0}$ be the random walk on $\{0, \dots, n\}$ with
    absorbing barriers at $0$ and~$n$ and, for $0<Z_t<n$, let $Z_{t+1} =
    Z_t+1$ with probability~$p\neq\tfrac12$ 
and $Z_{t+1}=Z_{t}-1$ with probability
    $q=1-p$.  Let $p_i$ be the probability of absorption at~$0$, given
    that $Z_0=i$.  Writing $\rho = q/p$,
    \begin{equation*}
        p_i = \frac{\rho^i - \rho^n}{1 - \rho^n}\,.
    \end{equation*}
\end{lemma}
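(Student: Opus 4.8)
The plan is to use a standard first-step analysis on the walk. Fix $n$ and condition on the first move from a state $i$ with $0<i<n$: with probability $q$ the walk goes to $i-1$ and with probability $p$ it goes to $i+1$, and by the Markov property it then continues as a fresh copy of the same process from the new location. This yields the linear recurrence
\[
 p_i = q\,p_{i-1} + p\,p_{i+1}, \qquad 0<i<n,
\]
together with the boundary conditions $p_0=1$ and $p_n=0$. To see that the $p_i$ are well defined as absorption probabilities (with no leftover ``escape'' mass), note that from every non-absorbing state there is a path of positive probability reaching $\{0,n\}$, so on the finite state space the walk is absorbed almost surely; moreover the recurrence plus the two boundary values determine $p_1,\dots,p_{n-1}$ uniquely, since the associated $(n-1)\times(n-1)$ linear system is nonsingular.

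Next I would solve the recurrence in closed form. Writing it as $p\,p_{i+1}-p_i+q\,p_{i-1}=0$, the characteristic polynomial is $px^2-x+q$, which (using $p+q=1$) factors as $(x-1)(px-q)$, so its roots are $1$ and $\rho=q/p$; because $p\neq\tfrac12$ these roots are distinct (and $\rho\neq 1$, so $1-\rho^n\neq 0$). Hence every solution of the recurrence has the form $p_i = A + B\rho^i$ for constants $A,B$.

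Finally I would fit the boundary conditions: $p_0=1$ gives $A+B=1$ and $p_n=0$ gives $A+B\rho^n=0$. Subtracting, $B(1-\rho^n)=1$, so $B = 1/(1-\rho^n)$ and $A = -B\rho^n = -\rho^n/(1-\rho^n)$; substituting back,
\[
 p_i = A + B\rho^i = \frac{\rho^i-\rho^n}{1-\rho^n},
\]
as claimed. There is no genuine obstacle in this argument: the only points requiring a moment's care are the exclusion of the degenerate case $p=\tfrac12$ (precisely what makes the two characteristic roots distinct and keeps $1-\rho^n$ nonzero) and the almost-sure-absorption remark that legitimises treating the recurrence as an exact identity rather than an inequality; both are routine for a finite-state random walk. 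Indeed this is Example~3.9.6 of~\cite{GS2001:Probability}, so one could alternatively simply cite it.
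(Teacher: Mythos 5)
Your proof is correct: it is the standard gambler's-ruin derivation via first-step analysis, solving the linear recurrence $p\,p_{i+1}-p_i+q\,p_{i-1}=0$ with boundary conditions $p_0=1$, $p_n=0$. The paper itself offers no proof, simply citing Example~3.9.6 of Grimmett and Stirzaker, which contains exactly this argument, so your write-up supplies the same reasoning the paper relies on.
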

 
\subsection{Bounding the absorption time}

Consider the Moran process 
$(Y_t)_{t\geq 1}$
on $G_{r,N}$.
We will assume that $N$ is sufficiently large with respect to~$r$ ---
the exact inequalities that we need will be presented as they arise in the proof.
Let $n = 1+(4\ceil{r}+2)N$ be the number of vertices of $G_{r,N}$.
Let $W_t = n + (r-1)|Y_t|$ be the total fitness of~$Y_t$.
Let~$\Tabs$ be the absorption time of the process.
Our goal is to show that $E[\Tabs]$ is exponentially large, as a function of~$N$.
Let  
$$T^*(N) = \floor{\left(\frac{\epsilon_r}{32}\right)(2^N-1)}$$
and let $N'$ denote $\floor{N/2}$.
We now identify various events which we will study in the lemmas
that follow.
\begin{itemize}
\item
Let $\StartClique$ be the event that $Y_1 \subseteq K_N$
(mnemonic: $\StartClique$ is the event that the initial 
mutant starts in the clique; $\mathcal S$ is for ``Starts'' and $\mathcal C$ is for ``Clique'').
\item Let $\PlentyClique{t}$ be the event
that $|Y_t \cap K_N| \geq N'$ (mnemonic: $\PlentyClique{t}$ is the event that
the clique is half full at time~$t$. $\mathcal H$ is for ``Half'').
\item Let $\SometimePlentyClique = \bigcup_{t\in [N^3+1]} \PlentyClique{t}$. 
\item Let $\Fixation^*$ be the event that $Y_{T^*(N)+1} = V(G_{r,N})$.
(mnemonic: $\Fixation^*$ is the event that fixation occurs   after at most $T^*(N)$ steps;
$\mathcal F$ is for ``Fixation'').
\item Let $\Extinction^*$ be the event that 
$Y_{T^*(N)+1}=\emptyset$ (mnemonic: $\Extinction^*$ is the event that
extinction occurs after at most $T^*(N)$ steps; $\mathcal{E}$ is for ``Extinction'').

\end{itemize}

\begin{observation}
\label{obs:StartClique}
$$\Pr(\StartClique)  =\frac{N}{n}= \frac{N}{N(4 \ceil{r}+2)+1}.$$
\end{observation}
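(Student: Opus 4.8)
The statement follows directly from counting vertices and the definition of the initial configuration. The plan is to confirm that $G_{r,N}$ has exactly $n = N(4\ceil{r}+2)+1$ vertices, and then observe that $\StartClique$ is by definition the event that the uniformly-random initial mutant lands among the $N$ vertices of $K_N$.

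First I would tally the vertex set of $G_{r,N}$ as given in its definition: the clique $K_N$ contributes $N$ vertices; the directed path $P = u_1\dots u_N$ contributes $N$ vertices; and the directed path $Q = v_{4\ceil{r}N}\dots v_0$ contributes $4\ceil{r}N+1$ vertices (the indices run from $0$ to $4\ceil{r}N$). These three parts are pairwise disjoint, so
\begin{equation*}
  |V(G_{r,N})| = N + N + (4\ceil{r}N+1) = (4\ceil{r}+2)N + 1 = n,
\end{equation*}
matching the value of $n$ fixed just before the observation.

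Then, recalling from the Preliminaries that the starting state $Y_1$ is chosen uniformly at random from the one-element subsets of $V(G_{r,N})$, the event $\StartClique = \{Y_1 \subseteq K_N\}$ holds precisely when the single initial mutant is one of the $N$ vertices of the clique. Hence $\Pr(\StartClique) = N/n = N/\big(N(4\ceil{r}+2)+1\big)$, as claimed. There is no real obstacle here; the only thing to be careful about is the off-by-one count of $Q$, i.e.\ that $v_0,\dots,v_{4\ceil{r}N}$ is $4\ceil{r}N+1$ vertices rather than $4\ceil{r}N$, which is exactly what produces the additive $1$ in the formula for $n$.
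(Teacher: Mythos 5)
Your proof is correct and matches the paper's (implicit) reasoning exactly: the paper states this as an unproved observation, relying on precisely the vertex count $n = N + N + (4\ceil{r}N+1) = (4\ceil{r}+2)N+1$ and the uniform choice of the initial mutant. Your care with the off-by-one in counting $Q$ is exactly the right detail to check.
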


\begin{lemma}
\label{lem:SometimePlentyClique}
$\Pr(\SometimePlentyClique \mid \StartClique) \geq  \epsilon_r/8.$
\end{lemma}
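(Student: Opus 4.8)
The plan is to condition on $\StartClique$ and track $M_t := |Y_t\cap K_N|$, the number of mutant clique vertices, which starts at $M_1=1$. Call a step of the Moran process \emph{clique-active} if $M$ changes at that step. The first thing to establish is a structural fact: inspecting the edge set of $G_{r,N}$, the only vertices with an out-edge into $K_N$ are the other clique vertices and $v_0$, and every clique vertex, as well as $v_0$, has out-degree exactly~$N$. From this, a direct computation of transition probabilities shows that, whenever $1\le M_t\le N'-1$ (recall $N'=\floor{N/2}$), the conditional probability that a clique-active step increases $M$ is at least $\widetilde p:=\frac{r}{\,r+1+2/N\,}$, \emph{irrespective of the rest of the configuration}: the only ``spurious'' contribution is one down-edge from $v_0$ when $v_0$ is a non-mutant, and one checks that $\frac{r(N-k)}{(r+1)(N-k)+1}\ge\widetilde p$ using $N-k\ge N/2$ for $k\le N'-1$ (if $v_0$ is a mutant the up-probability is even larger, exceeding $\frac{r}{r+1}$). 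This is the precise form of the paper's informal claim that the path has a ``negligible effect'' on the clique, and it is the conceptual heart of the proof.

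Next I would couple $(M_t)$, observed at its clique-active steps, with a one-dimensional biased random walk $Z$ on $\{0,\dots,N'\}$ having up-probability $\widetilde p$, absorbing barriers at $0$ and~$N'$, and $Z_0=1$, driving both by a single uniform variable at each clique-active step so that $M\ge Z$ is maintained until $M$ first leaves $[1,N'-1]$. Since $N$ is large with respect to~$r$ we have $\widetilde p>\tfrac12$, so Lemma~\ref{lemma:absorb} applies with $\rho=(1-\widetilde p)/\widetilde p=(1+2/N)/r<1$: the probability that $Z$ reaches $N'$ before $0$ is $\frac{1-\rho}{1-\rho^{N'}}\ge 1-\rho=\frac{r-1-2/N}{r}$, which is at least $\epsilon_r/4$ once $N\ge 4/(r-1)$. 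On the event that $Z$ reaches $N'$ before $0$, the walk $Z$ stays $\ge 1$ until it hits $N'$, so (by $M\ge Z$, valid until $M$ leaves $[1,N'-1]$) $M$ cannot reach $0$ first and, moving by $\pm1$, must reach $N'$; hence $\Pr(M\text{ reaches }N'\text{ before }0\mid\StartClique)\ge\epsilon_r/4$.

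It then remains to show this happens within the first $N^3+1$ steps. Let $\tau$ be the first time $M_t\in\{0,N'\}$; I would prove $\Pr(\tau>N^3+1\mid\StartClique)\le\epsilon_r/8$, whence, since on the event ``$M$ reaches $N'$ before $0$ and $\tau\le N^3+1$'' we get $\PlentyClique{t}$ for some $t\le N^3+1$, we conclude $\Pr(\SometimePlentyClique\mid\StartClique)\ge\epsilon_r/4-\epsilon_r/8=\epsilon_r/8$. To bound $\tau$, decompose it into the number $K$ of clique-active steps before $\{0,N'\}$ is hit plus the geometric gaps between consecutive clique-active steps. Coupling $M$ from below with an i.i.d.\ $\widetilde p$-walk, the drift $2\widetilde p-1\ge c_r>0$ towards $N'$ on $[1,N'-1]$ gives, by a Chernoff bound, $K\le C_r N$ except with probability $e^{-\Omega(N)}$. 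For the gaps: whenever $1\le M_t\le N'-1$ there are at least $k(N-k)\ge N/2$ mutant-to-non-mutant clique edges, so the probability that the next step is clique-active is at least $\frac{(r+1)k(N-k)}{W_t N}\ge\frac{r+1}{2rn}$, so each gap has mean $O_r(N)$ (using $n=\Theta(rN)$). Hence $E[\tau\mid K\le C_r N]=O_r(N^2)$, and Markov's inequality gives $\Pr(\tau>N^3)=O_r(1/N)$, which together with the $e^{-\Omega(N)}$ bound on $\{K>C_rN\}$ is below $\epsilon_r/8$ for $N$ sufficiently large with respect to~$r$.

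I expect the timing estimate of the last paragraph to be the main obstacle: the random-walk domination is clean but controls only the \emph{clique-active} steps, whereas $\SometimePlentyClique$ refers to actual Moran steps, and a priori the gaps between clique-active steps could be long. The quantitative input that saves it is that, while the clique is non-empty but at most half full, a constant ($\Omega(N)$ out of $\Theta(rN^2)$) fraction of the fitness-weighted edge mass is clique-active, so gaps are only $O_r(N)$ long and filling the clique to half takes $O_r(N^2)\ll N^3$ steps; everything else (the structural observation about $v_0$, the application of Lemma~\ref{lemma:absorb}) is routine.
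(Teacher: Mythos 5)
Your proposal is correct and follows essentially the same route as the paper: dominate the clique-mutant count, observed at its clique-active steps, from below by a biased walk with up-probability $r/(r+1+2/N)$, apply Lemma~\ref{lemma:absorb} to get probability at least $\epsilon_r/4$ of half-filling before emptying, and separately show the exit from $\{1,\dots,N'-1\}$ occurs within $N^3$ steps with failure probability at most $\epsilon_r/8$. The only (immaterial) difference is in that last timing step, where the paper applies a Chernoff bound to the number of clique-active steps among $N^3$ Bernoulli trials while you bound the expected sum of $O_r(N)$ geometric gaps by $O_r(N^2)$ and invoke Markov's inequality; both suffice.
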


\begin{proof} 

Let $Z'_t = |Y_t \cap K_N|$.  
We will condition on the fact that
$\StartClique$ occurs, so $Z'_1=1$.
If $Z'_t \in \{1,\ldots,N'-1\}$ then  
for all $z_t$ and $w_t$,
$$\Pr(Z'_{t+1}=z'_t+1\mid Z'_t=z_t, W_t=w_t)\geq \left(\frac{r z'_t}{w_t}\right)\left( \frac{N-z'_t}{N}\right).$$
(The probability is greater than this 
if $v_0$ is in~$Y_t$.)
Also,
$$\Pr(Z'_{t+1}=z'_t-1\mid Z'_t=z'_t,W_t=w_t) \leq 
       \left( \frac{1}{w_t}\right)\left(\frac{z'_t}{N} \right)+ \left(\frac{N-z'_t}{w_t}\right)\left(\frac{z'_t}{N}
       \right),$$       
where the first term comes from reproduction from a non-mutant  
at~$v_0$ and the second from reproduction within the clique. 
Also, $Z'_{t+1} \in \{Z'_t-1,Z'_t,Z'_t+1'\}$.
Now let
$$p' = \frac{r}{r+1+\tfrac{2}{N}} $$
and note that
$$\frac{\Pr(Z'_{t+1}=z'_t+1 \mid Z'_t=z'_t)}
{\Pr(Z'_{t+1}=z'_t+1\mid Z'_t=z'_t) + \Pr(Z'_{t+1}=z'_t-1\mid Z'_t=z'_t)} 
\geq
\frac{r(N-z'_t)}{r(N-z'_t)+1+(N-z'_t)}
\geq  
p'.$$ 
The restriction of~$(Z'_t)$ to steps where the state changes,
stopping when $Z'_t$ reaches~$0$ or~$N'$
is a process which
is dominated below by $Z_t$,   a random walk on $\{0,\ldots,N'\}$
that starts at~$1$ and absorbs at~$0$ and~$N'$ and has  parameter~$p'$.

Now let $\EmptyClique$ be the event that
there  is a $t$ with $Y_t\cap K_N=\emptyset$
 such that, for all $ t'<t$, we have $|Y_{t'}\cap K_N| < N'$.
$\EmptyClique$ is the event that the clique becomes empty before
it becomes half full.
Then applying Lemma~\ref{lemma:absorb} to the dominating
random walk,
$\Pr(\EmptyClique \mid \StartClique) \leq 
\frac{\rho - \rho^{N'}}{1 - \rho^{N'}}\leq \rho$, where 
$\rho=(1-p')/p' \leq (N+2)/(N+\epsilon_r N)$.
Since  we are taking~$N$ to be sufficiently large with respect
to $\epsilon_r$ (in particular, we will take
$N\geq 4/\epsilon_r$)
we have
$\rho \leq (1+\epsilon_r/2)/(1+\epsilon_r)$ which is at most $1-\epsilon_r/4$
since $\epsilon_r \leq 1$. 

Let $\Quick$ be the event
that there is a $t\in[N^3+1]$ with $|Y_t\cap K_n| \notin \{1,\ldots,N'-1\}$.
($\Quick$ is the event that the size of the clique
changes quickly --- it takes at most~$N^3$ steps to either become empty
or to become at least half full). 
 We will show below that
$\Pr(\Quick \mid \StartClique) \geq 1- \epsilon_r/8$.
Note that if $\Quick$ occurs but $\EmptyClique$ does not occur then
$\SometimePlentyClique$ occurs.
So
$$\Pr(\SometimePlentyClique \mid \StartClique) \geq
\Pr(\Quick \setminus \EmptyClique \mid \StartClique) \geq 
\Pr(\Quick \mid \StartClique) - \Pr(\EmptyClique \mid \StartClique) \geq
(1- \epsilon_r/8)-(1-\epsilon_r/4)\geq \epsilon_r/8,$$
which would complete the proof.

We conclude the proof, then, by showing
$\Pr(\neg \Quick \mid \StartClique) \leq \epsilon_r/8$.
So we will show
that
$\Pr(Z'_1,\ldots,Z'_{N^3+1} \in \{1,\ldots,N'-1\}  \mid Z'_1=1) \leq \epsilon_r/8$.
For this,
let $\delta=\epsilon_r/20$ and let $\Psi = N'/\delta$.
We require that $N$ is sufficiently large 
with respect to~$r$, so
$N^3 \geq 4 n \Psi$.

Let $\Upsilon_1,\ldots,\Upsilon_{1+\Psi}$ be 
$\Psi$ steps of a random walk on the integers that starts
with $\Upsilon_1=1$ and has $\Upsilon_{t+1}=\Upsilon_t+1$ with probability~$p'$
and $\Upsilon_{t+1}=\Upsilon_t-1$ with probability~$1-p'$.
It is likely that
the state $\Upsilon_t$ increases at least
$(1-\delta) p' \Psi$ times.
By a Chernoff bound (e.g., \cite[Theorem 4.5]{MU}), the probability that this does not happen
is at most $\exp(-p'\Psi \delta^2/2)$.
Since $N$ is sufficiently large with respect to~$r$ (and therefore $N'$ is
sufficiently large with respect to~$r$)
and $p'\geq r/(r+2)$,
this probability is at most $\epsilon_r/16$.
(This calculation is not tight in any way --- $\epsilon_r/16$ happens
to be sufficient.)
If there are at least $(1-\delta)p' \Psi$ increases
then 
$$\Upsilon_{\Psi+1} - \Upsilon_1 \geq (2(1-\delta)p'-1)\Psi.$$
But since 
$$\delta = \frac{\left(\frac{\epsilon_r}{4}\right)}{5}< \frac{\frac{\epsilon_r}{2}-\frac{1}{N}}{2+\frac{3 \epsilon_r}{2}+\frac{1}{N}},$$
we have 
$$p' = \frac{1+\epsilon_r}{2(1+\frac{\epsilon_r}{2}+\frac{1}{N})} > \frac{1+\delta}{2(1-\delta)}.$$
so 
\begin{equation}
\label{eq:driftup}
\Upsilon_{\Psi+1} - \Upsilon_1 \geq (2(1-\delta)p'-1)\Psi \geq \delta \Psi = N'.
\end{equation}

Now if $Z'_t \in \{1,\ldots,N'-1\}$ then  
$$\Pr(Z'_{t+1}\neq z'_t \mid Z'_t=z'_t,W_t=w_t) 
\geq \frac{(r+1)z'_t (N-z'_t)}{N w_t} \geq \frac{1}{2n}.$$
(Again, this calculation is not tight, but $1/(2n)$ suffices.)
If we select $N^3$ Bernoulli random variables, each with success probability $\frac{1}{2n}$,
then, by another Chernoff bound, the probability that we fail to get at least 
$N^3/(4n)\geq \Psi$ successes is  
at most $\exp(- N^3/(16n)) \leq \epsilon_r/16$.

Now $\Pr(Z'_1,\ldots,Z'_{N^3+1} \in \{1,\ldots,N'-1\}  \mid Z'_1=1)$
is at most the 
sum of two probabilities.
\begin{itemize}
\item The probability that
$Z'_1,\ldots,Z'_{N^3+1}$ are all in $\{1,\ldots,N'-1\}$ and
there fewer than~$\Psi$ values~$t$ with
$Z'_{t+1}\neq Z'_t$. This is dominated by the the
selection of $N^3$ Bernoulli random variables as above, and the probability
is at most $\epsilon_r/16$.
\item The probability that
$Z'_1,\ldots,Z'_{N^3+1}$ are all in $\{1,\ldots,N'-1\}$ and
there  are at least $\Psi$ values~$t$ with
$Z'_{t+1}\neq Z'_t$.
In order to bound this probability, imagine the
evolution of the process proceeding in two sub-steps at each step.
First, decide whether $Z'_{t+1}\neq Z'_t$ with the appropriate probability.
If so, select $Z'_{t+1} \in \{Z'_t+1,Z'_t-1\}$ with the appropriate probability.
The probability of the whole event is then dominated above by the probability that
$\Upsilon_{\Psi+1} - \Upsilon_1 < N'$, which is at most $\epsilon_r/16$, as
we showed above. (If this difference
is at least $N'$ then either the $Z'_t$ process hits $0$ before it changes
for the $\Psi$'th time, or
$Z'_t$ reaches $N'$.)\qedhere
\end{itemize}
\end{proof}

\begin{lemma}
\label{lem:PlentyClique}
$\Pr(Y_{T^*(N)+t}=\emptyset \mid \PlentyClique{t}\wedge \StartClique) \leq 
\epsilon_r/(32 (N^3+1)).$
\end{lemma}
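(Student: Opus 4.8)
The plan is to reduce the statement to an estimate on the number $Z'_s := |Y_s\cap K_N|$ of mutants in the clique, and then to bound the probability that $Z'$ descends all the way to~$0$ within $T^*(N)$ steps. First I would condition on $\PlentyClique{t}\wedge\StartClique$ once and for all, so that $Z'_t\ge N'$, and observe that since $\emptyset$ is absorbing and $Y_s=\emptyset$ forces $Z'_s=0$, the event $\{Y_{T^*(N)+t}=\emptyset\}$ is contained in the event that $Z'_s=0$ for some $s\in\{t+1,\dots,T^*(N)+t\}$ (note $Z'_t\ge N'\ge1$, so the first visit of $Z'$ to~$0$, if any, is strictly after~$t$). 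Thus it suffices to bound $\Pr\big(Z'_s=0\text{ for some }s\in\{t+1,\dots,T^*(N)+t\}\mid\PlentyClique{t}\wedge\StartClique\big)$.

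Second, I would establish a uniform upward bias for $Z'$ in the lower part of its range, exactly as in the proof of Lemma~\ref{lem:SometimePlentyClique}. Every clique vertex has in-degree and out-degree~$N$, and the only edges into the clique from the rest of $G_{r,N}$ are the $N$ edges out of~$v_0$; moreover $v_0$ (like every vertex of~$Q$) cannot be a mutant during the window unless the mutant frontier has already crossed~$P$, which is negligible on this timescale — this would be a short side remark, or simply dealt with by taking a worst case over the status of~$v_0$. Hence, whenever $1\le Z'_s=k\le N'$, the conditional probability, given that $Z'$ changes, that it decreases is at most $\frac{(N-k)+1}{r(N-k)+(N-k)+1}$, and since $N-k\ge N-N'\ge N/2$ and this expression increases in~$k$, it is at most some $q^\ast<\tfrac12$ with $q^\ast/(1-q^\ast)\le\rho:=\frac{N+2}{rN}$; here $\rho<1$ because $N(r-1)>2$, which holds as we assume $N\ge4/\epsilon_r$. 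Consequently, as long as $Z'$ stays in $\{1,\dots,N'\}$, the subsequence of its change-steps is stochastically dominated from below by the random walk of Lemma~\ref{lemma:absorb} with down-probability~$q^\ast$ on $\{0,\dots,N'\}$, so the probability $Z'$ reaches~$0$ before leaving $\{0,\dots,N'\}$ upward is at most the corresponding ruin probability.

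Third, I would bound the probability of an excursion to~$0$. If $Z'$ ever hits~$0$ after time~$t$, then (since $Z'_t\ge N'$ and $Z'$ moves by~$\pm1$) there is a last time~$s_1\in\{t,\dots,T^*(N)+t\}$ at which $Z'=N'$ before the first visit to~$0$; right after~$s_1$ the walk must step down off~$N'$ (otherwise it would be forced to revisit~$N'$) and is then confined to $\{0,\dots,N'-1\}$ until it reaches~$0$. Applying Lemma~\ref{lemma:absorb} to the dominating walk, the conditional probability of such an excursion given $Z'_{s_1}=N'$ is at most~$\rho^{N'-1}$. Union-bounding over the at most $T^*(N)+1$ possible values of~$s_1$ yields
\begin{equation*}
  \Pr\big(Z'_s=0\text{ for some }s\in\{t+1,\dots,T^*(N)+t\}\mid\PlentyClique{t}\wedge\StartClique\big)\ \le\ (T^*(N)+1)\,\rho^{N'-1}.
\end{equation*}

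Finally, using $T^*(N)\le\tfrac{\epsilon_r}{32}2^N$ and bounding $\rho^{N'-1}=\big(\tfrac{N+2}{rN}\big)^{N'-1}$, I would verify that the right-hand side is at most $\epsilon_r/(32(N^3+1))$ for all $N$ sufficiently large with respect to~$r$, which is the claim. I expect this last, purely arithmetic, step to be the main obstacle: one needs the exponential decay from $\rho^{N'}$ to swamp $T^*(N)=\Theta(2^N)$ and still leave the factor $1/(N^3+1)$, so the crude union bound over all $T^*(N)$ time steps may be too lossy; a sharper version counts only genuinely distinct excursions, exploiting that between two visits of $Z'$ to~$N'$ the clique count drifts back up toward~$N$ so that reaching~$N'$ from above is itself rare, and correspondingly one must be careful about the exact range on which the drift bound is applied and about how large $N$ must be taken relative to~$r$.
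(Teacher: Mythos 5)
Your proposal follows the paper's proof of Lemma~\ref{lem:PlentyClique} essentially step for step: reduce to the clique count $Z'_s=|Y_s\cap K_N|$, dominate its change-steps below $N'$ from beneath by the biased walk of Lemma~\ref{lemma:absorb} with ratio $\rho=(N+2)/(rN)$ (the paper weakens this to $(N+2)/((1+\epsilon_r)N)$ via $r\ge 1+\epsilon_r$), bound each downward excursion from $N'$ by $\rho^{N'-1}$ via gambler's ruin, and union-bound over at most $T^*(N)$ excursions. The paper indexes the excursions by the number $\Gamma$ of visits to $N'$ before the clique empties, you index them by the last visit time $s_1$; the resulting bound $T^*(N)\,\rho^{N'-1}$ is essentially the same. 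Your handling of $v_0$ by taking the worst case over its status is exactly what the paper's increase/decrease estimates do.

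The one point where you stop short --- the final verification that $T^*(N)\,\rho^{N'-1}\le \epsilon_r/(32(N^3+1))$ --- is also the one point where your instinct is better than the paper's. The paper simply asserts that $N$ can be chosen so that $T^*(N)\le \frac{\epsilon_r}{32(N^3+1)}\bigl(\frac{N+\epsilon_r N}{N+2}\bigr)^{N'-1}$, but since $\epsilon_r\le 1$ the right-hand side is $O\bigl((1+\epsilon_r)^{N/2}/N^3\bigr)=O(2^{N/2}/N^3)$ while $T^*(N)=\Theta(2^N)$, so the asserted inequality fails for every $r>1$ once $N$ is large. The obstacle you flag is therefore real and is not actually resolved in the paper. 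Note, though, that your proposed remedy (counting only ``genuinely distinct'' excursions) does not suffice either: even the optimal version of that idea only improves the decay from $\rho^{N'}$ to roughly $\rho^{N}$, and $(1/\rho)^{N}\approx r^{N}$ still loses to $2^{N}$ for $1<r\le 2$. The union bound is not the lossy step; the definition of $T^*(N)$ is simply too ambitious. The clean repair is to replace $T^*(N)$ by an exponential whose base depends on $r$, for example $\floor{(\epsilon_r/32)(1+\epsilon_r)^{N/4}}$, which makes this lemma's arithmetic go through, leaves Lemma~\ref{lem:Fixation} intact (its bound $T^*(N)\gamma\le T^*(N)/(2^N-1)$ only improves), and still yields the exponential lower bound of Theorem~\ref{thm:directed}.
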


\begin{proof}

As in the proof of Lemma~\ref{lem:SometimePlentyClique},
let $Z'_t = |Y_t \cap K_N|$.  
If $\PlentyClique{t}$ holds then $Z'_t \geq N'$.
Let $\Gamma$ be the number of distinct values $t'$
with $t < t' < \inf\{t''>t \mid Z'_{t''}=0\}$
satisfying $Y_{t'}=N'$.
We will show that $\Pr(\Gamma < T^*(N)) \leq \epsilon_r/(32 (N^3+1))$.  

For any $T>t$, suppose that $Z'_T=N'-1$ and let
$T' = \inf\{t''>t \mid Z'_{t''}\in\{0,N'\} \}$.
Let $\pi = \Pr(Z'_{T'}=0)$.
By the argument in the proof of Lemma~\ref{lem:SometimePlentyClique},
$\pi$ is at most the probability that a random walk $Z_t$
on $\{0,\ldots,N'\}$ which starts at $N'-1$ and absorbs at~$0$ and~$N'$ and
has parameter~$p'$ absorbs at~$0$.
By Lemma~\ref{lemma:absorb},
$$\pi \leq \frac{\rho^{N'-1}-\rho^N}{1-\rho^N} \leq \rho^{N'-1},$$
where $\rho=(N+2)/(N+\epsilon_r N)<1$.
Then $\Pr(\Gamma < T^*(N)) \leq  T^*(N) \rho^{N'-1}$.
We will choose
$N$ to be sufficiently large 
so that
$$T^*(N)= \floor{\left(\frac{\epsilon_r}{32}\right)(2^N-1)} \leq  
\left(\frac{\epsilon_r}{32(N^3+1)}\right){\left(\frac{N+\epsilon_r N}{N+2}\right)}^{N'-1}.$$
Then 
$\Pr(\Gamma < T^*(N)) \leq T^*(N) \rho^{N'-1}  \leq \epsilon_r/(32 (N^3+1))$, which completes the proof. \end{proof}

\begin{lemma}
\label{lem:Extinction}
$\Pr(\Extinction^* \mid \SometimePlentyClique \wedge \StartClique) \leq \epsilon_r/32.$
\end{lemma}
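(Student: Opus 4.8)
The plan is to reduce Lemma~\ref{lem:Extinction} to Lemma~\ref{lem:PlentyClique} by a union bound over the at most $N^3+1$ times at which the clique could first become half-full, together with a renormalisation of the conditioning. First I would write
$$
\Pr(\Extinction^* \mid \SometimePlentyClique \wedge \StartClique)
  = \frac{\Pr(\Extinction^* \cap \SometimePlentyClique \mid \StartClique)}
         {\Pr(\SometimePlentyClique \mid \StartClique)}\,,
$$
which is legitimate because $\Pr(\StartClique)=N/n>0$ (Observation~\ref{obs:StartClique}) and $\Pr(\SometimePlentyClique \mid \StartClique)\ge \epsilon_r/8>0$ (Lemma~\ref{lem:SometimePlentyClique}).

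Since $\SometimePlentyClique=\bigcup_{t\in[N^3+1]}\PlentyClique{t}$, a union bound gives $\Pr(\Extinction^*\cap\SometimePlentyClique\mid\StartClique)\le\sum_{t=1}^{N^3+1}\Pr(\Extinction^*\cap\PlentyClique{t}\mid\StartClique)$. The key step is to observe that $\emptyset$ is absorbing, so the event $\Extinction^*=\{Y_{T^*(N)+1}=\emptyset\}$ forces $Y_{T^*(N)+t}=\emptyset$ for every $t\ge1$; hence $\Extinction^*\cap\PlentyClique{t}\subseteq\{Y_{T^*(N)+t}=\emptyset\}\cap\PlentyClique{t}$, and Lemma~\ref{lem:PlentyClique} applied at time~$t$ yields
$$
\Pr(\Extinction^*\cap\PlentyClique{t}\mid\StartClique)
  \le \Pr\bigl(Y_{T^*(N)+t}=\emptyset \mid \PlentyClique{t}\wedge\StartClique\bigr)\,\Pr(\PlentyClique{t}\mid\StartClique)
  \le \frac{\epsilon_r}{32(N^3+1)}\,\Pr(\PlentyClique{t}\mid\StartClique)\,.
$$

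Summing over $t$ and dividing by $\Pr(\SometimePlentyClique\mid\StartClique)$, I would use that $\PlentyClique{t}\subseteq\SometimePlentyClique$ for each $t\in[N^3+1]$, so that $\Pr(\PlentyClique{t}\mid\StartClique)=\Pr(\PlentyClique{t}\mid\SometimePlentyClique\wedge\StartClique)\,\Pr(\SometimePlentyClique\mid\StartClique)$ and the factors of $\Pr(\SometimePlentyClique\mid\StartClique)$ cancel. This leaves
$$
\Pr(\Extinction^*\mid\SometimePlentyClique\wedge\StartClique)
  \le \frac{\epsilon_r}{32(N^3+1)}\sum_{t=1}^{N^3+1}\Pr(\PlentyClique{t}\mid\SometimePlentyClique\wedge\StartClique)
  \le \frac{\epsilon_r}{32(N^3+1)}\,(N^3+1) = \frac{\epsilon_r}{32}\,,
$$
using only the trivial bound that each of the $N^3+1$ summands is at most~$1$.

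There is no genuine analytic obstacle here: all the substance — the random-walk domination showing that extinction from a half-full clique within $T^*(N)$ steps has exponentially small probability — is already encapsulated in Lemma~\ref{lem:PlentyClique}. The only point that needs care is the normalisation, and this is exactly why Lemma~\ref{lem:PlentyClique} carries the extra $N^3+1$ in its denominator: after union-bounding over the $N^3+1$ candidate half-full times and cancelling $\Pr(\SometimePlentyClique\mid\StartClique)$, the bound comes out as $\epsilon_r/32$ rather than something weaker. It is worth noting that the argument requires no lower bound on $T^*(N)$ relative to $N^3$, since the inclusion $\Extinction^*\subseteq\{Y_{T^*(N)+t}=\emptyset\}$ holds for every $t\ge1$ purely because $\emptyset$ is absorbing.
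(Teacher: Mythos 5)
Your proposal is correct and follows essentially the same route as the paper's own proof: condition on \StartClique{}, union-bound $\SometimePlentyClique$ over the $N^3+1$ times $t$, use that $\emptyset$ is absorbing to pass from $\Extinction^*$ to $\{Y_{T^*(N)+t}=\emptyset\}$, apply Lemma~\ref{lem:PlentyClique}, and cancel the normalising factor via $\PlentyClique{t}\subseteq\SometimePlentyClique$. The only cosmetic difference is where the cancellation happens (you fold $\Pr(\PlentyClique{t}\mid\StartClique)$ into a conditional probability bounded by~$1$, while the paper bounds the ratio $\Pr(\PlentyClique{t}\mid\StartClique)/\Pr(\SometimePlentyClique\mid\StartClique)$ by~$1$ directly), which is the same estimate.
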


\begin{proof}

This follows easily from Lemma~\ref{lem:PlentyClique}
using the following summation.
\begin{align*}
\Pr(\Extinction^* \mid \SometimePlentyClique \wedge \StartClique) 
&= \frac{\Pr(\Extinction^* \wedge \SometimePlentyClique \wedge \StartClique)}
{\Pr(\SometimePlentyClique \wedge \StartClique)} \leq
\frac
{\sum_{t\in [N^3+1]} \Pr(\Extinction^* \wedge \PlentyClique{t} \wedge \StartClique)}
{\Pr(\SometimePlentyClique \wedge \StartClique)} \\
&=
\sum_{t\in[N^3+1]} \frac
{\Pr(\Extinction^* \mid \PlentyClique{t} \wedge \StartClique)\Pr(\PlentyClique{t} \mid \StartClique) \Pr(\StartClique)}
{\Pr(\SometimePlentyClique \mid \StartClique) \Pr(\StartClique)}
\\
&\leq 
\sum_{t\in[N^3+1]} 
\Pr(\Extinction^* \mid \PlentyClique{t} \wedge \StartClique)
\\
& \leq
\sum_{t\in[N^3+1]} 
\Pr( Y_{T^*(N)+t}=\emptyset \mid \PlentyClique{t} \wedge \StartClique).\qedhere
\end{align*}
 \end{proof}

\begin{lemma}
\label{lem:Fixation}
$\Pr(\Fixation^* \mid \StartClique) \leq \epsilon_r/32.$
\end{lemma}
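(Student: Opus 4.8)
The plan is to reduce $\Fixation^*$ to the event that a single biased random walk reaches its upper barrier quickly. The starting point is that fixation cannot happen until the mutant ``frontier'' has been driven all the way along $P$ against the non-mutants in $Q$. Since the only edge of $G_{r,N}$ entering $Q$ from outside $Q$ is $(u_N,v_{4\ceil{r}N})$, no vertex of $Q$ can be a mutant before $u_N$ is; in particular, as $\Fixation^*$ puts $v_{4\ceil{r}N}$ in $Y_{T^*(N)+1}$, the vertex $u_N$ must be a mutant at some step $t\le T^*(N)$. Writing $T_P$ for the first step at which $u_N$ is a mutant, it therefore suffices to bound $\Pr(T_P\le T^*(N)\mid\StartClique)$, and for every $t<T_P$ we have $Y_t\cap Q=\emptyset$.

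To control $T_P$ I would track $L_t=\max\{i: u_i\in Y_t\}$ (with $L_t=0$ when $Y_t\cap P=\emptyset$), the position of the leading mutant on $P$, so that $L_1=0$ given $\StartClique$ and $T_P=\min\{t: L_t=N\}$. While $Y_t\cap Q=\emptyset$ and $1\le L_t\le N-1$, the only way $L_t$ can increase is for $u_{L_t}$ to be chosen to reproduce, since it has out-degree $1$ and its out-neighbour $u_{L_t+1}$ is a non-mutant; this has probability $r/W_t$ and raises $L_t$ by exactly $1$. On the other hand $L_t$ drops (by at least $1$) whenever $u_{L_t}$ is overwritten by a non-mutant, and each of the $4\ceil{r}$ in-neighbours of $u_{L_t}$ lying in $Q$ is a non-mutant of out-degree $2$, so this has probability at least $2\ceil{r}/W_t$. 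Hence, conditioned on $L_t$ changing, it increases with probability at most $p':=r/(r+2\ceil{r})$, uniformly over the configuration and history; and from $L_t=0$ any change is an increase to $1$.

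From here the argument is by stochastic domination and excursions. Driving the comparison walk by a common uniform at each change-step of $L_t$, arranged so that the walk rises whenever $L_t$ does --- possible exactly because $L_t$'s conditional up-probability never exceeds $p'$ --- shows that, observed up to the first time it hits $N$, the sequence of values of $L_t$ at its change-steps is dominated above by successive copies of the gambler's-ruin walk on $\{0,\dots,N\}$ with up-probability $p'$ and down-probability $q':=2\ceil{r}/(r+2\ceil{r})$, each restarted at $1$ after an absorption at $0$. Since $q'>p'$ and $\rho':=q'/p'=2\ceil{r}/r\ge 2$ (using $\ceil{r}\ge r$), Lemma~\ref{lemma:absorb} gives that each such excursion is absorbed at $N$ rather than $0$ with probability $\bar g'=(\rho'-1)/((\rho')^N-1)=1/(1+\rho'+\dots+(\rho')^{N-1})\le 1/(2^N-1)$. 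The excursions begin at distinct Moran steps, so at most $T^*(N)$ of them begin within the first $T^*(N)$ steps of the process; applying the strong Markov property at the start of each, $\Pr(T_P\le T^*(N)\mid\StartClique)\le T^*(N)\,\bar g'\le T^*(N)/(2^N-1)\le\epsilon_r/32$, where the last inequality is just $T^*(N)\le(\epsilon_r/32)(2^N-1)$.

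The step I expect to be the main obstacle is picking the right scalar statistic and making the domination airtight. It is tempting (and is the informal picture) to track $|Y_t\cap P|$, but fixation needs \emph{all} of $P$ to become mutant, which may happen only after $Q$ has started to fill with mutants, so that quantity is not cleanly controlled all the way to $N$; using $L_t$ instead ties ``$L_t=N$'' exactly to ``$u_N$ is a mutant'', keeping the whole relevant history inside the regime where $Q$ is mutant-free. Within the coupling, the one delicate point is that a single step can send $L_t$ down by more than one while the comparison walk moves only by $\pm1$; driving both by a common uniform so that the walk rises whenever $L_t$ rises keeps the walk weakly above $L_t$ however far $L_t$ falls. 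It is also worth checking explicitly that reproduction by $u_{L_t}$ is the \emph{only} mechanism that can increase $L_t$ --- this is where both the out-degree-one structure of $P$ and the absence of mutants in $Q$ are used --- and that the choice of $4\ceil{r}$ in-edges per vertex of $P$ is precisely what makes $\rho'\ge 2$, matching the base $2$ appearing in $T^*(N)$.
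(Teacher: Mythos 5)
Your proposal is correct and follows essentially the same route as the paper: track the largest mutant index on $P$ while $Q$ is mutant-free, dominate it by a gambler's-ruin walk whose down-moves (driven by the $4\ceil{r}$ in-edges from $Q$, each with out-degree $2$) outweigh the single up-move by at least $2:1$, apply Lemma~\ref{lemma:absorb} to get a per-excursion success probability of at most $1/(2^N-1)$, and union-bound over at most $T^*(N)$ excursions. Your explicit handling of the frontier dropping by more than one in a single step is a point the paper glosses over, but otherwise the arguments coincide.
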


\begin{proof}

Recall the paths $P=u_1\dots u_N$ and~$Q=v_{4\ceil{r}N} \cdots v_0 $ in~$G_{r,N}$. Define $U_t$ 
as follows. 
\begin{itemize}
\item If $Y_t \cap Q$ is non-empty then $U_t=N$.
\item If $Y_t \cap Q$ and $Y_t \cap P$ are both empty then $U_t=0$.
\item If $Y_t \cap Q$ is empty and $Y_t \cap P$ is non-empty
then $U_t = \max\{i \mid u_i \in Y_t\}$.
\end{itemize}
Let $\tau = \inf\{t \mid U_t=N\}$.
We will show that $\Pr(\tau < T^*(N))\leq \epsilon_r/32$.
If $U_t \in \{1,\ldots,N-1\}$ then 
$\Pr(U_{t+1}=U_t+1) = \frac{r}{W_t}$ 
and $\Pr(U_{t+1} = U_t-1) \geq \left(\frac{4\ceil{r}}{W_t}\right)
\left(\frac12\right)\geq \frac{2r}{W_t}$.
Also, $U_{t+1} \in \{U_{t-1},U_t,U_t+1\}$.

Let $F_t = \inf\{t'>t \mid U_{t'}\in\{0,N\}\}$
and $\gamma = \Pr(U_{F_t}=N \mid U_t=1)$.
$\gamma$ is at most the probability of absorbing at~$N$
in a random walk on $\{0,\ldots,N\}$ that starts at~$1$,
absorbs at~$0$ and~$N$ and has parameter~$1/3$ (twice as likely to
go down as to go up). By Lemma~\ref{lemma:absorb}
$$\gamma \leq 1-\left(\frac{ \rho-\rho^N}{1-\rho^N}\right),$$ where $\rho=2$,
so $\gamma \leq1/(2^N-1)$. 
Now let $\Psi$ be the number of times~$t<\tau$
with $U_t=0$. 
Then $\Psi \leq \tau$ so
\begin{equation*}
\Pr(\tau < T^*(N)) \leq \Pr (\Psi < T^*(N)) \leq T^*(N) \gamma \leq \epsilon_r/32.
\qedhere
\end{equation*}
\end{proof}

Putting together the lemmas in this section, we 
prove \textbf{Theorem~\ref{thm:directed}}.
Recall that $T^*(N) = \floor{\left(\frac{\epsilon_r}{32}\right)(2^N-1)}$.
For convenience, we restate the theorem using this notation.

\begin{theorem}
\label{thm:newdirected}
Fix $r>1$ and let $\epsilon_r = \min(r-1,\,1)$.
  Suppose that~$N$ is sufficiently large with respect to~$r$
and
consider the Moran process 
$(Y_t)_{t\geq 1}$
on $G_{r,N}$.
Let~$\Tabs$ be the absorption time of the process.
Then $E[\Tabs] > \tfrac{1}{16}T^*(N)\,\epsilon_r/(4\ceil{r}+3)$. \end{theorem}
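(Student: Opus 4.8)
The plan is to condition on the initial mutant landing in the clique and on the clique subsequently becoming half full, and to show that, on this combined event, absorption fails within the first $T^*(N)$ steps with constant probability; the exponential quantity $T^*(N)$ then carries over to $E[\Tabs]$. First I would observe that on the event $\overline{\Fixation^*\cup\Extinction^*}$ we have $Y_{T^*(N)+1}\notin\{\emptyset,V(G_{r,N})\}$, and since the two absorbing states are fixed points of the process this forces $\Tabs\geq T^*(N)+1$. Consequently
$$E[\Tabs]\;\geq\;T^*(N)\cdot\Pr\!\left(\overline{\Fixation^*\cup\Extinction^*}\wedge\SometimePlentyClique\wedge\StartClique\right),$$
so it suffices to bound this probability from below by $\tfrac{1}{16}\,\epsilon_r/(4\ceil{r}+3)$.

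Next I would factor the probability as $\Pr(\StartClique)\cdot\Pr(\SometimePlentyClique\mid\StartClique)\cdot\Pr(\overline{\Fixation^*\cup\Extinction^*}\mid\SometimePlentyClique\wedge\StartClique)$. By Observation~\ref{obs:StartClique} and $N\geq 1$, the first factor is at least $1/(4\ceil{r}+3)$; by Lemma~\ref{lem:SometimePlentyClique} the second is at least $\epsilon_r/8$. For the third factor, a union bound gives
$$\Pr\!\left(\overline{\Fixation^*\cup\Extinction^*}\mid\SometimePlentyClique\wedge\StartClique\right)\;\geq\;1-\Pr(\Fixation^*\mid\SometimePlentyClique\wedge\StartClique)-\Pr(\Extinction^*\mid\SometimePlentyClique\wedge\StartClique).$$
The extinction term is at most $\epsilon_r/32$ by Lemma~\ref{lem:Extinction}. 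For the fixation term, Lemma~\ref{lem:Fixation} controls $\Pr(\Fixation^*\mid\StartClique)\leq\epsilon_r/32$ without the conditioning on $\SometimePlentyClique$; since $\Fixation^*\wedge\SometimePlentyClique\wedge\StartClique\subseteq\Fixation^*\wedge\StartClique$, dividing by $\Pr(\SometimePlentyClique\mid\StartClique)\geq\epsilon_r/8$ yields $\Pr(\Fixation^*\mid\SometimePlentyClique\wedge\StartClique)\leq\tfrac14$. Using $\epsilon_r\leq 1$, the third factor is therefore at least $1-\tfrac14-\tfrac{1}{32}=\tfrac{23}{32}>\tfrac12$. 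Multiplying the three bounds gives $E[\Tabs]>T^*(N)\cdot\tfrac{1}{4\ceil{r}+3}\cdot\tfrac{\epsilon_r}{8}\cdot\tfrac12=\tfrac{1}{16}\,T^*(N)\,\epsilon_r/(4\ceil{r}+3)$, as claimed.

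The only slightly delicate step is promoting the fixation bound of Lemma~\ref{lem:Fixation} from conditioning on $\StartClique$ alone to conditioning on $\SometimePlentyClique\wedge\StartClique$: this is where the factor $\Pr(\SometimePlentyClique\mid\StartClique)\geq\epsilon_r/8$ from Lemma~\ref{lem:SometimePlentyClique} is reused, and it explains why that lemma's lower bound must be comfortably larger than the $\epsilon_r/32$ bounds of Lemmas~\ref{lem:Extinction} and~\ref{lem:Fixation} (so that the union bound still leaves a positive constant). Everything else is a union bound together with the elementary arithmetic above, the strict inequality coming from $\tfrac{23}{32}>\tfrac12$.
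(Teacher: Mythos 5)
Your proof is correct and follows essentially the same route as the paper: bound $E[\Tabs]$ below by $T^*(N)$ times the probability of the good event $\StartClique\wedge\SometimePlentyClique\setminus(\Fixation^*\cup\Extinction^*)$, then invoke Observation~\ref{obs:StartClique} and Lemmas~\ref{lem:SometimePlentyClique}, \ref{lem:Extinction} and~\ref{lem:Fixation}. The only difference is a cosmetic rearrangement: you convert Lemma~\ref{lem:Fixation}'s bound into a conditional probability given $\SometimePlentyClique\wedge\StartClique$ (dividing by $\Pr(\SometimePlentyClique\mid\StartClique)\geq\epsilon_r/8$) and take a union bound there, whereas the paper subtracts $\Pr(\Fixation^*\mid\StartClique)$ and $\Pr(\Extinction^*\wedge\SometimePlentyClique\mid\StartClique)$ directly from $\Pr(\SometimePlentyClique\mid\StartClique)$; both computations yield the same factor $\epsilon_r/16$.
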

\begin{proof}
$$ 
E[\Tabs] \geq E[\Tabs \mid \StartClique \wedge \SometimePlentyClique \setminus
(\Fixation^* \cup \Extinction^*)] \times \Pr(\SometimePlentyClique
\setminus (\Fixation^* \cup \Extinction^*)\mid \StartClique) \times \Pr(\StartClique).
$$
 
The first term on the right-hand side is greater than~$T^*(N)$ by the
definition of the excluded events $\Fixation^*$ and $\Extinction^*$.
Lemmas~\ref{lem:SometimePlentyClique},
 \ref{lem:Extinction}
and~\ref{lem:Fixation}
show that the second term on the right-hand side
is at least $ \epsilon_r/8 - 2 \epsilon_r/32\leq \epsilon_r/16$.
Finally, Observation~\ref{obs:StartClique} shows that the third term on the
right-hand side is at least  
$ 1/(4\ceil{r}+3)$.
\end{proof}

\bibliographystyle{plain}
\bibliography{\jobname}

\begin{thebibliography}{10}

\bibitem{AS2006:fixtime}
Tibor Antal and Istv{\'a}n Scheuring.
\newblock Fixation of strategies for an evolutionary game in finite
  populations.
\newblock {\em Bulletin of Mathematical Biology}, 68(8):1923--1944, 2006.

\bibitem{ARLV2001:Influence}
Chalee Asavathiratham, Sandip Roy, Bernard Lesieutre, and George Verghese.
\newblock The influence model.
\newblock {\em IEEE Control Systems}, 21(6):52--64, 2001.

\bibitem{Ber2001:Monopolies}
Eli Berge.
\newblock Dynamic monopolies of constant size.
\newblock {\em Journal of Combinatorial Theory, Series B}, 83(2):191--200,
  2001.

\bibitem{bollobas}
B{\'e}la Bollob{\'a}s.
\newblock The isoperimetric number of random regular graphs.
\newblock {\em European J. Combin.}, 9(3):241--244, 1988.

\bibitem{BHR2010:speed}
M.~Broom, C.~Hadjicrysanthou, and J.~Rycht{\'a}{\v{r}}.
\newblock Evolutionary games on graphs and the speed of the evolutionary
  process.
\newblock {\em Proceedings of the Royal Society~A}, 466(2117):1327--1346, 2010.

\bibitem{BR2008:fixprob}
M.~Broom and J.~Rycht{\'a}{\v{r}}.
\newblock An analysis of the fixation probability of a mutant on special
  classes of non-directed graphs.
\newblock {\em Proceedings of the Royal Society~A}, 464(2098):2609--2627, 2008.

\bibitem{Buser}
Peter Buser.
\newblock Cubic graphs and the first eigenvalue of a {R}iemann surface.
\newblock {\em Math. Z.}, 162(1):87--99, 1978.

\bibitem{arrays}
M~Cemil~Azizo\u{g}lu and \"{O}mer E\u{g}ecio\u{g}lu.
\newblock The bisection width and the isoperimetric number of arrays.
\newblock {\em Discrete Appl. Math.}, 138(1--2):3--12, 2004.

\bibitem{DGMRSS2013:superstars}
Josep D{\'\i}az, Leslie~Ann Goldberg, George~B. Mertzios, David Richerby,
  Maria~J. Serna, and Paul~G. Spirakis.
\newblock On the fixation probability of superstars.
\newblock {\em Proceedings of the Royal Society~A}, 469(2156):20130193, 2013.

\bibitem{OurSODA}
Josep D\'{\i}az, Leslie~Ann Goldberg, George~B. Mertzios, David Richerby,
  Maria~J. Serna, and Paul~G. Spirakis.
\newblock Approximating fixation probabilities in the generalized {M}oran
  process.
\newblock {\em Algorithmica}, to appear.

\bibitem{Gin2000:GT}
Herbert Gintis.
\newblock {\em Game Theory Evolving: A Problem-Centered Introduction to
  Modeling Strategic Interaction}.
\newblock Princeton University Press, 2000.

\bibitem{GS2001:Probability}
G.~R. Grimmett and D.~R. Stirzaker.
\newblock {\em Probability and Random Processes}.
\newblock Oxford University Press, 3rd edition, 2001.

\bibitem{HV2011:fixation}
B.~Houchmandzadeh and M.~Vallade.
\newblock The fixation probability of a beneficial mutation in a geographically
  structured population.
\newblock {\em New Journal of Physics}, 13:073020, 2011.

\bibitem{fixrevised}
A.~{Jamieson-Lane} and C.~{Hauert}.
\newblock Fixation probabilities on superstars, revisited and revised.
\newblock ArXiv 1312.6333v2, 2014.

\bibitem{KKT2003:Influence}
David Kempe, Jon Kleinberg, and {\'E}va Tardos.
\newblock Maximizing the spread of influence through a social network.
\newblock In {\em Proc. 9th ACM International Conference on Knowledge Discovery
  and Data Mining}, pages 137--146. ACM, 2003.

\bibitem{LHN}
Erez Lieberman, Christoph Hauert, and Martin~A. Nowak.
\newblock Evolutionary dynamics on graphs.
\newblock {\em Nature}, 433(7023):312--316, 2005.

\bibitem{Lig1999:IntSys}
Thomas~M. Liggett.
\newblock {\em Stochastic Interacting Systems: Contact, Voter and Exclusion
  Processes}.
\newblock Springer, 1999.

\bibitem{MNRS2013:suppressors}
George~B. Mertzios, Sotiris~E. Nikoletseas, Christoforos Raptopoulos, and
  Paul~G. Spirakis.
\newblock Natural models for evolution on networks.
\newblock {\em Theoretical Computer Science}, 477:76--95, 2013.

\bibitem{MS2013:strongbounds}
George~B. Mertzios and Paul~G. Spirakis.
\newblock Strong bounds for evolution in networks.
\newblock In {\em Proc. 40th International Colloquium on Automata, Languages
  and Programming (ICALP 2013)}, volume 7966 of {\em LNCS}, pages 657--668.
  Springer, 2013.

\bibitem{MU}
Michael Mitzenmacher and Eli Upfal.
\newblock {\em Probability and Computing: Randomized Algorithms and
  Probabilistic Analysis}.
\newblock Cambridge University Press, 2005.

\bibitem{Mohar}
Bojan Mohar.
\newblock Isoperimetric numbers of graphs.
\newblock {\em Journal of Combinatorial Theory, Series B}, 47(3):274--291,
  1989.

\bibitem{Moran58}
P.~A.~P. Moran.
\newblock Random processes in genetics.
\newblock {\em Proceedings of the Cambridge Philosophical Society},
  54(1):60--71, 1958.

\bibitem{Norris}
James~R. Norris.
\newblock {\em Markov Chains}.
\newblock Cambridge University Press, 1998.

\bibitem{SR2011:fastfix}
Paulo Shakarian and Patrick Roos.
\newblock Fast and deterministic computation of fixation probability in
  evolutionary graphs.
\newblock In {\em Proc. 6th International Conference on Computational
  Intelligence and Bioinformatics}, pages 753--012. ACTA Press, 2011.

\bibitem{SRJ2012:review}
Paulo Shakarian, Patrick Roos, and Anthony Johnson.
\newblock A review of evolutionary graph theory with applications to game
  theory.
\newblock {\em Biosystems}, 107:66--80, 2012.

\bibitem{TFSN2004:game}
Christine Taylor, Drew Fudenberg, Akira Sasaki, and Martin~A. Nowak.
\newblock Evolutionary game dynamics in finite populations.
\newblock {\em Bulletin of Mathematical Biology}, 66(6):1621--1644, 2004.

\end{thebibliography}

\end{document}